\documentclass[12pt, reqno]{amsart}
 \openup .7em

\makeatletter
\DeclareMathAlphabet{\mathpzc}{OT1}{pzc}{m}{it}
\usepackage[margin=2.1cm]{geometry}
\usepackage{amsmath, amssymb, amsfonts, amstext, verbatim, amsthm, mathrsfs}
\usepackage[mathcal]{eucal}
\usepackage{microtype}
\usepackage{epsfig}
\usepackage[all,arc,knot,poly]{xy}
\usepackage{graphicx}
\usepackage[usenames]{color}
\usepackage{aliascnt} 
\usepackage{tikz}
\usetikzlibrary{decorations.pathreplacing,cd}
\usetikzlibrary{matrix,arrows}
\usepackage[hidelinks]{hyperref}
\usepackage{enumerate}
\usepackage{xspace}
\usepackage{url}
\renewcommand{\omit}[1]{}

\theoremstyle{plain}

\newcommand{\refnewtheoremn}[4]{
\newaliascnt{#1}{#2}
\newtheorem{#1}[#1]{#3}
\aliascntresetthe{#1}
\expandafter\providecommand\csname #1autorefname\endcsname{#4}}

\newcommand{\refnewtheorem}[3]{\refnewtheoremn{#1}{#2}{#3}{#3}}

\def\makeCal#1{
\expandafter\newcommand\csname c#1\endcsname{\mathcal{#1}}}
\def\makeBB#1{
\expandafter\newcommand\csname b#1\endcsname{\mathbb{#1}}}
\def\makeFrak#1{
\expandafter\newcommand\csname f#1\endcsname{\mathfrak{#1}}}
\count@=0
\loop
\advance\count@ 1
\edef\y{\@Alph\count@}
\expandafter\makeCal\y
\expandafter\makeBB\y
\expandafter\makeFrak\y
\ifnum\count@<26
\repeat

\newtheorem{thm}{Theorem}[section]

\refnewtheorem{lemma}{thm}{Lemma}
\refnewtheorem{theorem}{thm}{Theorem}
\refnewtheorem{assumptions}{thm}{Assumptions}
\refnewtheorem{examples}{thm}{Examples}
\refnewtheorem{claim}{thm}{Claim}
\refnewtheorem{cor}{thm}{Corollary}
\refnewtheorem{conj}{thm}{Conjecture}
\refnewtheorem{prop}{thm}{Proposition}
\refnewtheorem{proposition}{thm}{Proposition}
\refnewtheorem{quest}{thm}{Question}
\refnewtheorem{assumption}{thm}{Assumption}
\refnewtheorem{problem}{thm}{Problem}
\refnewtheorem{corollary}{thm}{Corollary}
\refnewtheorem{conjecture}{thm}{Conjecture}

\refnewtheorem{remark}{thm}{Remark}
\refnewtheorem{remarks}{thm}{Remarks}
\refnewtheorem{defn}{thm}{Definition}
\refnewtheorem{definition}{thm}{Definition}
\refnewtheorem{notn}{thm}{Notation}
\refnewtheorem{const}{thm}{Construction}
\refnewtheorem{example}{thm}{Example}
\refnewtheorem{fact}{thm}{Fact}

\newcommand{\PGL}{\operatorname{PGL}}
\newcommand{\Quad}{\operatorname{Quad}}
\newcommand{\lra}{\longrightarrow}

\newcommand {\<}{\langle}
\renewcommand {\>}{\rangle}
\newcommand{\isom}{\cong}
\newcommand{\half}{\tfrac{1}{2}}
\newcommand{\tensor}{\otimes}
\newcommand{\mat}[4]{\begin{pmatrix}#1&#2\\#3&#4\end{pmatrix}}

\renewcommand{\Im}{\operatorname{Im}}

\newcommand{\dd}{\mathrm{d}}
\newcommand{\hk}{hyperk{\"a}hler }
\newcommand{\GL}{\operatorname{GL}}
\newcommand{\e}{\epsilon}
\newcommand{\hash}{\#}
\newcommand{\Res}{\operatorname{Res}}
\newcommand{\tr}{\operatorname{tr}}

\renewcommand{\dd}{d}

\begin{document}

\title{Joyce structures and poles of Painlev\'e equations}
\author{Tom Bridgeland}\author{Fabrizio Del Monte}

\date{}

\begin{abstract}
Joyce structures are a class of geometric structures that first arose in relation to Donaldson-Thomas theory. There is a special class of examples, called class $S[A_1]$, whose underlying manifold  parameterises Riemann surfaces of some fixed genus equipped with a meromorphic quadratic differential  with poles of fixed orders.  We study two Joyce structures of this type using the  isomonodromic systems associated to the Painlev{\'e} II and III$_3$ equations. We give explicit formulae for the  Pleba{\'n}ski functions of these Joyce structures, and compute several associated objects, including their tau functions, which we explicitly relate to the corresponding Painlev\'e tau functions. We show that the behaviour of the Joyce structure near the zero-section  can be studied analytically through poles of Painlev\'e equations. The systematic treatment gives a blueprint for the study of more general Joyce structures associated to meromorphic quadratic differentials on the Riemann sphere.

\end{abstract}

\maketitle

\section{Introduction}

This paper is concerned with a class of geometric structures known as Joyce structures.  These structures have appeared in several contexts recently, including integrable systems  \cite{A2,D,DM2} and  topological string theory \cite{AP,AP2}. A Joyce structure on a complex manifold $M$ involves a one-parameter family of flat and symplectic  non-linear connections on the tangent bundle $T_M$, and gives rise to a complex \hk structure on the total space $X=T_M$. The precise definition  first  appeared in \cite{RHDT2}, but the essential features are standard in  twistor theory (see e.g. \cite{DM}), and go back to work of  Pleba{\'n}ski \cite{P}. We give a brief summary in Section \ref{sum} below, and refer to \cite{JTw} for further details.

Joyce structures take their name from a line of research initiated in  \cite{holgen} which aims to encode the Donaldson-Thomas (DT)  invariants \cite{JS,KS1} of a three-dimensional Calabi-Yau (CY$_3$) category in a geometric structure on its space of stability conditions \cite{Stab1,Stab2}.   From this point of view a Joyce structure should be thought of as a non-linear analogue of a Frobenius structure \cite{Dub1,Dub2}, in which the linear structure group $\GL_n(\bC)$ is replaced by the group of symplectic automorphisms of a complex torus $(\bC^*)^n$. The wall-crossing formula shows that the DT invariants can be viewed as the Stokes data for an isomonodromic family of irregular connections on $\bP^1$ taking values in this group. This perspective is the subject of \cite{RHDT2}  and is summarised in \cite[Appendix A]{Strachan}.

An interesting  class of examples of Joyce structures was constructed in \cite{CH}.  The base $M$ parameterises  Riemann surfaces of some fixed genus $g$ equipped with a quadratic differential  with simple zeros.  The extension to spaces of quadratic differentials with poles of fixed orders $m=(m_1,\ldots,m_l)$ and zero residues will appear in \cite{Z}. These Joyce structures are expected to arise from the DT theory of the  CY$_3$ categories considered in \cite{BS, H}. We refer to them as Joyce structures of class $S[A_1]$ since they  are related  to supersymmetric gauge theories of this class.
There are ten examples  \cite{Bonelli2017} for which $g=0$ and the base $M$ has  dimension 2. We refer to these as the Painlev{\'e} Joyce structures, since they correspond to the ten $SL_2(\mathbb{C})$ isomonodromy problems associated to  the Painlev{\'e} equations in \cite{Chekhov2017,Saito2009}. The relevant pole  orders  are
\[\label{numbers}(2,2,2,2) \quad (2,2,4)\quad (2,2,3)\quad (6,2) \quad  (5,2) \quad  (4,4) \quad  (4,3) \quad  (3,3) \quad (8)\quad  (7).\]

The  case $m=(7)$ corresponding to Painlev{\'e} I was studied in detail in \cite{A2}.
The aim of this paper is to give explicit descriptions  in the cases $m=(3,3)$ corresponding to Painlev{\'e} III$_3$, and $m=(8)$ corresponding to Painlev{\'e} II. From the point of view of supersymmetric theories of class $S$ the case of Painlev\'e III$_3$ is particularly important, since it is related to the Seiberg-Witten theory of pure $SU(2)$ super Yang-Mills (see for example \cite{Bonell2017b,Gavrylenko2017}).  Both examples have several new features as compared with Painlev{\'e} I, and the relevant connections exhibit both local monodromy and Stokes phenomena. We give explicit formulae for the associated Pleba{\'n}ski functions, compute the linear Joyce connections, and investigate the Joyce structure tau function defined in \cite{JT}. The treatment is designed to be systematic, and to provide a blueprint for the study of more general Joyce structures associated to meromorphic quadratic differentials on the Riemann sphere.

\subsection*{Plan of the paper} We begin in Section \ref{sec:pain} with a quick introduction to Joyce structures as they arise in relation to the isomonodromy approach to Painlev{\'e} equations. In Section \ref{sum} we recall some of the general theory of Joyce structures and state our main results. In Section \ref{sec:Quad}  we outline a construction of Joyce structures of class $S[A_1]$  in terms of pencils of projective structures on Riemann surfaces, and in Section \ref{sec:Isomonodromy} we discuss an equivalent formalism using isomonodromic deformations of linear systems with a reference connection. Sections \ref{sec:three} and \ref{sec:two} are the heart of the paper and contain the  detailed computations in the two examples  corresponding to Painlev{\'e} III$_3$ and II   respectively. In Section \ref{taufun} we recall the definition of the Joyce structure tau function and compute it in our examples.

\subsection*{Acknowledgements} T.B. gratefully acknowledges unpublished joint work with Maciej Dunajski which included Maple calculations of the isomonodromic flows and the hyperk{\"a}hler structures for all the Painlev{\'e} Joyce structures. F.D.M. wishes to thank H. Desiraju, K. Iwaki and O. Lisovyy for useful discussions at various stages of this work.


\section{Painlev{\'e} Joyce structures}
\label{sec:pain}
We will give a detailed description of our main results in Section \ref{sum} below. Let us first try to quickly orient the reader who is familiar with the usual isomonodromy approach to Painlev{\'e} equations. For illustration purposes we will take the example of Painelv{\'e} I treated in \cite{A2}.

The standard isomonodromy approach to Painlev{\'e} I is to consider meromorphic connections on the trivial rank 2 bundle over $\bP^1$ of the form
\begin{equation}
    \label{form1}
\nabla=\dd-\mat{p}{x-q}{x^2+xq+q^2+t}{-p} \dd x.
\end{equation}
They have a single pole at $x=\infty$ and depend on three parameters $(t,q,p)$. As $t\in \bC$ varies we can uniquely vary $q=q(t)$ and $p=p(t)$ in such a way that the generalised monodromy (or Stokes data)  at $x=\infty$ remains constant. These isomonodromy flows can be put in Hamiltonian form by setting $H=p^2-q^3-tq$ and writing
\begin{equation}\frac{d q}{d t} = \frac{\partial H}{\partial p}=2p, \qquad \frac{d p}{d t} = -\frac{\partial H}{\partial q}=3q^2+t.\end{equation}
Eliminating $p$ leads to the Painlev{\'e} equation in the form
\begin{equation}
    \frac{d^2q}{dt^2}=2 \frac{dp}{dt}=6q^2+2t.
\end{equation}

Note that we can alternatively parameterise the connection by the variables $(t,H,q)$ with $p$ defined implicitly by $p^2=q^3+tq+H$. In these variables the  isomonodromic flow is described by the vector field
\begin{equation}
\label{aaa}
    \frac{\partial}{\partial t}-q\frac{\partial}{\partial H}+ 2p\frac{\partial}{\partial q}. 
\end{equation}

To define the associated Painlev{\'e} Joyce structure we need to change our point-of-view in several ways. Firstly, rather than a single connection $\nabla$, we must consider a one-parameter family or pencil of connections $\nabla_\epsilon$ depending on an additional parameter $\epsilon\in \bC^*$
\begin{equation}
    \label{form2}
\nabla_\epsilon=d-\mat{r}{0}{0}{-r} dx-\frac{1}{\epsilon} \mat{p}{x-q}{x^2+xq+q^2+t}{-p} dx.
\end{equation}
The formula  \eqref{form2} is obtained from \eqref{form1} by introducing $\epsilon$ in the obvious way, and then performing a simple shift\footnote{To achieve uniformity with the rest of this paper we are using slightly different notation to \cite{A2}. The variables $(r,p)$ there correspond to $(2pr,-p)$ here.}  $p\mapsto p+\epsilon r$. Thus for a fixed value of $\epsilon\in \bC^*$ the connection $\nabla_\epsilon$ depends only on three parameters. Nonetheless, the pencil of connections \eqref{form2} depends on four parameters, with $(t,q,p)$ specifying the Higgs field part multiplying $\epsilon^{-1}$, and the additional parameter $r$ specifying the reference connection $\nabla_\infty$.

Writing $H=p^2-q^3-tq$ as before we can equivalently view the pencil of connections \eqref{form2} as being parameterised by the co-ordinates $(t,H,q,r)$ with $p$ being once again defined implicitly by $p^2=q^3+tq+H$. For a fixed $\epsilon\in \bC^*$ we can now consider varying both $t$ and $H$  and allowing $q=q(t,H)$ and $r=r(t,H)$ to vary in such a way that the generalised monodromy of  $\nabla_\epsilon$ remains constant. This leads to a two-dimensional space of flows spanned by the vector fields

\begin{gather}
\label{seven}
\frac{2p}{\epsilon} \frac{\partial}{\partial q}-\frac{(3q^2+t)r}{\epsilon p} \frac{\partial}{\partial r}+\frac{\partial}{\partial t}-q\frac{\partial}{\partial H}+2r\frac{\partial}{\partial q}, \\
-\frac{1}{2p\epsilon}\frac{\partial}{\partial r}+\frac{\partial}{\partial H} .
\end{gather}
Note that the first of these flows reproduces \eqref{aaa} when $\epsilon=1$ and $r=0$. Moreover, whatever the value of $r$, the resulting differential equation for $q$  is always the $\e$-deformed Painlev\'e I equation
\begin{equation}
    \frac{d^2q}{dt^2}=\frac{1}{\e^2}(6q^2+2t).
\end{equation}

In general, the flows \eqref{seven} seem to be  new information, and lead to previously unknown structures including a complex \hk structure on the four-dimensional complex manifold  $X^\hash$  whose co-ordinates are $(t,H,q,r)$. 
The final step in the description of a Joyce structure is to re-express these flows in a different set of co-ordinates on the space $X^\hash$.

The connection $\nabla_\epsilon$ can be transformed by a singular gauge transformation into an oper with an apparent singularity expressed by the equation
\begin{equation}
    y''(x)=Q(x)y(x), \qquad Q(x)=\e^{-2}\cdot Q_0(x)+\epsilon^{-1}\cdot Q_1(x)+Q_2(x)
\end{equation}
\begin{equation}Q_0(x) = x^3+tx+H,\qquad Q_1(x)=-\frac{p}{x-q}+2pr, \qquad Q_2(x)=\frac{3}{4(x-q)^2}-\frac{r}{x-q}+r^2.\end{equation}
The required co-ordinates $(z_1,z_2,\theta_1,\theta_2)$ are then defined   by integrals of the form
\begin{equation}z_i=\oint_{\gamma_i}\sqrt{Q_0} \, dx, \qquad \theta_i=-\oint_{\gamma_i} \frac{Q_1}{2\sqrt{Q_0}} \, dx,\end{equation}
where $\gamma_1,\gamma_2$ are   cycles on the spectral curve $\Sigma=\{(x,y): y^2=Q_0(x)\}$.
One way to motivate these co-ordinates is via WKB analysis, which predicts that the logarithmic Fock-Goncharov co-ordinates $x_i$ of the  connection $\nabla_\epsilon$ have asymptotics \begin{equation}x_i(\epsilon)=-\frac{z_i}{\epsilon} + \theta_i + O(\epsilon).\end{equation}

In these co-ordinates the two-dimensional space of flows is spanned by the vector fields
\begin{equation}
\label{above'}\frac{\partial}{\partial z_i} + \frac{1}{\epsilon} \cdot \frac{\partial}{\partial \theta_i}+ 2\pi i\cdot  \frac{\partial^2 W}{\partial \theta_i\partial \theta_1} \cdot \frac{\partial}{\partial \theta_2}- 2\pi i \cdot \frac{\partial^2 W}{\partial \theta_i\partial \theta_2} \cdot \frac{\partial}{\partial \theta_1},\end{equation}
where $W=W(z_i,\theta_j)$ is a single function called the Pleba{\'n}ski function of the Joyce structure. It was shown in \cite{A2} that
\begin{equation} 
W= \frac{p}{2(4t^3+27H^2)} \left(t- (9H-6tq)r+ (8t^2-18Hq+12tq^2)r^2+8tp^2r^3\right).\end{equation}
We will derive similar expressions in the   Painlev{\'e} III$_3$ and Painlev{\'e} II examples below.


\section{Statement of results}
\label{sum}

In this section we first briefly review some required background material on Joyce structures, referring the reader to \cite{JTw,JT} for more details. We  then state our main results.

\subsection{Joyce structures and Pleba{\'n}ski functions}

\label{early}

Let $M$ be a complex manifold of dimension $n$. A Joyce structure on  $M$ involves a system of $n$  flows on the total space of the tangent bundle $X=T_M$ of the form
\begin{equation}
\label{above}\frac{\partial}{\partial z_i} + \frac{1}{\epsilon} \cdot \frac{\partial}{\partial \theta_i}+ \sum_{p,q} \eta_{pq} \cdot \frac{\partial^2 W}{\partial \theta_i\partial \theta_p} \cdot \frac{\partial}{\partial \theta_q} .\end{equation}
Here $(z_1,\ldots,z_n)$ are a canonical system of co-ordinates on $M$ uniquely defined  up to integral linear transformations $z_i\mapsto \sum_j a_{ij} z_j$ with $(a_{ij})_{i,j=1}^n\in \GL_n(\bZ)$. These induce linear co-ordinates $(\theta_1,\ldots,\theta_n)$ on the tangent spaces $T_p M$ by writing a tangent vector  in the form  $\sum_i \theta_i\cdot  \frac{\partial}{\partial z_i}\big|_p$, and we can then consider $(z_i,\theta_j)$ as co-ordinates on the total space $X=T_M$.   Finally, $(\eta_{pq})_{p,q=1}^n$ is an invertible, skew-symmetric matrix, and $\epsilon\in \bC^*$ is an additional parameter.

The flows \eqref{above} are specified by a single function $W=W(z_i,\theta_j)$ on $X$ called the Pleba{\'n}ski function. This is required to satisfy Pleba{\'n}ski's second heavenly equations 
\begin{equation}\label{plebeq}\frac{\partial^2 W}{\partial \theta_i \partial z_j}-\frac{\partial^2 W}{\partial \theta_j \partial z_i }=\sum_{p,q} \eta_{pq} \cdot \frac{\partial^2 W}{\partial \theta_i \partial \theta_p} \cdot \frac{\partial^2 W}{\partial \theta_j \partial \theta_q},\end{equation}
which ensures that the  flows  \eqref{above} are compatible for any $\epsilon\in \bC^*$. We also impose the following additional conditions:

\begin{gather}
\label{go2}
\frac{\partial^2 W}{\partial \theta_j \partial \theta_k}(z_1,\ldots,z_n,\theta_1+2\pi i k_1,\ldots,\theta_n+2\pi ik_n)=\frac{\partial^2 W}{\partial \theta_j \partial \theta_k}(z_1,\ldots,z_n,\theta_1,\ldots, \theta_n), \\
  \label{re2}W(\lambda z_1,\ldots,\lambda z_n,\theta_1,\ldots, \theta_n)=\lambda^{-1} \cdot W(z_1,\ldots,z_n,\theta_1,\ldots, \theta_n), \\
   \label{st2}
W(z_1,\ldots,z_n,-\theta_1,\ldots, -\theta_n)=-W(z_1,\ldots,z_n,\theta_1,\ldots, \theta_n),\end{gather}
where   $(k_1,\ldots, k_n)\in \bZ^n$ in \eqref{go2}, and $\lambda\in \bC^*$ in \eqref{re2}.

Note that the flows \eqref{above} only define the Pleba{\'n}ski function $W=W(z_i,\theta_j)$ up to the addition of functions of the form $a(z_i)+\sum_j b_j(z_i) \theta_j$. In the two examples  studied in this paper there is in fact a unique choice for $W$ which satisfies the stronger version of \eqref{go2}
\begin{equation}
    \label{per}
W(z_1,\ldots,z_n,\theta_1+2\pi i k_1,\ldots,\theta_n+2\pi ik_n)=W(z_1,\ldots,z_n,\theta_1,\ldots, \theta_n),\end{equation}
and in what follows we will always take this choice. 
Note that  $W$ then descends to a function on the quotient $X^{\hash}=X\, / \,(2\pi i\, \bZ)^n$
whose local co-ordinates are $(z_i,e^{\theta_j})$. 
For each of the two examples considered in this paper we will give an explicit rational expression  for  $W$ in terms  of  a different system of  co-ordinates on $X^\hash$  which are natural for the associated isomonodromy problem.

\subsection{Restriction to the zero section}
\label{summy}
The two examples of Joyce structures  appearing in this paper are meromorphic, in the sense that the Pleba{\'n}ski function $W$ is a meromorphic function on $X$. It is then a subtle problem to determine the positions of the poles of $W$.  In both examples we  will establish the non-trivial fact that   $W$ is regular along the zero section $M\subset X=T_M$  defined by setting all $\theta_i=0$. 

Once this regularity is established, it follows from  \eqref{plebeq} and \eqref{st2}  that there is a locally-defined function $S=S(z_i)$ on $M$,  well-defined up to the addtion of a constant,  such that
\begin{equation}
    \label{s}
\frac{\partial W}{\partial \theta_i}\bigg|_{\theta=0} =\frac{\partial S }{\partial z_i}.\end{equation}
In both our examples (and also in the Painlev{\'e} I case), the function $S$  has the  form
	\begin{equation}\label{hi2}
		S=\log\Delta^{-\frac{1}{24}},
	\end{equation}
   where $\Delta$ is closely related to the discriminant of the spectral curve $y^2=Q_0(x)$.  This suggests a possible link with  the Bergman tau function \cite{Korotkin2020} which, in the case of a genus 1 curve, is the square of the Dedekind eta function.

Another reason that the regularity of $W$ along the zero section is interesting is that, following \cite[Section 7]{RHDT2},  it allows us to define a flat, torsion-free connection $\nabla^J$  on the tangent bundle $T_M$ called the linear Joyce connection. This connection is defined by the formula
\begin{equation}
\label{st}\nabla^J_{\frac{\partial}{\partial z_i}}\Big(\frac{\partial}{\partial z_j}\Big)=  -\sum_{p,q}\eta_{pq}\cdot\frac{\partial^3 W}{\partial \theta_i \, \partial \theta_j \, \partial \theta_p}\bigg|_{\theta=0} \cdot \frac{\partial}{\partial z_q}.\end{equation}
We compute this connection in our two examples, and express the answer by giving a system of  flat co-ordinates on $M$.

\subsection{Tau functions}
Joyce structures arising from DT theory seem to have interesting links to non-perturbative completions of topological string partition functions. The definition of the Joyce structure tau function was introduced in \cite{JT} in an attempt to better understand this relation. The definition depends on the choice of certain extra data and is rather experimental in nature. It is therefore of interest to compute it in our examples.

A Joyce structure on a complex manifold $M$ defines a complex \hk structure on $X=T_M$, and associated closed holomorphic 2-forms $\Omega_\pm=\Omega_{J\pm iK}$ and $\Omega_I$ on $X$.
For each $\epsilon\in \bC^*$ the combination
\begin{equation}
    \Omega_\epsilon=\epsilon^{-2}\Omega_++2i\epsilon^{-1} \Omega_I+\Omega_\infty
\end{equation}
descends to a symplectic form on the twistor fibre $Z_\epsilon$, which is the space of leaves of the flows \eqref{above}. If we set $\Omega_0=\Omega_+$ and $\Omega_\infty=\Omega_-$ this statement holds for all $\epsilon\in \bP^1$ after appropriate rescaling.

Let us  choose primitives
\begin{equation}\label{pot1}d\Theta_0=\Omega_0, \qquad d\Theta_{1}=\Omega_1,\qquad d\Theta_\infty=\Omega_\infty,\qquad d\Theta_I=\Omega_I.\end{equation}
We then define the Joyce structure tau function locally on $X$  by the relation
\begin{equation}
\label{tau1}d\log(\tau)=\Theta_0+2i\Theta_I +\Theta_\infty-\Theta_1.\end{equation}

This definition is  of course vacuous without some procedure for defining the primitives \eqref{pot1}. At least for $\Theta_0,\Theta_1$ and $\Theta_I$ there are natural choices which will be explained in Section \ref{tauf}. The choice of  $\Theta_\infty$ is more mysterious, but in our two examples this problem can be side-stepped by restricting to a natural Lagrangian submanifold of $Z_\infty$ and taking $\Theta_\infty=0$.  After doing this we find that the Joyce structure tau function gives a particular normalisation of the  corresponding Painlev{\'e} tau function.

\subsection{Painlev{\'e} {III$_3$} case}

This case is treated in detail in Section \ref{sec:three}. 
The base $M$ of the Joyce structure parameterises quadratic differentials
\begin{equation}Q_0(x)  \dd x^{\tensor 2}= \frac{1}{x^2}\left(tx+H +x^{-1}\right)   \dd x^{\tensor 2},\end{equation}
on $\bP^1$ which have poles  at $x=0$ and $x=\infty$ of order 3, and simple zeros. The quotient space $X^\hash$  parameterises pencils of connections\footnote{Strictly speaking, up to the involution \eqref{inv1}; see Lemma \ref{abhol}.} of the form
\begin{equation}
    \label{con}
\nabla_\epsilon=\dd-  \left(\begin{array}{cc}
		r & 0 \\
		0 & -r
	\end{array} \right)\frac{\dd x}{x}-\frac{1}{\e}\left(\begin{array}{cc}
		pq & 1-qx^{-1} \\
		 tx-q^{-1}  & -pq
	\end{array} \right)\frac{\dd x}{x}, \end{equation}
    where $p$ is defined implicitly by $p^2=Q_0(q)$. 
    This connection $\nabla_\epsilon$ is equivalent to an oper with apparent singularity  defined by
    \begin{equation}
    \label{q}
y''(x)=Q(x)y(x), \qquad Q(x)=\e^{-2}Q_0(x)+\epsilon^{-1}Q_1(x)+Q_2(x)\end{equation}
where 
 \begin{equation}Q_1(x)=-\frac{pq^2}{x^2(x-q)}+\frac{2pqr}{x^2}, \qquad
 Q_2(x)=\frac{3}{4(x-q)^2}-\frac{x+rq}{x^2(x-q)}+\frac{r^2}{x^2}.
 \end{equation}
The canonical co-ordinates are given by  integrals of the form
\begin{equation}\label{zt}z_i=\oint_{\gamma_i}\sqrt{Q_0} \, dx, \qquad \theta_i=-\oint_{\gamma_i} \frac{Q_1}{2\sqrt{Q_0}} \, dx,\end{equation}
where $\gamma_1,\gamma_2$ are   cycles on the spectral curve $\Sigma=\{(x,y): y^2=Q_0(x)\}$.

The flows of the Joyce structure \eqref{above} are obtained by allowing $(t,H)$ to vary and insisting that $q=q(t,H)$ and $r=r(t,H)$ change in such a way that the generalised monodromy of the connection \eqref{con} is constant.  
By computing these flows and re-expressing them in the co-ordinates $(z_i,\theta_j)$ we prove the following result.
 
\begin{thm}\begin{itemize}
    \item [(i)]
The Pleba{\'n}ski function on $X^\hash$ is given by the formula
\begin{equation}W=\frac{pq}{6(H^2-4t)}\big(tq+(H+6tq)r+(6H+12tq)r^2+8p^2q^2r^3\big).\end{equation}
\item[(ii)] The Pleba{\'n}ski function is regular on the locus $\theta=0$ and
\begin{equation}S=\log \big(H^2-4t\big)^{-\frac{1}{24}}.\end{equation}
\item [(iii)] The local co-ordinates $(s=\log(t),H)$ on $M$ are flat for the linear Joyce connection.

\item[(iv)] With the natural choices for the primitives \eqref{pot1} we have
\begin{equation}d\log(\tau|_{Y^{\hash}})=- H d s+ p d q  + d(4H+2qp)+\frac{1}{4\pi i} \,x_1 dx_2,\end{equation}
where $Y^\hash\subset X^\hash$ is the locus $r=0$, and $(x_1,x_2)$ are  logarithms of Fock-Goncharov co-ordinates of the connection \eqref{con}.
\end{itemize}
\end{thm}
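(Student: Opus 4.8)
The plan follows the strategy used for Painlev\'e I in \cite{A2}, adapted to the two rank-one irregular singular points of \eqref{con}. For part (i) I would first compute the isomonodromic deformation of the pencil \eqref{con}: fixing $\epsilon\in\bC^*$ and varying $(t,H)\in M$, find $A_t(x),A_H(x)$ with $\partial_t\nabla_\epsilon=[\nabla_\epsilon,A_t]$ and $\partial_H\nabla_\epsilon=[\nabla_\epsilon,A_H]$. This produces a closed rational first-order system for $q=q(t,H)$ and $r=r(t,H)$ — the $\epsilon$-deformed Painlev\'e III$_3$ equation together with an auxiliary equation for $r$ — which assembles into a two-dimensional space of commuting vector fields $V_1,V_2$ on $X^\hash$ of the form $\epsilon^{-1}V_i^{-}+V_i^{0}$. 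Identifying $V_i$ with the flow \eqref{above} written in the co-ordinates $(z_j,e^{\theta_k})$ of \eqref{zt}, the $\epsilon^{-1}$ part forces $V_i^{-}=\partial/\partial\theta_i$, and applying the $\epsilon^{0}$ part to the functions $q$ and $r$ yields the Hessian entries $\partial^2W/\partial\theta_i\partial\theta_j$ as explicit rational functions of $(t,H,q,r)$. As in \cite{A2}, the essential mechanism is that the transcendental period integrals never get evaluated: the relevant ones all reduce to the single holomorphic differential $\nu=\dd x/(x^2y)$ on $\Sigma$ — for instance $\partial z_i/\partial H=\tfrac12\oint_{\gamma_i}\nu$ and $\partial\theta_i/\partial r=-pq\oint_{\gamma_i}\nu$, so the periods cancel — together with residues at the apparent singularity $x=q$. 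Integrating the Hessian and fixing the residual ambiguity $a(z_i)+\sum_jb_j(z_i)\theta_j$ using homogeneity \eqref{re2}, oddness \eqref{st2}, and exact periodicity \eqref{per} (periodicity forces the $b_j$ to vanish, so $W$ is genuinely rational in $(t,H,q,r)$) yields the stated formula, which satisfies \eqref{plebeq} by construction.

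For part (ii), the rational formula of (i) is regular on $X^\hash$ away from the locus $\{q=0\}$, where the apparent singularity collides with the irregular point at $x=0$; note that $Q_1$ carries an overall factor of $q$, so $\theta\to0$ there. The non-trivial fact flagged in \S\ref{summy} is that, in passing from the isomonodromy chart $(t,H,q,r)$ to the canonical co-ordinates $(z_j,e^{\theta_k})$, this locus is contracted onto the zero section and $W$ nevertheless descends regularly across it; I would establish this by a local analysis of the change of variables near $\{q=0\}$. Once regularity is known, \eqref{plebeq} and \eqref{st2} provide the primitive $S$ of \eqref{s}; evaluating $\partial W/\partial\theta_i|_{\theta=0}$ by the chain rule and the period Jacobian of \eqref{zt} and integrating gives $S=\log\Delta^{-1/24}$ with $\Delta$ proportional to $H^2-4t$, the discriminant of $tx^2+Hx+1$ (equivalently of $\Sigma$), as in \eqref{hi2}.

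Part (iii) is then a finite computation: differentiate $W$ three times, restrict to $\theta=0$, substitute into \eqref{st} to get $\nabla^J$ in the $(z_1,z_2)$ chart, pass to $(s=\log t,H)$, and check that the Christoffel symbols vanish — equivalently that the periods $z_i(s,H)$ depend $\nabla^J$-affinely on $(s,H)$, which reduces to an identity between the Picard--Fuchs equation satisfied by $z_i$ and the computed third derivatives of $W$. For part (iv) I would recall from Section \ref{tauf} the explicit closed $2$-forms $\Omega_0=\Omega_+$, $\Omega_I$, $\Omega_\infty=\Omega_-$ and the natural primitives $\Theta_0,\Theta_I,\Theta_1$, restrict to $Y^\hash=\{r=0\}$ with $\Theta_\infty=0$ on the associated Lagrangian in $Z_\infty$, and evaluate $\Theta_0+2i\Theta_I-\Theta_1$ there in the co-ordinates $(s,H,q)$ using the formula for $W$. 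The terms organise into the Painlev\'e Hamiltonian $1$-form $-H\,\dd s$, a term $p\,\dd q$, an exact correction $\dd(4H+2qp)$, and a multivalued remainder which one identifies with $\tfrac{1}{4\pi i}x_1\,\dd x_2$ via the WKB/cluster dictionary relating the data of \eqref{zt} to the Fock--Goncharov co-ordinates of \eqref{con}; comparing $-H\,\dd s$ with the standard definition of the Painlev\'e III$_3$ Hamiltonian identifies $\tau|_{Y^\hash}$ with a normalisation of the Painlev\'e III$_3$ tau function.

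The main obstacle is obtaining $W$ in closed rational form — rationality is invisible a priori, since $W$ is assembled from transcendental elliptic periods and only emerges after the cancellations above — hand in hand with establishing the regularity along the zero section in part (ii), which genuinely requires controlling the degenerate configuration $q\to0$ (and where the two irregular points, with their combined local monodromy and Stokes data, make the bookkeeping heavier than for Painlev\'e I). Given those, part (iii) and the bulk of (iv) are systematic; the remaining delicate point in (iv) is the choice of primitive $\Theta_\infty$, for which the restriction to $Y^\hash=\{r=0\}$ is precisely the device that makes the problem tractable, together with the explicit matching of Fock--Goncharov co-ordinates.
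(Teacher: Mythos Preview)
Your outline for parts (iii) and (iv) is sound and close to what the paper does. The substantive issues are in parts (i) and (ii).

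\textbf{Part (i): a different mechanism for rationality.} Your description --- that the period integrals ``cancel'' already at the level of the Hessian $\partial^2 W/\partial\theta_i\partial\theta_j$ --- is not how the paper proceeds, and I am not sure it works as stated. The paper first passes to intermediate co-ordinates $(s,H,\theta_s,\theta_H)$, where $\theta_s,\theta_H$ are computed via Riemann bilinear relations (Lemma~\ref{thm:lemmathetaIII}). In these co-ordinates the flows define a generating function $K$ whose second $\theta$-derivatives \emph{are} transcendental: they involve the incomplete integrals $\kappa_i=\tfrac18\int_{(q,-p)}^{(q,p)}x^{-i}y^{-3}\,dx$ (see \eqref{jimmy}). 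Rationality only appears at the level of \emph{third} derivatives \eqref{one'}--\eqref{one}, because the vertical vector fields $\partial/\partial\theta_s,\partial/\partial\theta_H$ are rational in $(t,H,q,r)$ (equation \eqref{two2}). The passage $K\mapsto W$ then uses \cite[Prop.~4.2]{A2}: $W-K$ is cubic in $\theta$, so one matches \emph{fourth} derivatives of $K$ against those of the claimed formula \eqref{eq:Pleba{'n}skiPIII}, and periodicity forces agreement up to linear terms. Your direct route may be salvageable, but the ``periods cancel at Hessian level'' heuristic is not borne out by the actual computation.

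\textbf{Part (ii): the identification of the $\theta=0$ locus is wrong.} This is a genuine gap. You write that $\theta\to 0$ corresponds to $q\to 0$, justified by ``$Q_1$ carries an overall factor of $q$''. But $p^2=Q_0(q)=tq^{-1}+Hq^{-2}+q^{-3}$, so $p\sim q^{-3/2}$ as $q\to 0$; hence $pq\sim q^{-1/2}$ and the term $2pqr/x^2$ in $Q_1$ diverges unless $r\to 0$ simultaneously. The locus $\theta=0$ is codimension two, not one: it is \emph{not} $\{q=0\}$ but a specific joint limit in $(q,r)$. The paper handles this by uniformizing the spectral curve with Weierstrass functions: writing $q=\tfrac{1}{3t}(3\wp(v)-H)$ and $r=-\tfrac12+\tfrac{wv}{2}$, one finds $\theta_s=v$ and $\theta_H=w+O(v^2)$, so $\theta=0$ corresponds to $v=w=0$, i.e.\ $q\to\infty$ and $r\to-\tfrac12$ with a specific scaling (or, via the involution \eqref{inv}, $q\to 0$ and $r\to 0$). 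Substituting the Laurent expansions \eqref{pesto} into \eqref{eq:Pleba{'n}skiPIII} then gives the leading behaviour of $W$ directly, yielding both regularity and the formula for $S$. The paper also sketches an alternative approach via poles of the Painlev\'e equation (Section~6.9), but this too identifies the limit as $q\to\infty$, not $q\to 0$ alone. Without the uniformization (or an equivalent careful two-parameter local analysis), your argument for regularity does not go through.
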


One special feature of this case is an involution
\begin{equation}\label{inv1}
         (q,p,r)\mapsto \left((qt)^{-1},-tpq^2 ,-(r+\half)\right).
    \end{equation}
    which preserves all objects of interest, including the Pleba{\'n}ski function $W$.
    
\subsection{Painlev{\'e} {II} case}

This case is the subject of Section \ref{sec:two}. The base $M$ of the Joyce structure parameterises quadratic differentials
\begin{equation}Q_0(x) \, \dd x^{\tensor 2}=(x^4+tx^2+2H)\,\dd x^{\tensor 2},\end{equation}
with a single pole of order 8 at $x=\infty$ with zero residue, and simple zeros.
The quotient space $X^\hash$  parameterises pencils of connections of the form
\begin{equation}
    \label{con21}
\nabla_\epsilon=\dd-  \left(\begin{array}{cc}
		r & 0 \\
		-2r(x+q)-1 & -r
	\end{array} \right)\dd x-\frac{1}{\e}\left(\begin{array}{cc}
		x^2+p-q^2 & x-q \\
		(x+q)(t-2p+2q^2)   & -x^2-p+q^2
	\end{array} \right)\dd x. \end{equation}
    The connection $\nabla_\epsilon$ can again be put in the oper form \eqref{q} with
\begin{equation}
	Q_1(x)=-\frac{p}{x-q}
	+2pr,\qquad 
	Q_2(x)=\frac{3}{4(x-q)^2}-\frac{r}{x-q}+r^2.
\end{equation}
The canonical co-ordinates $(z_i,\theta_j)$ are defined by integrals of the form \eqref{zt} as before, and computing the isomonodromic flows in these co-ordinates gives the following result.

\begin{thm}\label{Ptwothm}\begin{itemize}
    \item [(i)]
The Pleba{\'n}ski function on $X^\hash$ is given by the formula
\begin{equation} W=\frac{p}{48H(t^2-8H)}\big(-tq - 2r(2t^2+3q^2t-12H) +12r^2q(-t^2-q^2t+4H)-8r^3p^2t\big).\end{equation}
\item[(ii)] The Pleba{\'n}ski function is regular on the locus $\theta=0$ and
\begin{equation}S=\log \big(H^2(8H-t^2)\big)^{-\frac{1}{48}}.\end{equation}
\item [(iii)] The local co-ordinates $\big(t,H-\tfrac{1}{8}t^2\big)$ on $M$ are flat for the linear Joyce connection.
\item[(iv)] With the natural choices for the primitives \eqref{pot1} we have
\begin{equation}\label{hallam}d\log(\tau|_{Y^\hash})=
- H d t+ p d q  + \frac{1}{3}d(2tH-qp)+\frac{1}{2\pi i}\, x_1 dx_2,\end{equation}
where $Y^\hash\subset X^\hash$ is the locus $r=0$, and $(x_1,x_2)$ are  logarithms of Fock-Goncharov co-ordinates of the connection \eqref{con}.
\end{itemize}
\end{thm}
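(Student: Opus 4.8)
The plan is to follow the blueprint established for Painlev\'e I in \cite{A2} and sketched in Section \ref{sec:pain}, now applied to the linear system \eqref{con21}. The first step is to derive the two-dimensional family of isomonodromic flows explicitly in the coordinates $(t,H,q,r)$ on $X^\hash$. Writing $\nabla_\epsilon = \dd - \big(A_0(x) + \epsilon^{-1}A_{-1}(x)\big)\,\dd x$, and using that this connection has a single irregular singularity at $x = \infty$, this amounts to solving the compatibility conditions $\partial_t A - \partial_x B_t + [A,B_t] = 0$ and $\partial_H A - \partial_x B_H + [A,B_H] = 0$ for polynomial deformation matrices $B_t, B_H$ whose degree in $x$ is fixed by the structure of the irregular singularity, and then eliminating $p$ via $p^2 = Q_0(q)$; this gives $\partial_t q, \partial_t r, \partial_H q, \partial_H r$. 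One checks that the $\epsilon = 1$, $r = 0$ reduction reproduces the standard Painlev\'e II flow, while the $H$-flow is an elementary one analogous to the second flow in \eqref{seven}. The second step is the passage to the coordinates $(z_i,\theta_j)$: the $z_i = \oint_{\gamma_i}\sqrt{Q_0}\,\dd x$ depend only on $(t,H)$ and are periods of the elliptic curve $y^2 = x^4 + tx^2 + 2H$, with $\partial_t z_i = \oint_{\gamma_i}\tfrac{x^2}{2\sqrt{Q_0}}\,\dd x$ and $\partial_H z_i = \oint_{\gamma_i}\tfrac{\dd x}{\sqrt{Q_0}}$ basic elliptic periods, while the $\theta_i = -\oint_{\gamma_i}\tfrac{Q_1}{2\sqrt{Q_0}}\,\dd x$ depend on all four coordinates but, since $Q_1 = p\big(2r - (x-q)^{-1}\big)$, are manifestly proportional to $p$. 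Differentiating these period integrals under the integral sign expresses $\partial_q\theta_i, \partial_r\theta_i, \partial_t\theta_i, \partial_H\theta_i$ as periods of explicit meromorphic differentials, and hence gives the Jacobian of the map $(t,H,q,r)\mapsto(z_i,\theta_j)$.

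With the flows rewritten in the frame $\partial/\partial z_i, \partial/\partial\theta_i$, comparison with \eqref{above} completes part (i): it selects the linear combination of the two flows that projects to $\partial/\partial z_i$ on $M$ (invert the $2\times2$ period matrix $\partial_{(t,H)}z_i$), identifies the $\epsilon^{-1}$-part as $\partial/\partial\theta_i$, and reads the full symmetric Hessian $\partial^2 W/\partial\theta_i\partial\theta_j$ off the $\epsilon^0$-part — the symmetry of this Hessian, and Pleba\'nski's equation \eqref{plebeq}, being automatic consistency checks (the latter because the isomonodromic flows commute). The ambiguity $a(z_i) + \sum_j b_j(z_i)\theta_j$ is removed by homogeneity \eqref{re2}, oddness \eqref{st2} and the strong periodicity \eqref{per}, so integrating the Hessian yields $W$; equivalently, one verifies directly that the stated formula has the correct $\theta$-Hessian, has weight $-1$ in $z$, and is odd in $\theta$ (under the coordinate change $\theta\mapsto-\theta$ corresponds to the sheet swap $p\mapsto -p$ on the spectral curve, and the bracketed polynomial is even in $p$). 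For part (ii) the point is to describe the zero section $\theta = 0$ explicitly in the coordinates $(t,H,q,r)$ — the locus where the $\gamma_i$-periods of $Q_1/\sqrt{Q_0}$ vanish — and to observe that it projects into the complement of the discriminant locus $\{H = 0\}\cup\{t^2 = 8H\}$ of the spectral curve, which is exactly where the prefactor $\tfrac{1}{48H(t^2-8H)}$ in the formula for $W$ has its poles; since the remaining factor is polynomial in $(q,p,r)$, it follows that $W$ is regular there. Restricting \eqref{s} to this locus and using the period derivatives from the first step, one computes $\partial_{\theta_i}W\big|_{\theta=0}$ and checks that it is the $z$-gradient of $S = -\tfrac{1}{48}\log\big(H^2(8H-t^2)\big)$. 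For part (iii) one differentiates the $\theta$-Hessian once more, restricts to $\theta = 0$ to obtain the third derivatives occurring in \eqref{st}, forms the Christoffel symbols of $\nabla^J$ in the $z_i$-frame, transforms them into the candidate coordinates $\big(t,\, H - \tfrac{1}{8}t^2\big)$, and verifies that they vanish.

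For part (iv) I would assemble the holomorphic symplectic forms $\Omega_+, \Omega_I, \Omega_-$ of the \hk structure on $X = T_M$ from their standard expressions in the canonical coordinates, take the natural primitives $\Theta_0, \Theta_1, \Theta_I$ of \eqref{pot1} described in Section \ref{taufun}, together with $\Theta_\infty = 0$ on the natural Lagrangian in $Z_\infty$, restrict everything to $Y^\hash = \{r = 0\}$, and evaluate $\dd\log\tau = \Theta_0 + 2i\Theta_I + \Theta_\infty - \Theta_1$ as in \eqref{tau1}. Re-expressing the answer using the isomonodromy coordinates $(t,H,q,p)$ on $Y^\hash$ and the logarithmic Fock-Goncharov coordinates $x_1, x_2$ of \eqref{con21} — which are constant along the flows and can be computed along $Y^\hash$ from the solutions of the associated linear problem — should produce the right-hand side of \eqref{hallam}: the terms $-H\,\dd t + p\,\dd q$ give a normalisation of the Painlev\'e II tau-differential plus the canonical Liouville form, $\tfrac13\,\dd(2tH - qp)$ is an exact correction, and $\tfrac{1}{2\pi i}\,x_1\,\dd x_2$ records the discrepancy between the Joyce-structure and the Painlev\'e tau functions.

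I expect the main obstacle to be part (ii): the authors themselves flag the regularity of $W$ along the zero section as ``the non-trivial fact'', and making it precise requires an exact description of the zero-section locus in the $(t,H,q,r)$ coordinates — equivalently, control of the degeneration of the period integrals $\theta_i(t,H,q,r)$ as they are sent to zero — together with the verification that $W$ and its first $\theta$-derivative extend regularly across it. A secondary, purely computational difficulty is the sheer volume of elliptic-function manipulation needed both to pass to the $(z_i,\theta_j)$ frame in the first step and to verify the formulas for $S$ and for the flat coordinates.
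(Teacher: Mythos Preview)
Your overall blueprint for parts (i), (iii), (iv) matches the paper's approach closely, including the use of an intermediate generating function in the $(t,H,\theta_t,\theta_H)$ frame and the final assembly of the tau differential from the natural primitives. One organizational point worth noting: the paper does not read the $\theta$-Hessian of $W$ directly from the flows. In the non-canonical frame $(t,H,\theta_t,\theta_H)$ the flows determine a function $K$ whose second $\theta$-derivatives are the coefficients $a,b,c$; by \cite[Prop.~4.2]{A2} this $K$ differs from $W$ by a cubic in $\theta$, so the paper instead compares \emph{fourth} $\theta$-derivatives of the candidate formula with those of $K$ (computable rationally via \eqref{pegg}--\eqref{pegg2}), and then invokes periodicity to kill the residual quartic-in-$\theta$ ambiguity. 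Your plan to match second derivatives directly in the $(z_i,\theta_j)$ frame would work in principle but forces you into elliptic functions from the outset.

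The genuine gap is in part (ii). Your argument --- the prefactor $\tfrac{1}{48H(t^2-8H)}$ is regular on $M$ and the rest is polynomial in $(q,p,r)$, hence $W$ is regular at $\theta=0$ --- fails because the locus $\theta=0$ is \emph{not} at finite $(q,p,r)$. In the paper's uniformization (equations \eqref{thatone}, \eqref{shoulder} and the expansions following) one has $\theta_t=v$ with
\[
q=-\tfrac{1}{v}+\tfrac{tv}{6}+O(v^3),\qquad p=\tfrac{1}{v^2}+\tfrac{t}{6}+O(v^2),\qquad r=\tfrac{v}{2}-wv^2+\tfrac{tv^3}{12},
\]
so $\theta\to 0$ sends $q,p\to\infty$ in a correlated way. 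The bracketed polynomial in \eqref{plebII} therefore diverges term by term, and regularity of $W$ is a nontrivial cancellation that only becomes visible after substituting these expansions and collecting powers of $(v,w)$. The paper does exactly this (Theorem \ref{thm:Pleba{'n}skiOperII}), obtaining the leading behaviour
\[
W=\frac{tv}{24(8H-t^2)}-\frac{(12H-t^2)w}{24H(8H-t^2)}+O((v,w)^3),
\]
from which both regularity and the formula for $S$ follow. The same uniformization, substituted into the third $\theta$-derivatives of $K$, yields the single nonzero value $\partial^3K/\partial\theta_t^3\big|_{\theta=0}=-\tfrac14$, which is what produces the flat coordinate $H-\tfrac18 t^2$ in part (iii). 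Without this explicit description of the $\theta=0$ locus your proposed route to (ii) and (iii) cannot be completed.
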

After matching parameters the expression appearing on the right-hand side of \eqref{hallam} coincides (up to a factor of two) with the normalisation of the Painlev{\'e} II tau function considered by Its, Lisovyy and Prokhorov \cite{Its2016}.


\section{Approach via opers with apparent singularities}
\label{sec:Quad}

As discussed in the introduction there is a class of meromorphic Joyce structures which are related to supersymmetric gauge theories of class $S[A_1]$.  The base $M$  parameterises Riemann surfaces equipped with quadratic differentials  having poles of fixed orders with zero residues, and simple zeros. In the case  of  quadratic differentials without poles these Joyce structure were constructed in \cite{CH}. The rigorous construction in the meromorphic case  will appear in \cite{Z}. In this section we give a sketch of a general but conjectural  approach to these Joyce structures following \cite[Section 4]{JT}. This will  be worked out in detail in our two examples in Sections \ref{sec:three} and \ref{sec:two} below. Another interesting class of examples involving quadratic differentials on the Riemann sphere with a single pole of odd order is studied in \cite{DM2}.

\subsection{Quadratic differentials}

We begin by fixing the data of a genus $g\geq 0$ and a  collection of integers  $m=(m_1,\ldots,m_l)$ with all $m_i\geq 2$. We always assume that $l>0$ and
\begin{equation}k:=6g-6+\sum_{i=1}^l (m_i+1)>0.\end{equation}
We consider the moduli space
$\Quad(g,m)$ parameterizing pairs $(C,Q_0)$, where $C$ is a compact, connected Riemann surface of genus $g$, and $Q_0=Q_0(x)\, \dd x^{\tensor 2}$ is a meromorphic section of $\omega_C^{\tensor 2}$ with simple zeros, and 
  $l$ unordered poles $x_i\in C$ with multiplicities $m_i$. This space is a complex orbifold of dimension $k$.  We now give a brief review of its  basic properties, referring the reader to \cite[Sections 2 - 4]{BS} for more details.

Given a point $(C,Q_0)\in \Quad(g,m)$ there is a spectral curve $p\colon \Sigma\to C$ branched at the zeros and odd order poles of $Q_0$.  Locally it is given by writing $y^2=Q_0(x)$. There is a covering involution $\sigma\colon \Sigma\to \Sigma$ and a canonical meromorphic 1-form $\lambda=y \,\dd x$ satisfying $\lambda^{\tensor 2}=p^*(Q_0)$.
We define $\Sigma^0\subset \Sigma$ to be the complement of the poles of $\lambda$.

We consider the homology group $H_1(\Sigma^0,\bZ)^-$. The superscript signifies anti-invariance for the covering involution: we consider only classes satisfying $\sigma_*(\gamma)=-\gamma$. A calculation shows that $H_1(\Sigma^0,\bZ)^-\isom \bZ^{\oplus k}$ is free of  rank $k$. 
Given  a basis $(\gamma_1,\ldots,\gamma_k)\subset H_1(\Sigma^0,\bZ)^-$ we define
\begin{equation}\label{zzz}z_i=\oint_{\gamma_i} \lambda\in \bC.\end{equation}

As the point $(C,Q_0)\in \Quad(g,m)$ varies, the homology groups $H_1(\Sigma^0,\bZ)^-$ fit together to form a  local system over $\Quad(g,m)$. Transporting the basis elements $\gamma_i$ to nearby points, the resulting functions $(z_1,\ldots,z_k)$ form local co-ordinates on $\Quad(g,m)$. In particular, the tangent space at a point of $\Quad(g,m)$ is naturally identified with the cohomology group $H^1(\Sigma^0,\bC)^-$.

The inclusion $i\colon \Sigma^0\hookrightarrow \Sigma$ defines a natural map
\begin{equation}
    \label{mama}
i_*\colon H_1(\Sigma^0,\bZ)^-\to H_1(\Sigma,\bZ)^-\end{equation}
whose image  is a finite index subgroup $\Gamma\subset H_1(\Sigma,\bZ)^-$. The  intersection form  on $H_1(\Sigma,\bZ)$ pulls back via $i_*$ to  an integral  skew-symmetric form $\<-,-\>$ on $H_1(\Sigma^0,\bZ)^-$.
The form $\<-,-\>$ then  induces a Poisson structure on $\Quad(g,m)$ satisfying
\begin{equation}\label{poisson}\{z_i,z_j\}=2\pi i \,\<\gamma_i,\gamma_j\>.\end{equation}

After tensoring with $\bQ$, the kernel of  the map \eqref{mama} is spanned by classes $\pm \beta_i\in H_1(\Sigma^0,\bZ)^-$  defined up to sign by the difference of small loops  around the two inverse images of an even-order pole  of $Q_0$.
The residue of $Q_0$ at such a  pole $x_i\in C$ is defined to be the period
\begin{equation}\label{res2}\Res_{x_i}(Q_0)=\pm \oint_{\beta_i}\lambda\in \bC.\end{equation}
It is well-defined up to sign and is a Casimir for the above Poisson structure. Fixing these residues locally cuts out a symplectic leaf  in $\Quad(g,m)$ of  dimension $n=2d$, where
\begin{equation}\label{d} d=3g-3+ \sum_{i=1}^l \, \bigg\lceil \frac{m_i}{2} \bigg\rceil.\end{equation}
The tangent space to this leaf  is identified with  the cohomology group $H^1(\Sigma,\bC)^-$. 

The base of our Joyce structure will be the particular symplectic leaf
\begin{equation}\label{m}M=M(g,m)\subset \Quad(g,m)\end{equation}
obtained by requiring all residues \eqref{res2} at  even order poles of $Q_0$ to be zero. Note that a pole of $Q_0$  of order 2 with zero residue is in fact a pole of order 1.

Choosing a basis $(\gamma_1,\ldots, \gamma_n)$ for the finite index subgroup $\Gamma\subset H_1(\Sigma,\bZ)^-$ gives preferred local co-ordinates $(z_1,\ldots,z_n)$ on $M$ defined by the formula \eqref{zzz}.
As in Section \ref{early}  there are corresponding linear co-ordinates $(\theta_1,\ldots,\theta_n)$ on the tangent spaces $T_p M$ obtained by writing a tangent vector  in the form  $\sum_i \theta_i\cdot  \frac{\partial}{\partial z_i}\big|_p$. We  can then consider $(z_i,\theta_j)$ as co-ordinates on the total space $X=T_M$. We also introduce the quotient $X^{\hash}=X/ (2\pi i)\, \bZ^n$
whose local co-ordinates are $(z_i,e^{\theta_j})$.

\subsection{Isomonodromy flows}
\label{sec:opers}

To define a Joyce structure on $M$ we need a system of flows \eqref{above}. 
We will construct  these by associating to a generic  point of the space $X^\hash$  a  family of differential  equations of the form
\begin{equation}\label{above2}y''(x)=Q(x,\epsilon) \cdot y(x), \qquad Q(x,\epsilon)=\epsilon^{-2}\cdot {Q_0(x)}+\epsilon^{-1}\cdot Q_1(x) +  Q_2(x),\end{equation}
depending on a parameter  $\epsilon \in \bC^*$. In concrete terms the $Q_i(x)$ are meromorphic functions on $C$, but more  invariantly,  $Q_1= Q_1(x) \, \dd x^{\tensor 2}$ is another meromorphic quadratic differential on $C$, and $Q_2$ represents  a meromorphic projective structure \cite{AB}.
The required flows \eqref{above} will then be given by allowing the point $(C,Q_0)$ to move in the space $M$ and varying the other data $Q_1,Q_2$ so that the generalised monodromy of the equation \eqref{above2} remains constant.

At a point $x_i\in C$ where $Q_0$ has a pole of order $m_i$ we will insist that $Q_1$ and $Q_2$ have poles of order at most  $\lceil \half m_i\rceil$. We also allow $Q_1$ and $Q_2$ to have poles at exactly $d$ other points $q_i\in C$, with $d$ given by \eqref{d}. At these points  $Q_1,Q_2$  will be required to have the leading-order  behaviour
\begin{equation}\label{eq:Q1Q2app}
Q_1(x)= -\frac{p_i}{x-q_i} + u_i + O(x-q_i), \qquad Q_2(x)=\frac{3}{4(x-q_i)^2} - \frac{r_i}{x-q_i} + v_i +O(x-q_i),\end{equation}
with $p_i,r_i,u_i,v_i\in \bC$.
We will then insist that for all $\epsilon\in \bC^*$ the equation \eqref{above2}  has apparent singularities at the points  $x=q_i$, i.e. that the monodromy of the associated linear system is trivial as an element of $\PGL_2(\bC)$. 
This is equivalent to the condition 
\begin{equation} \label{eq:apparent}
	\left(\epsilon^{-1} p_i+r_i\right)^2=\epsilon^{-2}Q_0(q_i) +\epsilon^{-1} u_i+v_i,
\end{equation}
 for all $\epsilon\in \bC^*$ (see e.g. \cite[Lemma 2.1]{A2}), and hence to the equations \begin{equation}
\label{conditions}p_i^2=Q_0(q_i), \qquad u_i=2p_ir_i, \qquad v_i=r_i^2.\end{equation} 
The first of these relations shows that  the pair $(q_i,p_i)$ defines a point of the spectral curve  $\Sigma$.

The condition on the pole orders of $Q_1$ at the points $x_i$, together with the equations \eqref{conditions},  ensures that  the anti-invariant differential \begin{equation}\theta=-\frac{Q_1(x)\, dx}{2\sqrt{Q_0(x)}}\end{equation} on $\Sigma$ has simple poles at the points $(q_i,\pm p_i)$ with residues $\pm \half$, and no other poles. We will then require that if the equation \eqref{above2} is associated to a point of $X^\hash$ with co-ordinates $(z_i,e^{\theta_j})$ then
\begin{equation}
    \label{xi2}
e^{\theta_j}=\exp\left(\oint_{\gamma_j} \theta\right).\end{equation}
Note that the right-hand side of \eqref{xi2} is  well-defined for $\gamma_i\in H_1(\Sigma^0,\bZ)^-$. Indeed, we can lift $\gamma_i$ to an anti-invariant cycle on the curve $\Sigma^0$ further punctured at the points $(q_i,\pm p_i)$. Any two ways of doing this differ by a sum of cycles of the form $\delta_i - \sigma^*(\delta_i)$, where $\delta_i$ is a small loop around the point $(q_i,p_i)$. By the residue property of the differential $\theta$ such a  change affects the integral on the right-hand side of \eqref{xi2} by an element of $2\pi i \bZ$.

It is conjectured in \cite[Section 4]{JT} that for a generic point of $X^\hash$ lying over a point $(C,Q_0)\in M$, there is a unique equation \eqref{above2} satisfying the conditions discussed above and \eqref{xi2}. In other words, with the given conditions  on the poles of $Q_1$ and $Q_2$, the equation \eqref{xi2} determines $Q_1$ uniquely, and then generically $Q_2$ is also uniquely determined by the conditions \eqref{conditions} coming from the assumption that \eqref{above2} has apparent singularities.
It is further claimed in  \cite[Section 4]{JT} that the isomonodromy flows for the equation \eqref{above2} give rise to a Joyce structure on $M$.

A general construction of these Joyce structures  will be given in \cite{Z} with a different  approach involving bundles with connection rather than projective structures with apparent singularities. 
In this paper we will construct the required Joyce structures in our two examples by direct calculation. This involves computing the isomonodromy flows for the equation \eqref{above2} and rewriting them in the co-ordinates $(z_i,\theta_j)$ given by \eqref{zzz} and \eqref{xi2}.

\begin{remark}
\label{earlyish}
There is a small subtlety involved with interpreting \eqref{xi2} in the case when $Q_0$ has even order poles. 
Restricting to the symplectic leaf \eqref{m} involves setting the co-ordinates $z_i$ defined by the cycles $\pm \beta_i$ equal to $0$, and the corresponding co-ordinates $e^{\theta_i}$ equal to 1. However, the classes $\pm \beta_i$ only span an index 2 subgroup of the kernel of the map \eqref{mama}: there is a further generator $\nu\in \ker(i_*)\subset H_1(\Sigma^0,\bZ)^-$, well-defined up to the addition of linear combinations of the $\pm \beta_i$,  given by the sum of small anti-clockwise cycles around one inverse image of each even order pole of $Q_0$. To specify the Joyce structure on $M$  we must specify the value of the expression $e^{\theta_\nu}$ corresponding to this generator. Since $2\nu$ is a linear combination of the classes $\pm \beta_i$ we necessarily have $e^{\theta_\nu}\in \{\pm 1\}$. In the Painlev{\'e} II example below we will take $e^{\theta_\nu}=-1$ since this relates most directly to the existing literature on Painlev{\'e} tau functions.
\end{remark}


\section{Approach via linear systems}
\label{sec:Isomonodromy}

In this section, we will recall how to describe in a systematic manner isomonodromic deformations of general equations of the form \eqref{above2}, which are given by Painlev\'e equations when $C=\bP^1$ and the spectral curve $\Sigma$ has genus 1. Rather than considering scalar ODEs as in the previous section, the modern formulation of isomonodromic deformations and Painlev\'e equations typically involves matrix-valued linear systems of ODEs, describing deformations of flat $\mathfrak{g}$-connections (see for example the monograph \cite{fokas2006painleve}). Our description will differ from the usual integrable system literature in that we will always work with a nontrivial reference connection. This will mean that there are twice as many isomonodromic flows as in the usual Painlev\'e literature, albeit half of them take a very simple form.

In this paper we will only consider $\mathfrak{g}=\mathfrak{sl}_2$ and $C=\mathbb{P}^1$. The extension to higher genus Riemann surfaces is complicated by the existence of nonequivalent flat bundles. While it is certainly possible to formulate isomonodromic deformations also in this  case \cite{Levin1999,Krichever2001}, analytic control has only been achieved recently in the case of genus 1 \cite{DMDG2020,DMDG2022}.  

\subsection{Isomonodromic deformations with a reference connection}

Consider a meromorphic flat $\mathfrak{g}$-connection on a rank 2 trivial bundle on $\mathbb{P}^1$ of the form
\begin{equation}\label{eq:}
    \nabla_\epsilon:=\dd-A_0(x)\dd x-\frac{1}{\epsilon}\Phi(x)\dd x.
\end{equation}
 Here $\dd-A_0(x)\dd x$ is a reference connection, and $\Phi(x)\dd x$ is a Higgs field, namely a $\mathfrak{g}$-valued meromorphic 1-form, and $\epsilon \in \bC^*$.  We can relate this to an oper with apparent singularities  by performing  a singular gauge transformation
\begin{equation}
\label{gauge} g(x)^{-1} \big(\dd-  A(x)  \dd x\big) g(x) = \dd- \left( \begin{array}{cc}0 & 1 \\ Q(x) & 0 \end{array}\right) \dd x.
\end{equation}
This can be achieved by taking
    \begin{equation}
g(x) = \left( \begin{array}{cc}A_{12}(x)^{\frac{1}{2}} & 0 \\ 0 & A_{12}(x)^{-\frac{1}{2}}\end{array} \right) \cdot \left( \begin{array}{cc}
    1 & 0 \\ \frac{A'_{12}(x)}{2A_{12}(x)}-A_{11}(x)  & 1
\end{array} \right).
    \end{equation}  
The resulting potential is
\begin{equation}\label{az}
        Q(x)=-\det A(x) +A_{11}(x) \bigg(\frac{A'_{11}(x)}{A_{11}(x)}-\frac{A'_{12}(x)}{A_{12}(x)}\bigg)+\frac{3}{4}\bigg(\frac{A'_{12}(x)}{A_{12}(x)}\bigg)^2-\frac{A''_{12}(x)}{2A_{12}(x)},
    \end{equation}
and the condition $\nabla_\epsilon\Psi=0$ for a flat section becomes the second order ODE \eqref{above2} after setting $\Psi=(y(x),  y'(x))^T$. 

A meromorphic Higgs field $\Phi(x) dx$ on the Riemann sphere $C=\bP^1$ will have the general form
\begin{equation}\label{he}
	\Phi(x)=-\sum_{j=0}^{d_\infty-1}\Phi_{j+2}^\infty x^j+\sum_{k=1}^n\sum_{j=1}^{d_k+1}\frac{\Phi_j^k}{(x-x_k)^j}.
\end{equation}
Let $\pm y(x)$ be the eigenvalues of $\Phi(x)$, and $\pm\lambda=\pm y(x)\dd x$ the eigenvalues of $\Phi(x)\dd x$. Away from the poles of $\Phi(x)$ the  spectral curve $\Sigma$  is given by the condition $\det(\Phi(x)\dd x-\lambda\,\mathrm{id})=0$, i.e
\begin{equation}\label{eq:Sigma}
	y^2=Q_0(x),\qquad Q_0(x):=-\det\Phi(x)=\frac{1}{2}\tr\Phi(x)^2 .
\end{equation}

On the space of Higgs fields, there is a natural R-matrix Lie-Poisson bracket, given by the split rational classical R-matrix \cite{Reyman1979,Reyman1988} (see \cite{Bertola2022b} for more details in this context).
The so-called algebro-geometric Darboux coordinates $(q_i,p_i)$, where $i=1,\dots,g(\Sigma)$, provide a convenient system of coordinates to explicitly parametrise Higgs fields. For our current $\mathfrak{sl}_2$ case, these are defined by
\begin{equation}
	\Phi_{12}(q_i)=0,\qquad \Phi_{11}(q_i)=p_i.
\end{equation}
We see from \eqref{eq:Sigma} that $(q_i,p_i)$ are points on the spectral curve $\Sigma$, namely
\begin{equation}\label{eq:pq}
	p_i^2=Q_0(q_i).
\end{equation}

We introduce a reference connection through additional parameters $r_i$, obtained by shifting $p_i\mapsto p_i+\epsilon r_i$ in the Higgs field, analogously to the Painlev\'e I example in Section \ref{sec:pain}. In this way, the inhomogeneous term $g^{-1}dg$ in the gauge transformation \eqref{gauge} leads to a double pole of $Q(x) dx^{\tensor 2}$ at the point $x=q_i$ with residue $-p_i-\e r_i$, and the algebro-geometric Darboux coordinates  coincide with the variables $q_i,\,p_i$ of Section \ref{sec:opers}. If we have local monodromy exponents $\mu_k$, defined by
\begin{equation}
	\mu_k:=\Res_{x=x_k} y(x)\dd x,
\end{equation}
we will also include in the reference connection parameters $s_k$ defined by shifts $\mu_k\mapsto \mu_k+\left( s_k+\frac{1}{2}\right)$.
This shifting procedure gives a particular choice of reference connection $A_0$, and  immediately gives trivial isomonodromic flows of the form
\begin{equation}
	\widetilde{w}_i=\frac{\partial}{\partial p_i}-\frac{1}{\epsilon}\frac{\partial}{\partial r_i}.
\end{equation}

To have a complete set of coordinates, we also need to include a complete set of spectral invariants for the Higgs field $\Phi$. This problem was solved  in \cite{Bertola2022b} for a general nonresonant system, meaning that the matrix describing the leading behaviour of the Higgs field at a pole is diagonalizable with distinct eigenvalues. This  nonresonancy condition is generic but is not satisfied in the  examples corresponding to Painlev\'e I and III$_3$, so that in Section \ref{sec:three} we will use a slightly different set of coordinates.

We will denote the spectral invariants as $t_{j,k}$ and $H_{j,k}$, with the $t_{j,k}$ being Casimir elements, with $k=1,\dots,n$ indexing the poles $x_k$, and $j=1,\dots d_k$. The spectral invariants are given by
\begin{equation}\label{eq:tH}
	t_{j,k}:=-\Res (x-x_k)^jy(x)\dd x,\quad H_{j,k}:=-\Res_{x=x_k}\frac{y(x)\dd x}{j(x-x_k)^j},
\end{equation}
\begin{equation}
	t_{0,k}:=x_k,\qquad H_{0,k}:=\frac{1}{2} \Res_{x=x_k} \tr \Phi(x)^2 dx.
\end{equation}
For the case of a singular point that has been fixed at infinity,
\begin{equation}
	t_{j,\infty}:=-\Res x^{-j}y(x)\dd x,\qquad H_{j,\infty}:=-\frac{1}{j}\Res_{x=\infty} x^{j} y(x)\dd x.
\end{equation}

In the usual setting of isomonodromic deformations without a reference connection, $t_{j,k}$ are isomonodromic times, whose flows are generated
by the Hamiltonians $H_{j,k}$ through the R-matrix Poisson bracket. In our present context, these are all isomonodromic times: choosing as coordinates the $r$'s, $q$'s, $H$'s and $t$'s, and regarding $p_i$ as functions of the other variables through \eqref{eq:pq}, the flows $\widetilde{w}_i$ can be written as
\begin{equation}
	\widetilde{w}_i=\sum_{k=1}^n\sum_{j=1}^{d_k}\frac{\partial H_{j,k}}{\partial p_i}\bigg|_{q_i}\, \frac{\partial}{\partial H_{j,k}}-\frac{1}{\epsilon}\frac{\partial}{\partial r_i},
\end{equation}
where the subscript denotes partial derivatives with the variables $q_i$ fixed. By rescaling and taking linear combinations of the $\widetilde{w}_i$'s, the flows can be brought to the form $w_{j,k}=\frac{\partial}{\partial H_{j,k}}+\dots$. The other isomonodromic deformation equations can be obtained from the compatibility of the linear systems
\begin{equation}\label{eq:IsoDefHiggs}
\begin{cases}
\partial_x \Psi(x)=A(x)\Psi(x), \\
\partial_{t_{j,k}} \Psi(x)=B_{j,k}(x)\Psi(x),\qquad j=1,\dots,d_k,\,k=1,\dots,n,
\end{cases}
\end{equation}
and are given by
\begin{equation}\label{eq:LaxeqGen}
	\partial_{t_{l,k}}A-\partial_x B_{l,k}+[A,B_{l,k}]=0,\qquad \partial_{t_{l,k}}B_{j,l}-\partial_{t_{j,k}}B_{l,k}+[B_{j,l},B_{l,k}]=0,
\end{equation}
with the matrices $B_{j,k}$ themselves being fixed by the consistency conditions.

In fact,  the isomonodromic equations can be obtained in a uniform manner as the compatibility of the system
\begin{equation}
\begin{cases}
\partial_x \Psi(x)=A(x)\Psi(x), \\
\partial_{t_{j,k}} \Psi(x)=B_{j,k}(x)\Psi(x),& j=1,\dots,d_k,\,k=1,\dots,n,  \\
\partial_{r_i}\Psi(x)=0, & i=1,\dots, g(\Sigma).
\end{cases}
\end{equation}


\section{First example: Painlev\'e III$_3$}
\label{sec:three}

In this section we will study in detail the Joyce structure of class $S[A_1]$ labelled by the genus $g=0$ and pole orders $m=(3,3)$. This is the Joyce structure associated to the DT theory of the   Kr\"onecker 2-quiver.  The base parameterises quadratic differentials on $\bP^1$  with 2 poles of order 3, and simple zeros.
The corresponding Painlev\'e equation is the  III\,(D$_8$) equation, also known as Painlev\'e III$_3$, associated to perhaps the most famous class $S$ theory: pure $SU(2)$ super Yang-Mills, the original setting of Seiberg-Witten theory \cite{Seiberg1994,Bonelli2017}.

\subsection{Higgs field and quadratic differential}

Take $q,t\in \bC^*$ and $p\in \bC$ and consider the meromorphic Higgs field $\Phi=\Phi(x) \, \dd x$ on the trivial rank 2 vector bundle over $\bP^1$ defined by

\begin{equation}
\Phi(x)=\frac{1}{x^2}\left(\begin{array}{cc}
		0 & -q \\
		0 & 0
	\end{array} \right)+\frac{1}{x}\left(\begin{array}{cc}
		pq & 1 \\
		-q^{-1} & -pq
	\end{array} \right)+
	\left(\begin{array}{cc}
		0 & 0 \\
		t & 0
	\end{array} \right).
\end{equation}
Note  that $\Phi$ has double poles at $x=0$  and $x=\infty$ with nilpotent leading order behaviour. Thus $x=0$ and $x=\infty$ are irregular singularities of the connection $d-\Phi$ of Poincar\'e rank $\frac{1}{2}$.

Then matrix $\Phi(x)$ is a Lax matrix for the Painlev\'e III$_3$ equation (see for example \cite[Section 2]{Gavrylenko2017})
\begin{equation}\label{eq:PIII}
\frac{\dd^2q}{\dd t^2}=\frac{1}{q}\left(\frac{\dd q}{\dd t} \right)^2-\frac{1}{t}\frac{\dd q}{\dd t}+\frac{2q^2}{t}-\frac{2}{t^2},
\end{equation}
in the sense that \eqref{eq:PIII} describes isomonodromic deformation of the connection $\dd-\Phi$.

A slight complication in this case is that the poles are branch points of the spectral curve, so that one would need to modify the connection along the lines of \cite[Section 2.1]{Gavrylenko2017}  in order for equations \eqref{eq:tH} to hold. We will circumvent this issue by introducing the parameter $H$ simply as
\begin{equation}\label{eq:PIIIHam}
    H=\Res_{x=0} \left(\frac{x}{2} \tr\Phi(x)^2\,dx\right)=p^2q^2-tq-q^{-1}.
\end{equation}

Since $t\in \bC^*$, the spectral invariant
\begin{equation}\label{qu} \frac{1}{2} \tr(\Phi^2)= Q_0(x)  \, dx^{\tensor 2}= \frac{1}{x^2}\left(tx+H +x^{-1}\right) dx^{\tensor 2}\end{equation}
is a meromorphic quadratic differential  on $\bP^1$ with poles of order 3 at the points $x=0$ and $x=\infty$. Conversely any quadratic differential  on $\bP^1$ with  2 poles of order 3 and no other poles  can be put in the form \eqref{qu} by applying an automorphism of $\bP^1$.

The base $M$ of our Joyce structure will parameterise  differentials of the form \eqref{qu} which have simple zeros. Thus 
\begin{equation}
    M=\big\{(t,H)\in \bC^2: t(H^2-4t)\neq 0\big\}.\end{equation}
   One exceptional feature in this example is that every differential of the form \eqref{qu} is preserved by a unique non-trivial  automorphism of $\bP^1$, namely $x\mapsto (tx)^{-1}$. 
   In more  abstract language, the stack $\Quad(g,m)$ of quadratic differentials  has generic stabilizer $\bZ_2$ in this example \cite[Section 2.7]{BS}.

    We will construct a Joyce structure on $M$ by considering isomonodromic flows for meromorphic connections of the form 
\begin{equation}
\label{family}
	\nabla_\e= \dd-A_\e (x)\dd x, \qquad A_\e (x)=A_\infty(x)+\frac{1}{\e}\, \Phi(x), 
\end{equation}
where   $A_\epsilon(x)$  is  obtained from $\Phi(x)$ by the shift $p\mapsto p+\e rq^{-1}$. Explicitly
\begin{equation}
    \label{con2}
\nabla_\epsilon=\dd-  \left(\begin{array}{cc}
		r & 0 \\
		0 & -r
	\end{array} \right)\frac{\dd x}{x}-\frac{1}{\e}\left(\begin{array}{cc}
		pq & 1-qx^{-1} \\
		 tx-q^{-1}  & -pq
	\end{array} \right)\frac{\dd x}{x}. \end{equation}We view this  connection as  being parameterised by $\epsilon \in \bC^*$ together with the variables $(t,H,q,r)$, with $p$ defined implicitly by \eqref{eq:PIIIHam}.

\subsection{Extended isomonodromic flows}

Let us fix $\epsilon\in \bC^*$. If we move the point $(t,H)\in M$ we can vary $q=q(t,H)$ and $r=r(t,H)$ in such a way that the generalised monodromy of the connection \eqref{con2} is constant. The following result gives an explicit basis for the resulting two-dimensional space of  isomonodromic flows.

\begin{prop}
    The generalised monodromy of \eqref{con2} is preserved by the  vector fields 
\begin{equation}
\label{oneblue}
    w_1=t\frac{\partial}{\partial t}+\frac{2pq^2}{\e }\frac{\partial}{\partial q}+2qr\frac{\partial}{\partial q}-\frac{1}{2\e pq^2}(2rq^2 t-2r+q^2t)\frac{\partial}{\partial r},
\end{equation}
\begin{equation}\label{two1}
	w_2=\frac{\partial}{\partial H}-\frac{1}{2\e p q}\frac{\partial}{\partial r}.
\end{equation}
\end{prop}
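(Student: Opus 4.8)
The plan is to apply the Lax-pair criterion for isomonodromy discussed in Section~\ref{sec:Isomonodromy}. The generalised monodromy of $\nabla_\epsilon$ --- the Stokes data, formal monodromy and connection matrices at the two rank-$\tfrac12$ irregular points $x=0,\infty$ --- is constant along a vector field $w$ on the parameter space $\{(t,H,q,r)\}$ exactly when there is an $\mathfrak{sl}_2$-valued rational function $B_w=B_w(x,\epsilon)$, holomorphic away from $x=0,\infty$ and with poles there bounded by the Poincar\'e rank, solving the zero-curvature equation $w(A_\epsilon)-\partial_x B_w+[A_\epsilon,B_w]=0$ (the analogue, for the present deformation, of \eqref{eq:LaxeqGen}). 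So it suffices to exhibit such a $B_w$ for each of the two flows \eqref{oneblue} and \eqref{two1}.

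For $w_2$ this is immediate: one may take $B_{w_2}=0$, because $A_\epsilon$ is \emph{identically constant} along $w_2$. Indeed, since $A_\epsilon$ depends on $H$ only through $p$ and $\partial p/\partial H|_{t,q}=(2pq^2)^{-1}$ by \eqref{eq:PIIIHam}, the field $w_2$ is, up to the scalar factor $(2pq^2)^{-1}$, the trivial flow $\widetilde w=\partial_p-\tfrac q\epsilon\partial_r$ of Section~\ref{sec:Isomonodromy} rewritten in the coordinates $(t,H,q,r)$; and along $\widetilde w$ the defining shift $p\mapsto p+\epsilon rq^{-1}$ simply absorbs an infinitesimal change of $p$ into $r$. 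Concretely $\partial_p A_\epsilon=\epsilon^{-1}x^{-1}\operatorname{diag}(q,-q)$ and $\partial_r A_\epsilon=x^{-1}\operatorname{diag}(1,-1)$, so $w_2(A_\epsilon)=(2\epsilon pqx)^{-1}\operatorname{diag}(1,-1)-(2\epsilon pqx)^{-1}\operatorname{diag}(1,-1)=0$.

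The substance is $w_1$. I would construct $B_{w_1}(x)$ starting from the Lax partner of the standard Painlev\'e III$_3$ time-flow for $\dd-\Phi$ (see e.g.\ \cite{Gavrylenko2017}): introduce $\epsilon$ via $\Phi\mapsto\epsilon^{-1}\Phi$, correct for the reference shift $p\mapsto p+\epsilon rq^{-1}$, and add a suitable multiple of the trivial flow above so as to reproduce the precise $\partial_r$-coefficient of \eqref{oneblue}; equivalently, one may posit a low-degree rational ansatz $B_{w_1}=xC_1+C_0+x^{-1}C_{-1}$ (allowing an $x^{-2}$-term if needed) with constant traceless $2\times2$ matrices $C_j$ and solve. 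The check then runs as follows: compute $w_1(A_\epsilon)$ --- using $w_1(t)=t$, $w_1(q)=2pq^2\epsilon^{-1}+2qr$, $w_1(H)=0$, and, by differentiating $p^2q^2=H+tq+q^{-1}$, the resulting value of $w_1(p)$ --- substitute the ansatz into the zero-curvature equation, and verify that the coefficient of each power of $x$ vanishes. These coefficient equations are precisely the isomonodromy conditions, which (granting the elementary bookkeeping of $\epsilon$ and $r$) amount to the $\epsilon$-deformed Painlev\'e III$_3$ equation and hence, at $\epsilon=1$, $r=0$, to \eqref{eq:PIII}; this is exactly what guarantees that $B_{w_1}$ exists.

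The main obstacle is this last step: pinning down $B_{w_1}$ with the correct pole orders at $x=0,\infty$ and pushing through the coefficient-matching --- routine but lengthy, and needing care to separate the genuine deformation from gauge transformations and to respect the $\epsilon$-grading of \eqref{family}. A more conceptual point is that $x=0$ and $x=\infty$ are branch points of $\Sigma$, so ``generalised monodromy'' must be interpreted with care (cf.\ the remarks around \eqref{eq:PIIIHam}--\eqref{qu} and \cite[Section~2.1]{Gavrylenko2017}); the cleanest fix is to pass to the double cover of $\bP^1$ branched at $0,\infty$, on which the singularities become unramified, or to invoke the general construction of \cite{Z}. If one prefers to minimise new computation, an alternative is to deduce the statement from \cite{Gavrylenko2017}: since the standard $t$-flow of $\dd-\Phi$ is isomonodromic there, it is enough to verify the purely algebraic identity that $w_1$ equals $t$ times the $\epsilon$-deformed lift of that flow plus a $\bC$-linear combination of the two manifestly isomonodromic flows $w_2$ and $\partial_p-\tfrac q\epsilon\partial_r$.
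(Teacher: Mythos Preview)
Your approach via the zero-curvature equation is exactly the paper's. For $w_2$ your argument is complete and correct: it is the paper's observation that $A_\epsilon$ depends only on the combination $pq+\epsilon r$, rewritten directly in the $(t,H,q,r)$ variables.

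For $w_1$ you have the right plan but have not executed it, and you flag this yourself as ``the main obstacle''. The paper fills this gap simply by exhibiting the Lax partner
\[
B_\epsilon(x)=\frac{1}{\epsilon t}\begin{pmatrix} pq+\epsilon r & 1 \\ 0 & -pq-\epsilon r\end{pmatrix}+\frac{x}{\epsilon}\begin{pmatrix}0&0\\1&0\end{pmatrix},
\]
so your ansatz needs no $x^{-1}$ or $x^{-2}$ term. The zero-curvature equation $\partial_t A_\epsilon-\partial_x B_\epsilon+[A_\epsilon,B_\epsilon]=0$ then reads off
\[
t\,\frac{dq}{dt}=\frac{2pq^2}{\epsilon}+2qr,\qquad t\,\frac{d}{dt}(pq+\epsilon r)=\frac{1}{\epsilon}(qt-q^{-1}),
\]
and, exactly as you suggest in your final paragraph, one adds a multiple of the trivial flow to impose $dH/dt=0$ and obtains $w_1$ after passing to the $(t,H,q,r)$ variables. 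There is no further subtlety.

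Your concerns about interpreting the generalised monodromy at the ramified irregular points, or passing to a double cover, are unnecessary here: the zero-curvature identity is a purely algebraic equality of rational functions of $x$ with coefficients in $(t,H,q,r,\epsilon)$, and its validity is precisely what guarantees constancy of whatever monodromy/Stokes data one chooses to attach to $\nabla_\epsilon$. The paper does not invoke any of this extra machinery.
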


\begin{proof}
Let us first consider the connection $d-A_\e(x) dx$ as being a function of  the variables $(t,q,p,r)$,  with $H$ determined by \eqref{eq:PIIIHam}.  We can describe isomonodromic deformations  of  this connection  with parameter $t$ as the compatibility conditions of the linear systems
\begin{equation}
\label{systems}
     \partial_x Y(x)=A_\e(x)Y(x), \qquad
 	\partial_t Y(x)=B_\e(x)Y(x). 
\end{equation}
The consistency condition
\begin{equation}
\label{consistency}
    \partial_t A_\e(x)-\partial_xB_\epsilon(x)+[A_\e(x),B_\e(x)]=0
\end{equation}
requires that we take
\begin{equation}
    B_\epsilon(x) =\frac{1}{\e t}\left(\begin{array}{cc}
	pq+\e r & 1 \\
	0 & -pq-\e r
\end{array} \right)+ \frac{x}{\e}\left(\begin{array}{cc}
	0 & 0 \\
	1 & 0
\end{array} \right),
\end{equation}
and gives the equations
\begin{equation}
\label{peg}
    t\frac{\dd q}{\dd t}=\frac{2pq^2}{\e}+2qr, \qquad 
    t\frac{\dd }{\dd t}(pq+\e r)=\frac{1}{\e}\left(q t-q^{-1} \right).
\end{equation}
Differentiating \eqref{eq:PIIIHam} then gives
\begin{equation}
    \frac{dH}{dt}=\frac{2r-q^2t(2r+1)}{tq}-2pq\epsilon\,  \frac{dr}{dt}.
\end{equation}

The equations \eqref{peg} do not uniquely  determine an isomonodromic flow because everything depends only on the combination $pq+\e r$, meaning that both $\partial_x-A_\e(x)$ and $\partial_t-B_\epsilon(x) $, and in particular the monodromy of $\nabla_\e $, are constant along the vector field
\begin{equation}
\label{w2}
    \widetilde{w_2}=\frac{\partial}{\partial r}-\frac{\e}{q}\frac{\partial}{\partial p}
\end{equation}
To fix this ambiguity  we impose that $\frac{\dd H}{\dd t}=0$ by adding a multiple of the vector field $\tilde{w}_2$.

We now rewrite the flows in the variables $(t,H,q,r)$, with $p$  determined by \eqref{eq:PIIIHam}. The first flow becomes
\begin{equation}
    \label{eq:isomonodromyPIII}
     t\frac{\dd q}{\dd t}=\frac{2pq^2}{\e}+2qr, \qquad  t\frac{\dd r}{\dd t}=\frac{2 r-q^2 t (2 r +1)}{2 p q^2  \e },
\end{equation}
which is generated by the vector field $w_1$. Moreover \eqref{w2} becomes
\begin{equation}
    \widetilde{w_2}=\frac{\partial}{\partial r}-2\e pq
    \frac{\partial}{\partial H},
\end{equation}
which is equivalent to $w_2$ by a simple rescaling.
\end{proof}
\begin{remark}
The combination $w_1-qtw_2$ gives the flow
\begin{equation}\label{eq:ReferencePIII}
    t\frac{dq}{dt}=\frac{2pq^2}{\e}+2qr,\qquad t\frac{dp}{dt}=-\frac{2p^2q-t+q^{-2}}{\e}+\frac{r \left(-4 p^2 q^3+2 q^2 t-2\right)}{2 p q^3},\qquad t\frac{dH}{dt}=-tq.
\end{equation}
When $r=0$ this reproduces the usual Painlev\'e flow defined by the Hamiltonian $H$ and the symplectic form $tdq\wedge dp$. 
In contrast to what happens in the usual treatment of Painlev\'e equations, however, in the presence of a reference connection the flow in $t$ can be normalised so that $H$ is conserved. However, since $q$ is unaffected by the flow $w_2$, irrespective of how we normalise the flow, the second order isomonodromy equation for $q$ is still the ($\e$-deformed) Painlev\'e III$_3$ equation 
\begin{equation}\label{pa}
	\frac{\dd^2 q}{\dd t^2}=\frac{1}{q}\left(\frac{\dd q}{\dd t} \right)^2-\frac{1}{t}\frac{\dd q}{\dd t}+\frac{1}{\e^2}\left(\frac{2q^2}{t}-\frac{2}{t^2}\right)
\end{equation}
This phenomenon was observed in the Painlev\'e I example in \cite[Section 6.1]{DM2}.
\end{remark}

\subsection{Spectral curve and its homology}

Let us fix a point $(t,H)\in M$. The spectral curve $\Sigma$  is a smooth curve of genus one which is a double cover $p\colon \Sigma\to \bP^1$  branched over 4 points of $\bP^1$. We denote by $\sigma\colon \Sigma\to \Sigma$ the covering involution. The branch points of $p$ include the points $0,\infty\in \bP^1$, and we will also write $0,\infty\in \Sigma$ for the unique points of $\Sigma$ lying over them. 

The open subset $\Sigma^0\subset \Sigma$ is  the complement $\Sigma^0=\Sigma\setminus\{0,\infty\}$. This subset  can be identified with the set of points $(x,y)\in \bC^*\times \bC$ satisfying
\begin{equation}\label{eq:SigmaPIII}
	y^2=Q_0(x)=\frac{t}{x}+\frac{H}{x^2}+\frac{1}{x^3}.
\end{equation}
The projection is then $p(x,y)=x$ and the covering involution is $\sigma(x,y)=(x,-y)$.

\begin{lemma}
\label{nak}
    The inclusion $i\colon \Sigma^0\hookrightarrow \Sigma$ induces an embedding
of anti-invariant homology groups \begin{equation}\label{istar}i_*\colon H_1(\Sigma^0,\bZ)^-\hookrightarrow H_1(\Sigma,\bZ)^-=H_1(\Sigma,\bZ)\end{equation}
whose image  has index 2. 
\end{lemma}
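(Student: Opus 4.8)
The plan is to compute the anti-invariant homology of $\Sigma$ and of $\Sigma^0$ directly and exhibit the map $i_*$ on explicit generators. Since $\Sigma$ has genus one, the covering involution $\sigma$ acts on $H_1(\Sigma,\bZ)\isom\bZ^2$, and because $p\colon\Sigma\to\bP^1$ is a degree-two cover branched at four points, the quotient $\Sigma/\sigma=\bP^1$ has genus zero; hence $\sigma$ acts by $-1$ on $H_1(\Sigma,\bZ)$ (the invariant part would descend to $H_1(\bP^1,\bZ)=0$, and a Lefschetz/Euler characteristic count pins down the anti-invariant part as all of $\bZ^2$). This gives the equality $H_1(\Sigma,\bZ)^-=H_1(\Sigma,\bZ)$ already asserted in the statement, so the real content is the index-2 claim.

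Next I would analyze $\Sigma^0=\Sigma\setminus\{0,\infty\}$. Removing two points from a genus-one surface gives a space homotopy equivalent to a wedge of three circles, so $H_1(\Sigma^0,\bZ)\isom\bZ^3$, with the extra generator represented by a small loop $\delta$ around one of the punctures (the loop around the other puncture is homologous to $-\delta$ in $\Sigma^0$ since the two punctures together bound, up to the standard $a,b$ cycles). The long exact sequence of the pair $(\Sigma,\Sigma^0)$, or equivalently the Gysin/excision sequence, shows $i_*\colon H_1(\Sigma^0,\bZ)\to H_1(\Sigma,\bZ)$ is surjective with kernel generated by $\delta$. I then need to pass to anti-invariants: since $0,\infty\in\Sigma$ are the fixed points of $\sigma$ lying over branch points $0,\infty\in\bP^1$, the involution sends a small loop around one fixed point to a small loop around the same fixed point but with reversed orientation, i.e. $\sigma_*(\delta)=-\delta$. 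Therefore $\delta\in H_1(\Sigma^0,\bZ)^-$, and the rank-$k$ computation quoted from the general setup (here $k=6\cdot 0-6+(3+1)+(3+1)=2$, wait — $k=\sum(m_i+1)-6=4+4-6=2$) confirms $H_1(\Sigma^0,\bZ)^-\isom\bZ^{2}$; combined with $H_1(\Sigma,\bZ)^-\isom\bZ^2$, the map $i_*$ between two free rank-two groups has some finite index, which I must show equals $2$.

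To nail the index I would choose an explicit anti-invariant basis. Take $a,b\in H_1(\Sigma,\bZ)$ a symplectic basis; these are automatically anti-invariant. Lift them to cycles $\tilde a,\tilde b$ on $\Sigma^0$; they generate a rank-two subgroup of $H_1(\Sigma^0,\bZ)^-$. The point is that $\ker(i_*)$ in $H_1(\Sigma^0,\bZ)$ is spanned by $\delta$, but $\delta$ itself, while anti-invariant, need not be primitive inside $H_1(\Sigma^0,\bZ)^-$ — concretely, one checks using the residue/monodromy relation around the two branch points $0,\infty$ that the class $\delta$ is divisible by $2$ in $H_1(\Sigma^0,\bZ)^-$: there is an anti-invariant class $\nu$ with $2\nu=\delta$ up to integer combinations of $\tilde a,\tilde b$, namely the sum of small cycles around the two fixed points lifted anti-invariantly (this is exactly the phenomenon flagged in Remark~\ref{earlyish} for even-order poles, transplanted to the odd-order-pole-that-is-a-branch-point situation). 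Hence $H_1(\Sigma^0,\bZ)^-=\langle\tilde a,\tilde b,\nu\rangle$ with the single relation $2\nu\equiv 0$ modulo the image, giving $i_*$ image of index exactly $2$. The main obstacle is the last point: correctly identifying the primitive anti-invariant generator $\nu$ near the branch points and verifying $2\nu$ lies in the subgroup generated by $\tilde a,\tilde b$ — this requires a careful local computation of how $\sigma$ acts on $H_1$ of a punctured neighbourhood of a ramification point, for which I would use the model $y^2=x$ near $x=0$ (and similarly $x=\infty$), where the anti-invariant lift of a loop encircling the branch point is traversed twice.
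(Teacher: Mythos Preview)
Your overall strategy---compute $\sigma_*$ on explicit generators of $H_1(\Sigma^0,\bZ)$ and read off the anti-invariant part---matches the paper's. But there is a genuine error at the key step. You assert that $\sigma_*(\delta)=-\delta$ for a small loop $\delta$ around a ramification point. This is false: the covering involution $\sigma$ is holomorphic, hence orientation-preserving, and near a fixed point it looks like $w\mapsto -w$, which sends a small positively oriented circle to itself with the \emph{same} orientation. Thus $\sigma_*(\delta)=+\delta$, so $\delta$ is invariant, not anti-invariant. With your sign, $\delta$ would lie in $H_1(\Sigma^0,\bZ)^-$ and hence in the kernel of $i_*$ restricted to the anti-invariants, contradicting the very injectivity you are trying to prove.

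Once the sign is corrected, the lifts $\tilde\alpha_1,\tilde\alpha_2$ of a basis of $H_1(\Sigma,\bZ)$ are no longer themselves anti-invariant: one finds $\sigma_*(\tilde\alpha_i)+\tilde\alpha_i=\pm\delta$ (the paper obtains $+\delta$ and $-\delta$ for a specific choice of lifts). The anti-invariant sublattice of $H_1(\Sigma^0,\bZ)\cong\bZ\tilde\alpha_1\oplus\bZ\tilde\alpha_2\oplus\bZ\delta$ is then generated by $\tilde\alpha_1+\tilde\alpha_2$ and $\tilde\alpha_1-\tilde\alpha_2-\delta$; under $i_*$ these go to $\alpha_1+\alpha_2$ and $\alpha_1-\alpha_2$, which span exactly the index-$2$ sublattice of $H_1(\Sigma,\bZ)$. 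This is where the index $2$ comes from---not from any divisibility of $\delta$. Your proposed class $\nu$ with $2\nu=\delta$ cannot exist, since $\delta$ is primitive in the free group $H_1(\Sigma^0,\bZ)$; and the appeal to Remark~\ref{earlyish} is misplaced, as that remark concerns even-order poles, whereas here both poles have odd order $3$.
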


\begin{proof}
By uniformisation we can identify $\Sigma$ with the quotient of $\bC$ by a lattice generated by $\omega_1,\omega_2\in \bC^*$ with $\Im(\omega_2/\omega_1)>0$. We can assume that $0\in \bC$  corresponds to $0\in \Sigma$, and that the  involution $\sigma$ is induced by $z\mapsto -z$ on $\bC$. Then $\infty\in \Sigma$ will correspond to some half-period, without loss of generality $\half(\omega_1+\omega_2)$. The homology group $H_1(\Sigma,\bZ)$ has a basis consisting of cycles $\alpha_1,\alpha_2$ which we can represent by the images of the closed intervals $[0,\omega_i]\subset \bC$. It is then obvious that $\sigma^*(\alpha_i)=-\alpha_i$ and hence that $H_1(\Sigma,\bZ)^-=H_1(\Sigma,\bZ)$.

 The inclusion $i\colon \Sigma^0\hookrightarrow \Sigma$ induces a surjection $i_*\colon H_1(\Sigma^0,\bZ)\to H_1(\Sigma,\bZ)$ whose kernel is freely generated by the class $\delta$ defined by a small anti-clockwise loop  around the point $\half(\omega_1+\omega_2)$.
Let us lift the cycles $\alpha_i\in H_1(\Sigma,\bZ)$ to elements $\tilde{\alpha}_i \in H_1(\Sigma^0,\bZ)$. Since the $\alpha_i$ pass through $0\in \bC$  this requires us to first perturb them in some way. There is no canonical way to do this but we can take for example $\tilde{\alpha}_1=\tfrac{1}{4}\omega_2+[0,\omega_1]$ and $\tilde{\alpha}_2=\tfrac{1}{4}\omega_1+[0,\omega_2]$. The group $H_1(\Sigma^0,\bZ)$ then has a basis consisting of $\tilde{\alpha}_1,\tilde{\alpha}_2$ and $\delta$.

It is easy to see that $\sigma^*(\tilde{\alpha}_1)+\tilde{\alpha}_1=\delta$ and $\sigma^*(\tilde{\alpha}_2)+\tilde{\alpha}_2=-\delta$. Also $\sigma^*(\delta)=\delta$. 
Thus the anti-invariant subgroup $H_1(\Sigma^0,\bZ)^-\subset H_1(\Sigma^0,\bZ)$ is spanned by $\tilde{\alpha}_1+\tilde{\alpha}_2$ and $\tilde{\alpha}_1-\tilde{\alpha}_2-\delta$. It follows that the map \eqref{istar} is injective, and its image is the index 2  subgroup  spanned by $\alpha_1+\alpha_2$ and $\alpha_1-\alpha_2$. 
\end{proof}

In what follows we will
 fix a basis $(\gamma_1,\gamma_2)$ for the group $H_1(\Sigma^0,\bZ)^-$. The intersection form on $H_1(\Sigma,\bZ)$ pulls back along $i_*$ to give a skew-symmetric pairing  on  $H_1(\Sigma^0,\bZ)^-$.  Since the image of $i_*$ has   index 2 we can order our basis so that $\<\gamma_1,\gamma_2\>=2$. 

\subsection{Forms and periods}
By the gauge transformation \eqref{gauge} we can rewrite \eqref{con2} as a scalar equation of the form \eqref{above2}. In this case we have $Q_0(x)$ as in \eqref{qu} above, and
\begin{equation}\label{nf}
 Q_1(x)=-\frac{pq^2}{x^2(x-q)}+\frac{2pqr}{x^2}, \qquad Q_2(x)=\frac{3}{4(x-q)^2}-\frac{x+rq}{x^2(x-q)}+\frac{r^2}{x^2}.
\end{equation}
As in Section \ref{sec:Quad} we introduce  meromorphic differentials on $\Sigma$ by the expressions
\begin{equation}\label{nff}\lambda = y\, \dd x, \qquad \theta=-\frac{Q_1(x) \, dx }{2\sqrt{Q_0(x)} }.\end{equation}

The Seiberg-Witten differential $\lambda$ has double poles at the points $0,\infty\in \Sigma$, and
no other poles. 
The differential $\theta$  has simple poles at the points $(q,\pm p)$ with residues $\pm \half$ respectively, and no other poles. 
We introduce the periods 
\begin{equation}\label{pers}z_i=\oint_{\gamma_i} \lambda, \qquad \theta_i=\oint_{\gamma_i} \theta.\end{equation}
Note that the expressions $\theta_i$ are only well-defined up to the addition of integer multiples of $2\pi i$. To specify them we must lift the  cycles $\gamma_i$ to the complement $\Sigma^*=\Sigma^0\setminus \{(q,\pm p)\}$. We can always do this so that the resulting elements $\gamma^*_i\in H_1(\Sigma^*,\bZ)$ are anti-invariant. Since $\theta$ has residues $\pm \half$ at the points $(q,\pm p)$, taking different  lifts changes the result by a multiple of $2\pi i$. 

Let us set $s=\log(t)$ and take $(s,H)$ as local co-ordinates  on the base $M$. There are 
corresponding linear coordinates $(\theta_s,\,\theta_H)$ on the fibres of the projection $\pi\colon X=T_M\to M$ defined by the relation\begin{equation}\label{fafa}
\theta_s\, \frac{\partial }{\partial s} +\theta_H\, \frac{\partial }{\partial H} =\theta_1\,  \frac{\partial }{\partial z_1} +\theta_2\, \frac{\partial }{\partial z_2} .
\end{equation}
 
 Introduce the differentials
\begin{equation}\label{fafafa}\omega=\frac{\partial \lambda}{\partial H} =\frac{dx}{2x^2y}, \qquad \beta=\frac{\partial \lambda}{\partial s} =\frac{t\, dx}{2x y}.\end{equation}
Then $\omega$ is a holomorphic differential on $\Sigma$, and $\beta$ has a pole of order 2  with vanishing residue at the branch-point $x = \infty$  and no other poles. We denote the corresponding periods by $\omega_{i}$ and $\beta_i$ respectively. There is an obvious relation
\begin{equation}\label{fink}dz_i=\frac{\partial z_i}{\partial s} \, ds+\frac{\partial z_i}{\partial H} \, dH=\beta_i \, ds+\omega_i \, dH,\end{equation}
and using \eqref{fafa} we get
\begin{equation}\label{eq:thetatHdef}\theta_i=\beta_i\theta_s+\omega_i\theta_H.\end{equation}

\subsection{Computation of periods}In this section we use the Riemann bilinear relations to give explicit expressions for $\theta_s$ and $\theta_H$.
  
\begin{lemma}\label{thm:lemmathetaIII}
   There are formulae
    \begin{equation}\label{ths}\theta_s=\frac{1}{2} \int_{(q,-p)}^{(q,p)} \frac{dx}{2x^2y}, \qquad  \theta_H= -\frac{1}{2} \int_{(q,-p)}^{(q,p)} \frac{t\, dx}{2xy}-2pqr.\end{equation}
\end{lemma}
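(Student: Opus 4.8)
The plan is to compute the two periods $\theta_1,\theta_2$ directly using the Riemann bilinear relations and then invert the linear system \eqref{eq:thetatHdef} for $\theta_s,\theta_H$. The key inputs are: (a) the differential $\theta = -Q_1(x)\,dx/(2\sqrt{Q_0})$ has simple poles at $(q,\pm p)$ with residues $\pm\tfrac12$ and no other poles; (b) $\omega = \partial\lambda/\partial H$ is the normalized-up-to-scale holomorphic differential and $\beta = \partial\lambda/\partial s$ has a double pole with zero residue at $x=\infty$; and (c) the intersection pairing satisfies $\langle\gamma_1,\gamma_2\rangle = 2$, as fixed in the previous subsection. The strategy is standard: pair $\theta$ against $\omega$ and against $\beta$ using the bilinear relations on the genus-one curve $\Sigma$, with the punctures at $(q,\pm p)$ contributing residue terms.

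First I would fix a symplectic-type basis: since $\langle\gamma_1,\gamma_2\rangle=2$, write $\gamma_1 = \alpha_1+\alpha_2$, $\gamma_2 = \alpha_1-\alpha_2$ in terms of the $\alpha_i$ from Lemma \ref{nak} (with $\langle\alpha_1,\alpha_2\rangle=1$), so that periods over $\gamma_i$ are sums/differences of the $\alpha$-periods. Next, apply the Riemann bilinear relation to the pair $(\theta,\omega)$: since $\omega$ is holomorphic and $\theta$ is a differential of the third kind with residues $\pm\tfrac12$ at $(q,\pm p)$, the bilinear identity gives a relation of the shape $\sum_i(\theta\text{-period on }\alpha_i)\cdot(\omega\text{-period on }\beta_i) - (\theta\text{ on }\beta_i)\cdot(\omega\text{ on }\alpha_i) = 2\pi i\sum(\text{residues of }\theta)\times(\text{value of the }\omega\text{-primitive})$, i.e. $2\pi i\cdot\tfrac12\cdot\big(\int^{(q,p)}\omega - \int^{(q,-p)}\omega\big) = \pi i\int_{(q,-p)}^{(q,p)}\omega$. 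Carrying this out and using \eqref{eq:thetatHdef} to express $\theta_i$ in terms of $\theta_s,\theta_H$, together with the analogous relation for $(\theta,\beta)$ (which additionally picks up the local contribution of $\beta$'s double pole at $\infty$, computed from the Laurent expansion of a primitive of $\theta$ there), yields two linear equations for $\theta_s$ and $\theta_H$. Solving them should produce the stated formulae, with the extra term $-2pqr$ in $\theta_H$ arising from the pole of $\theta$ (equivalently of $Q_1$) at $x=\infty$: note $Q_1(x)$ in \eqref{nf} has a nonzero limit-type behaviour governed by the $2pqr/x^2$ term, and when integrated against $\beta = t\,dx/(2xy)$ near $x=\infty$ this contributes a finite residue piece which I expect to be exactly $-2pqr$.

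The main obstacle I anticipate is bookkeeping of the boundary/residue contributions at $x=\infty$, since $\infty$ is a branch point where $\lambda$ (hence $\beta$) has a double pole, and $\theta$ is regular there but its primitive is not single-valued; one must choose local coordinates on $\Sigma$ (a square-root coordinate since $\infty$ is a branch point) and carefully extract the relevant residue. A secondary subtlety is the sign and normalization: the factor $\tfrac12$ in front of the integrals in \eqref{ths} should come out of the relation $\gamma_i = \alpha_1\pm\alpha_2$ combined with $\langle\gamma_1,\gamma_2\rangle=2$, and I would double-check it by evaluating both sides of \eqref{eq:thetatHdef} on a degenerate or easily-computable limit. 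Once the residue computation at $\infty$ is pinned down, the rest is linear algebra and the identification of $\int_{(q,-p)}^{(q,p)}\omega$ and $\int_{(q,-p)}^{(q,p)}\beta$ with the integrals appearing in \eqref{ths} via \eqref{fafafa}.
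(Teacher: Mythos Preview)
Your approach is essentially the same as the paper's: apply the Riemann bilinear relations to the pairs $(\omega,\theta)$ and $(\beta,\theta)$, then use \eqref{eq:thetatHdef} to extract $\theta_s$ and $\theta_H$. Two small points of streamlining are worth noting. First, there is no need to pass to the $(\alpha_1,\alpha_2)$ basis: the bilinear identity holds for any linearly independent pair of cycles with an overall factor of $\langle\gamma_1,\gamma_2\rangle=2$, so you can work directly with the skew-pairing $\langle\eta,\xi\rangle:=\eta_1\xi_2-\eta_2\xi_1$ on $\gamma$-periods. Second, rather than computing $\theta_1,\theta_2$ individually and then inverting a $2\times2$ system, the paper substitutes \eqref{eq:thetatHdef} directly into the pairing: since $\langle\omega,\omega\rangle=0$ one gets $\langle\omega,\theta\rangle=\theta_s\langle\omega,\beta\rangle$ immediately, and similarly $\langle\beta,\theta\rangle=\theta_H\langle\beta,\omega\rangle$. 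This together with the normalization $\langle\omega,\beta\rangle=4\pi i$ (itself a bilinear-relation computation at $\infty$) yields the formulae with no linear algebra.

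One correction to your bookkeeping: the $-2pqr$ term does \emph{not} come from a pole of $\theta$ at $\infty$ (indeed $\theta$ is holomorphic there), nor from any multivaluedness of a primitive of $\theta$. It arises because $\beta$ has a double pole at $\infty$, so in the residue $\Res_{\infty}\big(\theta\int^x\beta\big)$ the primitive $\int^x\beta$ has a simple pole, and pairing this against the leading regular part $\theta\sim -pqr\,dx/(t^{1/2}x^{3/2})$ near $\infty$ produces a finite residue equal to $2pqr$ (using that $dx/x$ has residue $-2$ at the branch point $\infty\in\Sigma$). This is the computation you should pin down; once done the rest is exactly as you outline.
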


\begin{proof}
Note that $y=\sqrt{Q_0(x)}$ is a uniformising parameter at the point $\infty\in \Sigma$. We denote by  $x^{1/2}$  the  choice of square-root near this point for which  $y\simeq t^{1/2}x^{-1/2}$. The differentials introduced above have leading order behaviour
\begin{equation}\label{lord}
    \omega=\frac{\dd x}{2t^{1/2}x^{3/2}}+O\left(\frac{\dd x}{x^{5/2}} \right), \qquad \beta=\frac{t^{1/2} \dd x}{2x^{1/2}}+O\left(\frac{\dd x}{x^{3/2}} \right), \end{equation}
    \begin{equation}\qquad \theta=-\frac{pqr \, dx}{t^{1/2}x^{3/2}}+O\left(\frac{\dd x}{x^{5/2}} \right).
\end{equation}

The Riemann bilinear relations for meromorphic differentials (see Proposition 3.4.1 in \cite{BertolaLectures}, a direct consequence e.g. of equations 3.0.1 and 3.0.2 in \cite{FarkasKra})   give
 \begin{equation}\label{l1a}\omega_1\beta_{2}-\omega_2\beta_{1} = 2\pi i \cdot \langle\gamma_1,\gamma_2\rangle\cdot  \Res_{x=\infty}\left( \beta\int^x\omega\right). \end{equation}
Note that although this formula is usually given in the case that the cycles $\gamma_1,\gamma_2\in H_1(\Sigma,\bZ)$ satisfy $\<\gamma_1,\gamma_2\>=1$, the case of an arbitrary pair of linearly independent cycles follows immediately from this. Note also that the form $dx/x$ has residue $-2$ at the point $\infty\in \Sigma$.
Using  $\<\gamma_1,\gamma_2\>=2$ and the expressions \eqref{lord} we then get
\begin{equation}\label{l1}\langle\omega,\beta\rangle:=\omega_1\beta_{2}-\omega_2\beta_{1} =4\pi i.\end{equation}

We can write similar expressions
 \begin{equation}\label{lorda}\langle\omega,\theta\rangle  :=\omega_1\theta_2-\omega_2\theta_1 = 4\pi i \sum_{x_i\in \{(q,\pm p)\}}\Res_{x=x_i}\left(\theta\int^x\omega\right)
 =2\pi i  \int_{(q,-p)}^{(q,p)}\omega,\end{equation}
 \begin{equation}\label{lordb} \langle\beta,\theta\rangle:=\beta_{1}\theta_2-\beta_{2}\theta_1 = 4\pi i \sum_{x_i\in\{(q,\pm p),\infty\}}\Res_{x=x_i}\left(\theta\int^x\beta \right)=2\pi i \int_{(q,-p)}^{(q,p)}\beta+8\pi i pqr.
\end{equation}
Using \eqref{eq:thetatHdef} we then have
\begin{equation}
   2\pi i  \int_{(q,-p)}^{(q,p)}\omega=  \langle\omega,\theta\rangle=\theta_s\langle\omega,\beta\rangle=4\pi i\theta_s,
\end{equation}
\begin{equation}
     2\pi i\int_{(q,-p)}^{(q,p)}\beta+8\pi i pqr=\langle\beta,\theta\rangle=\theta_H\langle\beta,\omega\rangle=-4\pi i\theta_H,
\end{equation}
and comparing with \eqref{fafafa} gives the result.
\end{proof}

\begin{remark}
    Since the $\theta_i$ are only well-defined up to the addition of integer multiples of  $2\pi i$, it follows from \eqref{eq:thetatHdef} and \eqref{l1} that the pair $(\theta_{s},\theta_H)$ is only well-defined up to the addition of integer combinations  of the vectors $(\omega_i,-\beta_i)$. This matches the indeterminacy in the choice of path in the formulae \eqref{ths} which should be taken to be anti-invariant under the covering involution $\sigma$.
\end{remark}
   
Recall from \eqref{poisson} that the Poisson structure on $M$ is defined by the condition \begin{equation}\{z_1,z_2\}=2\pi i \, \<\gamma_1,\gamma_2\>=4\pi i.\end{equation} Inverting this gives the symplectic form  
\begin{equation}\label{handy} \Omega_0=-\frac{1}{4\pi i} \cdot dz_1\wedge dz_2 = \frac{1}{4\pi i}  (\omega_1\beta_{2}-\omega_2\beta_{1})\cdot ds\wedge dH=ds\wedge dH,\end{equation}
where we used  \eqref{fink} and \eqref{l1}.

\subsection{Abelian holonomy map}

Denote by $\Xi$ the change of coordinates $(s,H,q,r)\mapsto(s,H,\theta_s,\theta_H)$.
Applying the chain rule  to Lemma \ref{thm:lemmathetaIII} gives
	\begin{equation}\label{eq:dstodthetaIII}
	\Xi_*\left(\frac{\partial}{\partial s} \right)=\frac{\partial}{\partial s}-t\kappa_3 \frac{\partial}{\partial\theta_s}+\left(
 -\frac{t}{4} \int_{(q,-p)}^{(q,p)} \frac{dx}{xy}+t^2\kappa_2 -\frac{rt}{p} \right)\frac{\partial}{\partial\theta_H},
\end{equation}
\begin{equation}\label{eq:dqtodthetaIII}
\Xi_* \Big(\frac{\partial}{\partial H}\Big)=\frac{\partial}{\partial H}-\kappa_4 \frac{\partial}{\partial \theta_s} + \Big(t \kappa_3 -\frac{r}{qp}\Big)\frac{\partial}{\partial \theta_H},\end{equation}
\begin{equation}\Xi_* \Big(\frac{\partial}{\partial q}\Big)=\frac{1}{2q^2p} \frac{\partial}{\partial \theta_s} +\frac{2r-2trq^2-tq^2}{2pq^3}\frac{\partial}{\partial \theta_H},\qquad \label{eq:drtodthetaIII}
\Xi_* \Big(\frac{\partial}{\partial r}\Big)=-2qp \frac{\partial}{\partial \theta_H},\end{equation}
where we introduced
\begin{equation}\kappa_i:=\frac{1}{8}\int_{(q,-p)}^{(q,p)} \frac{dx}{x^i y^3},\qquad i=2,3,4.\end{equation}
Inverting the relations \eqref{eq:dqtodthetaIII} and \eqref{eq:drtodthetaIII} gives
\begin{equation}
\label{two2}\Xi_*^{-1} \Big(\frac{\partial}{\partial \theta_s}\Big)={2q^2p} \frac{\partial}{\partial q} +\frac{2r-2trq^2-tq^2}{2pq^2}\frac{\partial}{\partial r},\qquad 
\Xi_*^{-1} \Big(\frac{\partial}{\partial \theta_H}\Big)=-\frac{1}{2qp}\frac{\partial}{\partial r}.\end{equation}

Fix a point $(t,H)\in M$ and a basis $(\gamma_1,\gamma_2)\in H^1(\Sigma^0,\bZ)^-$ as above.

\begin{lemma}\label{abhol}
      The  map
    \begin{equation}\label{map} \big\{(q,p)\in (\bC^*)^2: p^2=Q_0(q)\big\}\times \big\{r\in \bC\big\}\to \big\{(e^{\theta_1},e^{\theta_2})\big\}\in (\bC^*)^2\end{equation}
    defined by the equations \eqref{nf}, \eqref{nff} and \eqref{pers} is  an unramified double cover.
\end{lemma}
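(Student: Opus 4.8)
The plan is to check the two ingredients of the assertion separately: that the map is a local biholomorphism (unramified), and that it is exactly two-to-one, the fibres being the orbits of the involution \eqref{inv1}.

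For unramifiedness I would work over a fixed point $(t,H)\in M$ and factor the map, on small open sets where branches of the $\theta_i$ are chosen, as $(q,r)\mapsto(\theta_s,\theta_H)\mapsto(\theta_1,\theta_2)\mapsto(e^{\theta_1},e^{\theta_2})$. The last arrow is a covering map of $\bC^2$ onto $(\bC^*)^2$, hence étale; the middle arrow is the linear isomorphism of \eqref{eq:thetatHdef} whose matrix is the period matrix of $\omega,\beta$ over $\gamma_1,\gamma_2$, which is invertible by the Riemann bilinear relation \eqref{l1}; and the first arrow is étale by \eqref{eq:drtodthetaIII}, since the Jacobian matrix of $(\theta_s,\theta_H)$ in $(q,r)$ is triangular with diagonal entries $1/(2q^2p)$ and $-2qp$, so has determinant $-1/q$, nonzero on the domain where $q\in\bC^*$ and $p\neq 0$. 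Hence the composite is a local biholomorphism.

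For the degree I would first bring in the involution $\iota$ of \eqref{inv1}. A short computation with $H=p^2q^2-tq-q^{-1}$ shows $\iota$ fixes $(t,H)$, so it acts on the fibre $\{(q,p,r):p^2=Q_0(q)\}\times\bC$; it has order two; and it acts freely, since a fixed point would force $q^2t=1$ together with $-tpq^2=p$, hence $p=0$, which is excluded. Next, $\iota$ preserves the map: it is induced by the automorphism $x\mapsto(tx)^{-1}$ of $\bP^1$, which preserves the quadratic differential $Q_0\,dx^{\tensor 2}$, hence lifts to an automorphism of $\Sigma$ (away from $H^2=4t$ a translation by the two-torsion point $\mathrm{AJ}(\infty_\Sigma)$) carrying the differential $\theta$ attached to $(q,p,r)$ to that attached to $\iota(q,p,r)$; as this lift fixes the classes $\gamma_1,\gamma_2\in H_1(\Sigma,\bZ)$, the periods $\theta_i$ are unchanged modulo $2\pi i$, so $(e^{\theta_1},e^{\theta_2})$ is $\iota$-invariant. (This can also be read off directly from Lemma \ref{thm:lemmathetaIII}.) The map therefore descends to an étale map from the free quotient by $\iota$, and it remains to prove this descended map is injective onto its image.

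The injectivity modulo $\iota$ is the heart of the matter. Suppose $(q,p,r)$ and $(q',p',r')$ have the same image. The difference $\theta-\theta'$ of the associated differentials on the genus-one curve $\Sigma$ has simple poles with residues $\pm\tfrac12$ at $(q,\pm p)$ and $(q',\mp p')$, and all its periods over $\gamma_1,\gamma_2$ lie in $2\pi i\bZ$; since by Lemma \ref{nak} the $\gamma_i$ span an index-two subgroup of $H_1(\Sigma,\bZ)$, the periods over a full integral basis lie in $\pi i\bZ$, so $\exp\!\big(2\!\int(\theta-\theta')\big)$ is a single-valued meromorphic function on $\Sigma$ with divisor $(q,p)+(q',-p')-(q,-p)-(q',p')$ (the degenerate cases where some of these points coincide are handled similarly and more easily). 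Abel's theorem together with $\mathrm{AJ}(\sigma R)=-\mathrm{AJ}(R)$ then gives $2\big(\mathrm{AJ}(q,p)-\mathrm{AJ}(q',p')\big)=0$, so $(q',p')$ is one of the four two-torsion translates of $(q,p)$. Feeding this back into the full requirement that $\big(\mathrm{AJ}(q,p)-\mathrm{AJ}(q',p'),\,\ast\big)$ lies in the period lattice governing \eqref{eq:thetatHdef} and \eqref{l1}, and using that $\mathrm{AJ}(\infty_\Sigma)$ is the half-period $\tfrac12\omega_1$ (again via Lemma \ref{nak}), one eliminates two of these translates and is left with $(q',p')\in\{(q,p),\iota(q,p)\}$; the remaining coordinate $r'$ is then pinned down because $\theta_H$ is affine in $r$ with nonzero slope $-2pq$, so $(q',p',r')$ equals $(q,p,r)$ or $\iota(q,p,r)$. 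I expect this last elimination to be the main obstacle: getting the degree to come out to exactly $2$ rather than $4$ rests squarely on the careful bookkeeping of the index-two inclusion $H_1(\Sigma^0,\bZ)^-\hookrightarrow H_1(\Sigma,\bZ)$ from Lemma \ref{nak}, which is precisely what rules out the two spurious two-torsion translates.
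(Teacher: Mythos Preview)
Your treatment of unramifiedness and of the free involution $\iota$ is correct and matches the paper's. The divergence is in the ``degree at most two'' step, and here your argument is both different from and less complete than the paper's.

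The paper does not use Abel's theorem on $\exp\bigl(2\int(\theta-\theta')\bigr)$. Instead it observes that $\theta':=\theta+\tfrac{dx}{2(x-q)}$ has the same periods as $\theta$ (the correction is pulled back from $\bP^1$) but now has \emph{integer} residues, $+1$ at $(q,p)$ and $-1$ at $\infty$. Thus $d-\theta'$ is a genuine holomorphic connection on the degree-zero line bundle $L=\O_\Sigma\bigl((q,p)-\infty\bigr)$, with monodromy $e^{\theta_i}$ around $\gamma_i$. If three points $(q_i,p_i,r_i)$ had the same image, then since $\gamma_1,\gamma_2$ span an index-two subgroup of $H_1(\Sigma,\bZ)$, the monodromies of the associated $(L_i,\partial_i)$ around a full basis take at most two values, so by pigeonhole two of them agree on all of $H_1(\Sigma,\bZ)$; the abelian Riemann--Hilbert correspondence then forces $(L_i,\partial_i)\cong(L_j,\partial_j)$, and injectivity of $(q,p)\mapsto \O_\Sigma((q,p)-\infty)$ on a genus-one curve finishes the argument. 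No case analysis of two-torsion translates is needed.

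Your route via $\exp\bigl(2\int(\theta-\theta')\bigr)$ only yields that $\mathrm{AJ}(q',p')-\mathrm{AJ}(q,p)$ is two-torsion, leaving four candidates; the elimination of the two spurious ones is exactly where you say ``I expect this last elimination to be the main obstacle'' --- and indeed you do not actually perform it. The sentence beginning ``Feeding this back into the full requirement\ldots'' is not a proof: what the pairing $(\mathrm{AJ}(q,p)-\mathrm{AJ}(q',p'),\ast)$ and ``the period lattice governing \eqref{eq:thetatHdef}'' are is never made precise, and your identification of $\mathrm{AJ}(\infty_\Sigma)$ as $\tfrac12\omega_1$ disagrees with Lemma~\ref{nak}, whose proof places $\infty$ at $\tfrac12(\omega_1+\omega_2)$. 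The argument can very likely be completed along your lines, but as written there is a genuine gap; the paper's line-bundle-with-connection plus pigeonhole trick sidesteps it entirely.
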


\begin{proof}  The degree 0 divisor $(q,p)-\infty$ on $\Sigma$ defines a line bundle whose local sections are meromorphic functions on $\Sigma$ with a zero at $(q,p)$ and at worst a simple pole at  $\infty$.
    Consider the form \begin{equation}
    \label{thnew}
       \theta'=\theta+\frac{dx}{2(x-q)}
    \end{equation} on $\Sigma$. It has a simple pole at $(q,p)$ with residue $+1$, a simple pole  at  $\infty$ with residue $-1$, and no other poles. It follows that $\partial=d-\theta'$ defines a holomorphic connection on $L$. Since  $\theta'-\theta$  is pulled back from $\bP^1$, the form $\theta'$ has the same periods as $\theta$. Thus the monodromy of the connection $(L,\partial)$   around the cycle $\gamma_i\in H_1(\Sigma^0,\bZ)^-$ is given by multiplication by $\exp(\theta_i)\in \bC^*$.

    Suppose there are three points $(q_i,p_i,r_i)$ with  $i=1,2,3$ in the source of \eqref{map} which are mapped to the same point. Since the cycles $\gamma_1,\gamma_2$ span an index 2  subgroup of $H_1(\Sigma,\bZ)$, at least two of the points (without loss of generality corresponding to $i=1,2$) must correspond to line bundles with connection $(L_i,\partial_i)$ which have identical monodromy around all cycles in $H_1(\Sigma,\bZ)$. The abelian Riemann-Hilbert correspondence tells us that such line bundles with connection $(L_i,\nabla_i)$ are isomorphic. But it is a well-known fact (easily proved using the Riemann-Roch theorem) that the line bundles $L_i$ defined by different points $(q_i,p_i)\in \Sigma$ are all non-isomorphic. So we conclude that $(q_1,p_1)=(q_2,p_2)$. But the formulae \eqref{eq:thetatHdef} - \eqref{ths} then imply that also $r_1=r_2$. Indeed $\omega_1,\omega_2\in \bC^*$ do not lie on the same ray, so changing $r$ cannot move both $\theta_i$ by an imaginary number.

  The derivative of the map \eqref{map} is computed by \eqref{eq:dstodthetaIII} and \eqref{eq:dqtodthetaIII} together with \eqref{eq:thetatHdef} and is everywhere invertible.  Thus \eqref{map} is a finite unramified cover of degree at most 2. Now it is easy to check that our map is invariant under the  involution
    \begin{equation}\label{inv}
         (q,p,r)\mapsto \left((qt)^{-1},-tpq^2 ,-(r+\half)\right),
    \end{equation}
    and this involution is moreover fixed-point free. Therefore the map is a double cover.
\end{proof}

\subsection{Isomonodromic flows in new coordinates}

We now write the isomonodromic flows in coordinates $(s,H,\theta_s,\theta_H)$.

\begin{proposition}
\label{propIII}
  The push-forward of the isomonodromic flows along $\Xi$ can be written in the form
  \begin{equation}\label{11}
    \Xi_*(w_1)=\frac{\partial}{\partial s}+\frac{1}{\e}\frac{\partial}{\partial\theta_s}+\frac{\partial^2K}{\partial\theta_s\partial\theta_H}\frac{\partial}{\partial\theta_s}-\frac{\partial^2K}{\partial\theta_s^2}\frac{\partial}{\partial\theta_H},
\end{equation}
\begin{equation}\label{12}
    \Xi_*(w_2)=\frac{\partial}{\partial H}+\frac{1}{\e}\frac{\partial}{\partial\theta_H}+\frac{\partial^2K}{\partial\theta_H^2}\frac{\partial}{\partial\theta_s}-\frac{\partial^2K}{\partial\theta_s\partial\theta_H}\frac{\partial}{\partial\theta_H},
\end{equation}
where $K\colon X^\hash\to \bC$ is a single locally-defined function.
\end{proposition}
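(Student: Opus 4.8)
The plan is to show that the two push-forward vector fields $\Xi_*(w_1)$ and $\Xi_*(w_2)$, once written out explicitly in the coordinates $(s,H,\theta_s,\theta_H)$, have the structure of flows \eqref{above} for a suitable Plebański-type function $K$. The first step is purely computational: apply the chain-rule formulae \eqref{eq:dstodthetaIII}--\eqref{eq:drtodthetaIII} (together with their inverses \eqref{two2}) to the explicit vector fields $w_1$ in \eqref{oneblue} and $w_2$ in \eqref{two1}. Since $w_1 = t\frac{\partial}{\partial t} + \frac{2pq^2}{\e}\frac{\partial}{\partial q} + 2qr\frac{\partial}{\partial q} - \frac{1}{2\e pq^2}(2rq^2t - 2r + q^2t)\frac{\partial}{\partial r}$ and $\Xi$ acts as the identity on $s$ and $H$, the $\frac{\partial}{\partial s}$ and $\frac{\partial}{\partial H}$ components come out immediately. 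The coefficient of $\frac{1}{\e}$ in front of $\frac{\partial}{\partial \theta_s}$ in $\Xi_*(w_1)$ must be checked to equal exactly $1$: this is where the specific shift $p \mapsto p + \e rq^{-1}$ defining the reference connection pays off, and it should follow from combining the $\frac{2pq^2}{\e}\frac{\partial}{\partial q}$ term of $w_1$ with $\Xi_*(\partial/\partial q)$ in \eqref{eq:drtodthetaIII}. Similarly the $\frac{1}{\e}\frac{\partial}{\partial\theta_H}$ term in $\Xi_*(w_2)$ should come from $-\frac{1}{2\e pq}\frac{\partial}{\partial r}$ composed with $\Xi_*(\partial/\partial r) = -2qp\,\partial/\partial\theta_H$, giving coefficient $1$ on the nose.

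Having extracted the $\frac{\partial}{\partial s}$, $\frac{\partial}{\partial H}$ and $\frac{1}{\e}$-parts, what remains are the $\e$-independent tangent-to-the-fibre parts; call them $V_1 = a_1\frac{\partial}{\partial\theta_s} + b_1\frac{\partial}{\partial\theta_H}$ and $V_2 = a_2\frac{\partial}{\partial\theta_s} + b_2\frac{\partial}{\partial\theta_H}$, where $a_i, b_i$ are functions on $X^\hash$ expressed through the $\kappa_i$ integrals, $p$, $q$, $r$, and the period $\int_{(q,-p)}^{(q,p)} \frac{dx}{xy}$. The content of the proposition is that there is a single function $K$ with
\[
a_1 = \frac{\partial^2 K}{\partial\theta_s\partial\theta_H}, \quad b_1 = -\frac{\partial^2 K}{\partial\theta_s^2}, \quad a_2 = \frac{\partial^2 K}{\partial\theta_H^2}, \quad b_2 = -\frac{\partial^2 K}{\partial\theta_s\partial\theta_H}.
\]
In particular $a_1 = -b_2$ is an identity that must hold between the computed coefficients; this is a nontrivial consistency check which I expect to follow from the symmetry of mixed partials once one recognises $a_1, b_1, a_2, b_2$ as first derivatives of a common potential. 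The cleanest way to produce $K$ is to observe that $w_1$ and $w_2$ are commuting isomonodromic flows (they preserve the generalised monodromy, hence their bracket does too, and a dimension count on the two-dimensional space of flows forces $[w_1,w_2]=0$), so the $1$-form $\alpha := a_1\,d\theta_s + a_2\,d\theta_H$ restricted to each fibre $\{s,H=\text{const}\}$ — together with the compatible pair for $b$ — is closed, and integrating twice in the fibre directions yields $K$, with the $s,H$-dependence fixed by matching the remaining equations. Alternatively, one can quote directly that the flows \eqref{above2} form a Joyce structure (as conjectured in \cite[Section 4]{JT} and to be proved in \cite{Z}), so that the existence of the Plebański function $W$ — and hence, after the change of variables to $(z_i,\theta_j)$, of $K$ as its restriction/reparametrisation — is guaranteed; but the point here is to exhibit $K$ by the direct computation that will in the next subsection give the explicit formula for $W$ in Theorem (i).

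The main obstacle is the verification of the cross-consistency $a_1 + b_2 = 0$ and the existence of the potential: a priori the four coefficients $a_1,b_1,a_2,b_2$ are unrelated functions of five variables on $X^\hash$, and it is only because of the heavenly equation \eqref{plebeq} — which is itself a consequence of flatness/compatibility of the isomonodromic deformation — that they assemble into second derivatives of one $K$. Concretely, I expect the hard step to be showing $\partial a_1/\partial\theta_s = \partial(-b_1)/\partial\theta_H$ and the analogous relations, i.e.\ that the fibre-tangent parts $V_1, V_2$ of $\Xi_*(w_1), \Xi_*(w_2)$ define a symplectic (Hamiltonian, in the appropriate sense) pair. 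This can be reduced to the commutativity $[\Xi_*(w_1),\Xi_*(w_2)]=0$, which transports the established commutativity of $w_1, w_2$; expanding that bracket and separating powers of $\e$ yields precisely the integrability conditions that let one introduce $K$. Once $K$ exists locally, its descent to $X^\hash$ (rather than $X$) follows because, by the residue-$\pm\frac12$ property of $\theta$ recorded before \eqref{xi2}, all the coefficient functions depend on $\theta_s,\theta_H$ only through $e^{\theta_1}, e^{\theta_2}$, hence through $(s,H,e^{\theta_s},e^{\theta_H})$, so a primitive can be chosen with the same periodicity.
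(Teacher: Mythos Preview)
Your outline is broadly correct and follows the paper's approach in its first half: compute $\Xi_*(w_i)$ via the chain-rule formulae \eqref{eq:dstodthetaIII}--\eqref{eq:drtodthetaIII}, strip off the $\partial/\partial s$, $\partial/\partial H$, and $\e^{-1}$ pieces, and study the remaining vertical parts. Where you diverge is in how to establish the existence of $K$. The paper simply writes out the three coefficients explicitly --- call them $a,b,c$ with $a,b$ appearing in $\Xi_*(w_1)$ and $b,c$ in $\Xi_*(w_2)$ --- and observes that your identity $a_1=-b_2$ is not a separate check at all: both are literally the same expression $b=r/(qp)-t\kappa_3$ once the computation is done. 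The remaining content is then the two integrability conditions $\partial_{\theta_s}b=\partial_{\theta_H}a$ and $\partial_{\theta_s}c=\partial_{\theta_H}b$, which the paper verifies by direct differentiation using the inverse relations \eqref{two2}. This is entirely elementary and avoids any appeal to commutativity or to the general theory.

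Your commutativity argument is a legitimate alternative for the two differential integrability conditions: expanding $[\Xi_*(w_1),\Xi_*(w_2)]=0$ at order $\e^{-1}$ does yield exactly $\partial_{\theta_s}a_2=\partial_{\theta_H}a_1$ and $\partial_{\theta_s}b_2=\partial_{\theta_H}b_1$, which combine with $a_1=-b_2$ to give what is needed. But note the logical gap in your treatment of $a_1=-b_2$ itself: you write that it ``should follow from the symmetry of mixed partials once one recognises $a_1,b_1,a_2,b_2$ as first derivatives of a common potential'' --- this is circular, since the existence of that potential is precisely what you are trying to prove. Commutativity at any order in $\e$ will not hand you this algebraic identity; it must be read off from the explicit computation (where, as noted, it is immediate). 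Drop that sentence, do the computation, and either route --- direct verification as in the paper or your commutativity argument --- then closes the proof.
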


\begin{proof}
The formulae \eqref{oneblue} - \eqref{two1} for the isomonodromic flows and the relations \eqref{eq:dstodthetaIII} - \eqref{eq:drtodthetaIII} give

\begin{gather}
\Xi_*(w_1)=\frac{\partial}{\partial s}+\frac{1}{\e}\frac{\partial}{\partial \theta_s}+b\frac{\partial}{\partial \theta_s} - a\frac{\partial}{\partial \theta_H},\\
\Xi_*(w_2)=\frac{\partial}{\partial H}+\frac{1}{\e}\frac{\partial}{\partial \theta_H}+c\frac{\partial}{\partial \theta_s} -b\frac{\partial}{\partial \theta_H},\end{gather}
where we introduced
\begin{equation}
\label{jimmy}
     a=\frac{t}{4} \int_{(q,-p)}^{(q,p)} \frac{dx}{xy}-t^2\kappa_2+\frac{2rt}{p}+\frac{2r^2(tq^2-1)}{pq^2}, \qquad b=\frac{r}{qp}-t\kappa_3,\qquad 
c=-\kappa_4. \end{equation}

Using \eqref{two2} we find that 
\begin{equation}
\label{sno}
    \frac{\partial b}{\partial \theta_s}=\frac{\partial a}{\partial \theta_H}, \qquad \frac{\partial c}{\partial \theta_s}=\frac{\partial b}{\partial \theta_H}.
\end{equation}
This implies the existence of a single locally-defined  function $K$ such that
    \begin{equation}\label{park}
\frac{\partial^2 K}{\partial \theta_s^2}=a , \qquad \frac{\partial^2 K}{\partial \theta_s \partial \theta_H}=b,\qquad \frac{\partial^2 K}{\partial \theta_H^2}=c,\end{equation}
and the result follows. 
\end{proof}

Applying the operators \eqref{two2} to the expressions \eqref{jimmy} gives
\begin{equation}\label{one'}\frac{\partial^3 K}{\partial\theta_s^3} =-\frac{3 t^2}{2 p^2}+qt +\frac{2 r t}{p^2q^2} \left(3-3 tq^2+2 p^2q^3\right)+\frac{2 r^2 }{p^2q^4}\left(-1+2Hq+10tq^2+2Htq^3 -t^2 q^4 \right),
\end{equation}
\begin{equation}
\label{one}
\frac{\partial^3 K}{\partial \theta_s^2 \partial \theta_H}=-\frac{t}{p^2 q}+\frac{2r}{p^2q^3}\left(1- tq^2 \right), \qquad  \frac{\partial^3 K}{\partial \theta_s \partial \theta_H^2}=-\frac{1}{2q^2p^2}, \qquad \frac{\partial^3 K}{\partial \theta_H^3}=0.\end{equation}

\subsection{ Pleba{\'n}ski function}\label{sec:PIIIPleb}

While $K$ is the generating function for the flows $\Xi_*(w_i)$, it is not a simple function of our original variables $(p,q,t,r)$. In fact, integrating the above equations it is easy to see that $K$ will be expressed in terms of complicated combinations of elliptic functions. 

\begin{theorem}\label{thm:Pleba{'n}skiIII}
	In the co-ordinates $(z_1,z_2,\theta_1,\theta_2)$ the isomonodromy flows take the form
    \begin{equation}
\label{13}
	\frac{\partial}{\partial z_i}+\frac{1}{\e}\cdot \frac{\partial}{\partial\theta_i}+4\pi i\cdot \frac{\partial^2 W}{\partial\theta_i\partial\theta_1}\cdot \frac{\partial}{\partial\theta_2}- 4\pi i\cdot\frac{\partial^2 W}{\partial\theta_i\partial\theta_2}\cdot \frac{\partial}{\partial\theta_1},
\end{equation}
    where the Pleba{\'n}ski function $W$ is given by
    \begin{equation}\label{eq:Pleba{'n}skiPIII}
W=\frac{pq}{6(H^2-4t)}\big(tq+(H+6tq)r+(6H+12tq)r^2+8p^2q^2r^3\big).\end{equation}
\end{theorem}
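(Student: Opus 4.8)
The plan is to relate the generating function $K$ of Proposition \ref{propIII}, which is written in the mixed co-ordinates $(s,H,\theta_s,\theta_H)$, to the sought Pleba\'nski function $W$ in the canonical co-ordinates $(z_1,z_2,\theta_1,\theta_2)$, and then to verify the explicit rational formula \eqref{eq:Pleba{'n}skiPIII} by direct differentiation. First I would record the linear change of co-ordinates. Since $(z_1,z_2)$ and $(s,H)$ are related on the base by the period matrix $(\beta_i,\omega_i)$ via \eqref{fink}, and the fibre co-ordinates transform contragradiently by \eqref{eq:thetatHdef}, the flows \eqref{11}--\eqref{12} get mapped under this linear change into the shape \eqref{13}; comparing the $\partial/\partial\theta_i$-coefficients one finds that the second $\theta$-derivatives of $W$ are obtained from those of $K$ by conjugating the Hessian $(\partial^2 K/\partial\theta_\bullet\partial\theta_\bullet)$ by the period matrix and rescaling by the factor $\langle\omega,\beta\rangle = 4\pi i$ from \eqref{l1}, together with the intersection number $\langle\gamma_1,\gamma_2\rangle=2$. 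Concretely, the third-order derivatives of $W$ with respect to $\theta_1,\theta_2$ are linear combinations, with coefficients built from $\omega_i,\beta_i$, of the quantities \eqref{one'}--\eqref{one}; but because the operators $\Xi_*^{-1}(\partial/\partial\theta_s)$, $\Xi_*^{-1}(\partial/\partial\theta_H)$ in \eqref{two2} are themselves $\bC$-linear combinations of $\partial/\partial q,\partial/\partial r$ with coefficients rational in $(q,p,t,r)$, the net effect is that $\partial^2 W/\partial\theta_i\partial\theta_j$ satisfies a first-order linear PDE system in the variables $(q,r)$ (with $t,H$ as parameters) whose right-hand sides are the explicit rational functions above.

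The cleanest route is therefore to start from the candidate \eqref{eq:Pleba{'n}skiPIII} and check it satisfies exactly this system. I would proceed in three steps. Step one: using \eqref{eq:thetatHdef} express $\partial/\partial\theta_1,\partial/\partial\theta_2$ in terms of $\partial/\partial\theta_s,\partial/\partial\theta_H$, hence via \eqref{two2} in terms of $\partial/\partial q,\partial/\partial r$; this gives formulae $\partial/\partial\theta_i = \alpha_i(q,p,t,r)\,\partial/\partial q + \delta_i(q,p,t,r)\,\partial/\partial r$ valid when acting on functions pulled back from $X^\hash$, where $\alpha_i,\delta_i$ involve the periods $\omega_i,\beta_i$ only through the combination needed, and the period-dependence drops out of the final normalised answer exactly because of \eqref{l1a}--\eqref{lordb}. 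Step two: apply these operators twice to the explicit $W$ of \eqref{eq:Pleba{'n}skiPIII} and check that the resulting $\partial^2 W/\partial\theta_i\partial\theta_j$, when plugged into \eqref{13}, reproduce the push-forwards $\Xi_*(w_i)$ of the isomonodromic flows \eqref{oneblue}--\eqref{two1} as computed through \eqref{eq:dstodthetaIII}--\eqref{eq:drtodthetaIII}. Equivalently, and more economically, verify that the mixed third derivatives $\partial^3 W/\partial\theta_i\partial\theta_j\partial\theta_k$ obtained from \eqref{eq:Pleba{'n}skiPIII} match, after the linear change of basis, the expressions \eqref{one'}--\eqref{one} for $\partial^3 K/\partial\theta_\bullet^3$. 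Step three: check the normalisation/integration constants — the flows \eqref{above} determine $W$ only up to $a(z_i)+\sum_j b_j(z_i)\theta_j$, so one pins down \eqref{eq:Pleba{'n}skiPIII} uniquely by imposing the homogeneity \eqref{re2}, the oddness \eqref{st2}, and the single-valuedness \eqref{per}; the given formula is manifestly odd in $r$ up to the $W\mapsto -W$ symmetry under the involution \eqref{inv1} once one tracks how $(q,p,r)\mapsto((qt)^{-1},-tpq^2,-(r+\tfrac12))$ acts, and it has the correct weight $-1$ under $(t,H)\mapsto(\lambda^{-2}t,\lambda^{-1}H)$ — wait, more carefully under the scaling that rescales $z_i\mapsto\lambda z_i$.

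The main obstacle I expect is bookkeeping rather than conceptual: the operators $\partial/\partial\theta_i$ act on $q,p,r$ in a way that mixes them through $p^2 = Q_0(q)$ and through the $r$-dependent terms, so applying them twice to the cubic-in-$r$ numerator of \eqref{eq:Pleba{'n}skiPIII} generates many terms that must be reorganised using $p^2q^2 = Hq+tq^2+1$ (equivalently \eqref{eq:PIIIHam}) and the discriminant relation $H^2-4t$ appearing in the denominator. The only genuinely delicate point is to confirm that the elliptic-integral pieces $\kappa_2,\kappa_3,\kappa_4$ and $\int_{(q,-p)}^{(q,p)} dx/(xy)$ that appear in $K$ and in \eqref{jimmy} combine — after the change of co-ordinates driven by the Riemann bilinear relations of Lemma \ref{thm:lemmathetaIII} — into something rational; but this is exactly what the period identities \eqref{l1}--\eqref{lordb} guarantee, since $W$ is assembled from $K$ by contracting with the period matrix, and the transcendental terms in $a,b,c$ are precisely the ones that get multiplied by the corresponding transcendental periods $\omega_i,\beta_i$ in a way that telescopes. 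In practice I would avoid carrying the elliptic integrals at all: differentiate the ansatz \eqref{eq:Pleba{'n}skiPIII} directly in $(z_1,z_2,\theta_1,\theta_2)$ using \eqref{fink} and \eqref{eq:thetatHdef}, produce the four vector fields \eqref{13}, and check they agree with $\Xi_*(w_1),\Xi_*(w_2)$ and their $\bZ$-span after transport to the $(z_i,\theta_j)$ frame — a finite rational identity in $(q,p,t,r)$ modulo $p^2 = Q_0(q)$, which a computer algebra system settles immediately, and which constitutes the proof.
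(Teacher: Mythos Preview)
Your proposal contains a genuine gap at Step two. You assert that the third $\theta$-derivatives of the candidate $W$ should match, after the linear change of fibre basis, the third derivatives \eqref{one'}--\eqref{one} of $K$. This is false. A direct computation using $\partial/\partial\theta_H=-(2pq)^{-1}\partial/\partial r$ from \eqref{two2} gives
\[
\frac{\partial^3 W}{\partial\theta_H^3}=-\frac{1}{H^2-4t}\neq 0=\frac{\partial^3 K}{\partial\theta_H^3},
\]
so the two functions do not agree at third order. The underlying reason is that under a symplectic change of base co-ordinates the Pleba\'nski function does \emph{not} simply transform as a scalar: by \cite[Prop.~4.2]{A2}, $K$ and $W$ differ by a cubic polynomial in the $\theta$-variables with $z$-dependent coefficients. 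Your claim that the transcendental pieces $\kappa_i$ in $a,b,c$ ``telescope'' against the periods $\omega_i,\beta_i$ via the Riemann bilinear relations is not the mechanism at work here and cannot be, since the $\kappa_i$ are Abel-type integrals with endpoints $(q,\pm p)$, not periods over $\gamma_1,\gamma_2$.

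The paper's proof fixes this by working one order higher: since $K-W$ is cubic in $\theta$, their \emph{fourth} $\theta$-derivatives coincide, and these are obtained from \eqref{one'}--\eqref{one} by one further application of the rational operators \eqref{two2}. One then checks that the fourth $\theta$-derivatives of the candidate \eqref{eq:Pleba{'n}skiPIII} match --- this is now a genuine finite rational identity in $(t,H,q,r)$ modulo $p^2q^2=tq^2+Hq+1$, exactly the kind of check you envisage, but at fourth rather than third order. To close the argument one must still rule out a nonzero cubic difference between $W$ and the candidate; this is where periodicity enters. By Lemma~\ref{abhol} the candidate is a function on $X^\hash$, hence its second $\theta$-derivatives are periodic, and the same holds for $W$ by \eqref{go2}; but a nonzero cubic in $\theta$ has non-periodic second derivatives, forcing the difference to be at most linear, which is the intrinsic ambiguity in $W$ anyway.
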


\begin{proof}
Let us temporarily introduce variables $u_1=H\cdot \sqrt{4\pi i}$ and $u_2=s\cdot \sqrt{4\pi i}$. We also set $\phi_1=\theta_H\cdot \sqrt{4\pi i}$ and $\phi_2=\theta_s\cdot \sqrt{4\pi i}$. Then by \eqref{handy},  the change of variables $(z_1,z_2) \mapsto (u_1,u_2)$ is symplectic. Dividing  \eqref{11} - \eqref{12} by $\sqrt{4\pi i}$ and making the trivial change from $(s,H)\mapsto (u_1,u_2)$ shows that the isomonodromy flows are spanned by the vector fields
\begin{equation}\frac{\partial}{\partial u_i}+\frac{1}{\e}\cdot \frac{\partial}{\partial\phi_i}+4\pi i\cdot  \frac{\partial^2 K}{\partial\phi_i\partial\phi_1}\cdot \frac{\partial}{\partial\phi_2}- 4\pi i \cdot \frac{\partial^2 K}{\partial\phi_i\partial\phi_2}\cdot \frac{\partial}{\partial\phi_1}.\end{equation}
Applying  \cite[Prop. 4.2]{A2} to the variable change $(z_1,z_2) \mapsto (u_1,u_2)$ then gives the flows in the form \eqref{13}.
Note that the resulting function $W$ is only well-defined up to the addition of linear functions in the $\theta$ variables.

According to \cite[Prop. 4.2]{A2} the functions  $K$ and $W$ differ by a cubic function of the $\theta$ variables. In particular,  the fourth derivatives of $K$ and $W$ in the $\theta$ variables   necessarily coincide.   Now, the fourth derivatives of $K$ can be computed by applying the operators \eqref{two2} to \eqref{one'} - \eqref{one}, and it is then possible to check that these coincide with the fourth derivatives of the expression $\tilde{W}$ on the right-hand side of \eqref{eq:Pleba{'n}skiPIII}. Thus $\tilde{W}$  can differ from  $W$  by at most quartic functions of the $\theta$ variables.  But the second derivatives of $W$ in the $\theta$ variables  are necessarily  periodic in $\theta$. Since $\tilde{W}$ is also periodic in $\theta$  it must coincide with $W$ up to the addition of linear functions in the $\theta$ variables, which is the required statement, since $W$ was only well-defined up to this indeterminacy.
\end{proof}

We can directly check the properties \eqref{re2} - \eqref{per}  of the Pleba{\'n}ski function. From the change of variables  between  the co-ordinates $(z_1,z_2,\theta_1,\theta_2)$ and $(t,H,q,r)$, it is easy to see that  there is an equality of vector fields on $X^\hash$ 
\begin{equation}
    \label{euler}E:=z_1\frac{\partial}{\partial z_1}+ z_2\frac{\partial}{\partial z_1}=4\frac{\partial}{\partial s}+2H\frac{\partial}{\partial H}  -2q\frac{\partial}{\partial q}.\end{equation}
    Thus \eqref{st2} is the statement that $W$ is homogeneous of weight $-1$ under rescaling $(t,H,q,r)$ with weights $(4,2,-2,0)$. Note that by \eqref{eq:PIIIHam} this also involves rescaling the implicit variable $p$ with weight $3$.
Secondly, \eqref{ths} shows that the map $(q,p)\mapsto (q,-p)$ changes the sign of $(\theta_1,\theta_2)$. Thus \eqref{re2} corresponds to the fact that $W$ is an odd function of the variable $p$. Finally $W$ has the periodicity  property \eqref{per} because by Lemma \ref{abhol} it is a function of $(z_1,z_2,e^{\theta_1},e^{\theta_2})$.


\subsection{Pleba{\'n}ski function near $\theta=0$}

We would like to understand the behaviour of the potentials $W$ and $K$ near the zero section $M\subset T_M$ defined by $\theta_t=\theta_H=0$. This is not completely straightforward, because the potentials are expressed in the variables $(t,H,q,r)$ rather than $(t,H,\theta_t,\theta_H)$. To study this limit, we will first uniformise the spectral curve $\Sigma$ defined by \eqref{eq:SigmaPIII}. Observe that if we introduce
\begin{equation}
X=tx+\tfrac{1}{3}H,\qquad	Y=2tx^2y,
\end{equation}
then $\Sigma$   is brought into the standard Weierstrass form
\begin{equation}
	Y^2=4X^3-g_2X-g_3,\qquad g_2=\frac{1}{3}\left(4H^2-12t\right),\qquad g_3=\frac{4H}{27}\left(9t-2H^2 \right).
\end{equation}
It  can then be uniformised by writing
\begin{equation}
	X=\wp(u),\qquad Y=\wp'(u),
\end{equation}
with $\wp(u)$ the Weierstrass $\wp$-function.
The differentials $\omega,\,\beta$  are given by
\begin{equation}
	\omega=\frac{\dd x}{2x^2y}=\frac{\dd X}{Y}=\dd u,\qquad \beta=\frac{t \dd x}{2xy}=xt\omega=\frac{1}{3}\left(3\wp(u)-H\right)du.
\end{equation}

Denote by $v\in \bC$ a point  corresponding to the point $(q,p)\in\Sigma$, so that
\begin{equation}\label{eq:pqvIII}
	q=\frac{1}{3t}\left(3\wp(v)-H \right),\qquad p=\frac{9t\wp'(v)}{2\left(3\wp(v)-H \right)^2}.
\end{equation}
Then we have
\begin{equation}\label{eq:thetasuni}
	\theta_s=\frac{1}{2}\int_{(q,-p)}^{(q,p)}\omega=v,
\qquad 
	\theta_H=-\frac{1}{2}\int_{(q,-p)}^{(q,p)}\beta-2pqr=\zeta(v)+\frac{Hv}{3}-\frac{3r\wp'(v)}{3\wp(v)-H},
\end{equation}
where $\zeta(v)$ denotes the Weierstrass $\zeta$-function.

We recall from Section \ref{summy} that, assuming the Pleba{\'n}ski function $W$ is regular along the locus $\theta_s=\theta_H=0$, there is a locally-defined function $S=S(t,H)$ such that
\begin{equation}
    W=\left(\frac{\partial S}{\partial t}\right)\theta_t+\left(\frac{\partial S}{\partial H} \right)\theta_H+O(\theta^3)
\end{equation}
for small $\theta_t$ and $\theta_H$. Recall also the definition \eqref{st} of the linear Joyce connection $\nabla^J$.

\begin{theorem}\label{thm:Pleba{'n}skiOperIII}
\begin{itemize}
    \item[(i)]The Pleba{\'n}ski function $W$ is regular on the locus $\theta_s=\theta_H=0$.
    
    \item[(ii)] There is an identity
     \begin{equation}S= \log (H^2-4t)^{-\frac{1}{24}}.\end{equation}
     
 \item[(iii)] The co-ordinates $(s,H)$ on $M$ are flat for the linear Joyce connection.
 \end{itemize}
\end{theorem}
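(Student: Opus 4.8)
The plan is to carry out all three parts by pushing the rational expression \eqref{eq:Pleba{'n}skiPIII} for $W$ and the generating function $K$ through the Weierstrass uniformisation \eqref{eq:pqvIII}--\eqref{eq:thetasuni}. Write $\Delta=H^2-4t$ and $D=3\wp(v)-H$, so that $q=D/3t$, $\theta_s=v$, $pq=3\wp'(v)/2D$ and $p^2q^2=9\wp'(v)^2/4D^2$. First I would invert \eqref{eq:thetasuni} to get $r=A(v)-B(v)\theta_H$ with $A(v)=\tfrac13\bigl(D/\wp'(v)\bigr)\bigl(\zeta(v)+\tfrac13 Hv\bigr)$ and $B(v)=D/3\wp'(v)$; since $D/\wp'(v)$ has a simple zero at $v=0$ and $\zeta(v)+\tfrac13 Hv$ has a simple pole there, both $A$ and $B$ are holomorphic near $v=0$, and a short Laurent computation gives $A(0)=-\tfrac12$ (in fact $1+2A=O(v^4)$). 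Thus $r$ is holomorphic in $(v,\theta_H,s,H)$ near $v=0$ and $\rho:=1+2r$ vanishes on $\{v=0\}$, hence is divisible by $\theta_s=v$. Substituting $r=\tfrac12(\rho-1)$ into \eqref{eq:Pleba{'n}skiPIII} and using the identity $\wp'(v)^2=\tfrac{4}{27}D\bigl(D^2+3DH+9t\bigr)$ (which follows from $\wp'^2=4\wp^3-g_2\wp-g_3$ and the stated formulae for $g_2,g_3$), the bracket collects cleanly into
\begin{equation}
W=\frac{\wp'(v)}{4(H^2-4t)}\left(\frac{24t\,r^3}{D^2}+\frac{H\rho(1-\rho)(1-2\rho)}{2D}+\frac{\rho^3}{3}\right),\qquad D=3\wp(v)-H,\ \rho=1+2r.
\end{equation}
Each summand is now manifestly holomorphic near $\theta_s=v=0$: $\wp'(v)/D^2$ is holomorphic (simple zero), $\wp'(v)/D$ has at worst a simple pole there which is killed by the factor $\rho$, and $\wp'(v)$ has a pole of order three killed by $\rho^3$. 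This proves (i) (and in fact regularity along the whole locus $\theta_s=0$).

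For (ii), once (i) holds the function $S$ with $\partial W/\partial\theta_i|_{\theta=0}=\partial S/\partial z_i$ exists by the discussion of Section~\ref{summy}, which in the coordinates $(s,H)$ says $\partial_s S=\partial W/\partial\theta_s|_{\theta=0}$ and $\partial_H S=\partial W/\partial\theta_H|_{\theta=0}$. I would read these off from the organised form of $W$: near the zero section $r=-\tfrac12+\tfrac12\theta_s\theta_H+O(\mathrm{deg}\,4)$, $\rho=\theta_s\theta_H+O(\mathrm{deg}\,4)$, and with $\wp'(v)/D^2\sim-\tfrac29\theta_s$, $\wp'(v)/D\sim-\tfrac2{3\theta_s}$ one finds $W=\tfrac{t}{6\Delta}\theta_s-\tfrac{H}{12\Delta}\theta_H+O(\theta^3)$; equivalently one can evaluate $\partial W/\partial\theta_H=-\tfrac1{2pq}\,\partial W/\partial r$ from \eqref{two2} at $r=-\tfrac12$, using $p^2q^2=tq+H+q^{-1}$ to cancel the diverging powers of $q$. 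Since $\tfrac{t}{6\Delta}\,ds-\tfrac{H}{12\Delta}\,dH=d\log(H^2-4t)^{-1/24}$, this identifies $S=\log(H^2-4t)^{-1/24}$ up to a constant.

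For (iii) I would work with the generating function $K$ of Proposition~\ref{propIII}, whose third $\theta$-derivatives are given explicitly by \eqref{one'}--\eqref{one}. One has $\partial^3 K/\partial\theta_H^3\equiv 0$; clearing denominators with $p^2q^2=tq+H+q^{-1}$ in the others gives $\partial^3 K/\partial\theta_s^3=2(r+\tfrac12)^2(3H-tq)+O(q^{-1})$, $\partial^3 K/\partial\theta_s^2\partial\theta_H=-(1+2r)\bigl(1-\tfrac{H}{tq}\bigr)+O(q^{-2})$, and $\partial^3 K/\partial\theta_s\partial\theta_H^2=-1/2p^2q^2$. Along $\theta=0$ we have $v=0$, so $q\sim t^{-1}v^{-2}\to\infty$ while $r+\tfrac12=A(v)+\tfrac12=O(v^4)$ by the Laurent estimate above; hence $(r+\tfrac12)^2q\to0$, $(1+2r)\to0$ and $1/p^2q^2\to0$, with all $O(q^{-1})$ and $O(q^{-2})$ coefficients bounded. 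Therefore every third $\theta$-derivative of $K$ vanishes along the zero section. The plan is then to invoke the fact that, because the change from the canonical coordinates $(z_1,z_2)$ to $(s,H)$ is symplectic (by \eqref{handy}), the flows in the coordinates $(s,H,\theta_s,\theta_H)$ take the form \eqref{above} with $K$ as generating function, and --- by the transformation property recorded in \cite[Prop.~4.2]{A2} together with the construction of $\nabla^J$ in \cite[Section~7]{RHDT2} --- the linear Joyce connection is computed in these coordinates by \eqref{st} with $K$ replacing $W$; vanishing of $\partial^3 K|_{\theta=0}$ is thus exactly the statement that $(s,H)$ are flat.

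The hard part will be part (i): the factor $pq$ in \eqref{eq:Pleba{'n}skiPIII} blows up like $q\to\infty$ along the zero section, so the whole argument rests on organising the computation --- through the $\wp'$-identity and the observation that $\rho=1+2r$ acquires a zero there --- so that the cancellations become visible rather than miraculous. The same two inputs, in particular the precise order of vanishing of $1+2r$ (it must beat the $O(q)=O(v^{-2})$ growth), are what make the limits in (iii) come out to $0$ rather than to finite nonzero constants; tracking these orders carefully in the Laurent expansions of $\wp$, $\wp'$ and $\zeta$ is the only real subtlety. Parts (ii) and (iii) are then essentially bookkeeping on top of the structure established in (i).
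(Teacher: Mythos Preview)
Your proposal is correct and follows essentially the same route as the paper's own proof: both uniformise the spectral curve via Weierstrass functions, identify $\theta_s=v$, and then expand near $v=0$ to read off regularity, the linear term, and the vanishing of the third derivatives of $K$. The paper introduces the auxiliary coordinate $w=(1+2r)/v$ and performs a direct Taylor substitution into \eqref{eq:Pleba{'n}skiPIII}, while you instead reorganise $W$ algebraically using the $\wp'^2$--identity into the three-term form with $\rho=1+2r$, so that the cancellations are structurally visible rather than emerging from an expansion; this buys you the slightly stronger statement of regularity along the whole locus $\theta_s=0$ with no extra work. For part~(iii) both arguments are identical in spirit --- show that all third $\theta$-derivatives of $K$ vanish on the zero section and invoke the transformation property of the generating function under symplectic coordinate changes (the paper cites \cite[Theorem~4.5]{A2}, you cite \cite[Prop.~4.2]{A2} and \cite[Section~7]{RHDT2}).
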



\begin{proof}
      Let us introduce the function  $w=(1+2r)/v$ and use $(v,w)$ as co-ordinates in place of $(q,r)$. For small $v$ and $w$  the equations \eqref{eq:thetasuni}  together with  the Laurent expansions of the Weierstrass functions give
\begin{equation}
	\theta_s=v, \qquad \theta_H=w+\frac{Hwv^2}{3}+O((v,w)^3).
\end{equation}
In particular, the locus  $v=w=0$ coincides with $\theta_s=\theta_H=0$. 
The equations \eqref{eq:pqvIII}  give
\begin{equation}
\label{pesto}
    q=\frac{1}{tv^2}-\frac{H}{3t}+\frac{(H^2-3t)v^2}{15t}+O(v^4), \qquad 
    pq=-\frac{1}{v}-\frac{Hv}{3}+O(v^3),\qquad r=-\frac{1}{2}+\frac{wv}{2}.
\end{equation} 
 We can now
substitute these formulae into the explicit expression \eqref{eq:Pleba{'n}skiPIII} to compute the leading behaviour of the Pleba\'nski function $W$. We find that
\begin{equation}
	 	W=\frac{1}{12(H^2-4t)} \left(2tv-Hw\right) +O((v,w)^3)
	\end{equation}
 which implies (i) and (ii). For (iii) we substitute the formulae \eqref{pesto}  into \eqref{one'} - \eqref{one} and find  that the third derivatives of $K$ in the $\theta$ directions  vanish along $\theta=0$. As explained in the proof of \cite[Theorem 4.5]{A2} this then implies that $(s,H)$ are flat for the linear  Joyce connection.
\end{proof}

\subsection{An alternative approach: poles of Painlev\'e III}

Our computation of the limit $\theta\rightarrow0$  relied heavily on the explicit uniformisation of the spectral curve by elliptic functions. It amounted to approaching the $\theta=0$ locus along purely vertical directions with fixed $t,H$. We can instead approach this locus along the two isomonodromic directions spanned by $w_1$ and $w_2$. Noting that $v\rightarrow 0$ amounts to a double pole of $q$, we can then identify the locus $v=0$ with a pole of the Painlev\'e equation \eqref{eq:PIII}. Let $t_0$ be a position of such a pole. The isomonodromic system \eqref{eq:isomonodromyPIII} implies the following behaviour:
\begin{equation}\label{eq:pqpoleIII}
    q(t)=\frac{t_0\e^2}{(t-t_0)^2}+q_0+O(t-t_0),\qquad p(t)=-\frac{(t-t_0)}{\e}-\frac{(t-t_0)^2}{2t_0\e}
    +O((t-t_0)^2),
\end{equation}
\begin{equation}
	r(t)=-\frac{1}{2}+\rho(t-t_0)+O(t-t_0)^2,\qquad H(t)=p^2q^2-tq-q^{-1}=H_0+2t_0 r_0 +O(t-t_0)^3
\end{equation}
where \begin{equation}\label{eq:HpoleIII}
  H_0=-3q_0t_0+\frac{\epsilon^2}{4}  ,
\end{equation} and the remaining parameters $(\rho,q_0)$ are  left undetermined by the system. We can then view $(t_0,H_0)$ as specifying a point of $M$ which is the projection of the relevant point of the $\theta=0$ locus.

Note that only the leading order behaviour coincides with the one obtained by uniformization, as the isomonodromic flows do not preserve the spectral curve. Nonetheless, one obtains the identifications, valid in the leading order in the expansion
\begin{equation}
	v\simeq \frac{t-t_0}{\e t_0},\qquad w\simeq 2\e t_0 \rho.
\end{equation}
As could be expected, the isomonodromic formula provides an $\e$-deformation of the one \eqref{eq:pqvIII} coming from uniformisation of the spectral curve. However we are simply approaching the same locus from a different direction, so unsurprisingly the end result  is the same, and this leads to an alternative derivation of Theorem \ref{thm:Pleba{'n}skiOperIII}.


\section{Second example: Painlev\'e II}
\label{sec:two}

Our second example will be the Joyce structure of class $S[A_1]$ corresponding to genus $g=0$ and pole order $m=(8)$. This Joyce structure is related to the DT theory of the A$_3$ quiver and to the  Painlev\'e II equation. The corresponding field theory of class $S$ is known as Argyres-Douglas theory $H_0$, or the $(A_1,A_3)$ theory \cite{Bonelli2017}. The relevant quadratic differentials take the form
\begin{equation}\label{q0}Q_0(x)  dx^{\tensor 2}=(x^4+tx^2-2 \alpha x+2H )dx^{\tensor 2}\end{equation}
and the associated flat connections feature both nontrivial Stokes data and local monodromy determined  by the parameter $\alpha\in \bC$. 

To obtain a Joyce structure, we must impose that the quadratic differential \( Q_0 \) has zero residue at the pole \( x = \infty \), which corresponds to the condition \( \alpha = 0 \). Consequently, the base \( M \) of the Joyce structure has natural coordinates \( (t, H) \). However, since most of our computations are valid for arbitrary values of $\alpha$,  we will retain it as a  variable for most of the section.

\subsection{Connections}

The relevant Higgs field is the Jimbo-Miwa Lax matrix for Painlev\'e II
\begin{equation}\label{eq:HiggsPII}
	\Phi(x)=x^2\left( \begin{array}{cc}
		1 & 0 \\
		0 & -1
	\end{array} \right)+
	x\left( \begin{array}{cc}
		0 & 1 \\
		t-2p+2q^2 & 0
	\end{array} \right)
	+\left( \begin{array}{cc}
		p-q^2 & -q \\
		-2\alpha+q(t-2p+2q^2) & -p+q^2
	\end{array} \right).
\end{equation}
It is a classical result \cite{JMU1981II} that isomonodromic deformations of the flat connection $\dd-\Phi(x)\dd x$ are described by the second Painlev{\'e} equation 
\begin{equation}\label{p2}
	\frac{d^2q}{dt^2}=2q^3+qt-\alpha+\frac{1}{2}.
\end{equation}

\begin{remark}
    \label{footnote:PII}The matrix
\begin{equation}
	\Phi(x)= \left( \begin{array}{cc}
	1 & 0 \\
	0 & -1		
	\end{array} \right)x^2 
	+ \left( \begin{array}{cc}
		0 & u \\
		-\frac{2z}{u} & 0
	\end{array} \right)x
	+\left( \begin{array}{cc}
		z+\frac{t}{2} & -u y \\
		-\frac{2}{u}(\theta+yz) & -z-\frac{t}{2}
	\end{array} \right)
\end{equation}
of \cite{JMU1981II}  is parameterised in terms of $y,z,u,\alpha,t$. To obtain \eqref{eq:HiggsPII}, we set $u=1$ by a diagonal gauge transformation, and write our Higgs field in terms of spectral Darboux coordinates $\alpha=\theta$, $q=y$ and $p=z+q^2+\half t$.  
\end{remark}

As before, we introduce a parameter $\e\in \bC^*$ and consider  connections $\dd-A_\e(x) \dd x$, where
\begin{equation}
	A_\e (x)=A_\infty+\frac{1}{\e}\Phi(x),\qquad A_\infty=\left(\begin{array}{cc}
		r & 0 \\
		-2s-1-2r(x+q) & -r
	\end{array} \right).
\end{equation}
This is obtained from \eqref{eq:HiggsPII} by shifting
\begin{equation}
    p\mapsto p+\e r,\qquad \alpha\mapsto\alpha+\e\left(s+\frac{1}{2}\right).
\end{equation}
We then define the Hamiltonian
\begin{equation}\label{eq:PIIHam}
    H=\frac{1}{2}\left(p^2-q^4-tq^2+2\alpha q\right),
\end{equation}
and view the connection \eqref{family} as being parameterised by $\e\in \bC^*$ together with the variables $(t,H, \alpha,q,r,s)$, with $p$ defined implicitly by  \eqref{eq:PIIHam}.

\subsection{Extended isomonodromic flows}
We consider isomonodromic flows obtained by varying $(t,H,\alpha)$ and insisting that the generalised monodromy of $\dd-A_\e (x) \dd x$ is constant.

\begin{prop}
    The isomonodromy flows are generated by the vector fields
   \begin{equation}\label{eq:IsoFlow1}
	w_1=\frac{\partial}{\partial t} +  \left(r+\frac{p}{\e}\right)\frac{\partial}{\partial q}-\frac{1}{\e}\left[s+\frac{q^2}{2p} +\frac{r}{p}(2q^3+tq-\alpha)\right]\frac{\partial}{\partial r} ,
\end{equation}
\begin{equation}\label{eq:IsoFlow2}
	w_2=\frac{\partial}{\partial H}-\frac{1}{\e p}\frac{\partial}{\partial r},\qquad w_3=\frac{\partial}{\partial \alpha}-\frac{1}{\e}\frac{\partial}{\partial s}+\frac{q}{ \e p} \frac{\partial}{\partial r}.
\end{equation}
\end{prop}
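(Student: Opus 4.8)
The plan is to follow the structure of the Painlev\'e III$_3$ argument given above. I would first regard the connection \eqref{family} as a function of $(t,q,p,r,s,\alpha)$, with $H$ recovered afterwards from \eqref{eq:PIIHam}, and realise the classical Painlev\'e II deformation in the time $t$ as the compatibility condition of the linear system $\partial_x\Psi=A_\e(x)\Psi$, $\partial_t\Psi=B_\e^{(t)}(x)\Psi$. The first task is to write down $B_\e^{(t)}(x)$ explicitly: following Jimbo--Miwa \cite{JMU1981II} it is linear in $x$ with leading term proportional to $\e^{-1}x\cdot\mathrm{diag}(1,-1)$, and it carries the same reference-connection shifts $p\mapsto p+\e r$, $\alpha\mapsto\alpha+\e(s+\half)$ as $A_\e(x)$. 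Imposing the zero-curvature equation $\partial_tA_\e-\partial_xB_\e^{(t)}+[A_\e,B_\e^{(t)}]=0$ both fixes $B_\e^{(t)}$ and produces the evolution equations for $q$ and for the shifted combinations $P:=p+\e r$ and $\mathcal A:=\alpha+\e(s+\half)$; eliminating everything in favour of $q$ should return the ($\e$-deformed) Painlev\'e II equation (cf.\ \eqref{p2}), which serves as a consistency check.

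The key observation, exactly as in \eqref{w2}, is that $A_\e(x)$ --- and hence also $B_\e^{(t)}(x)$ --- depends on $(p,r)$ only through $P$ and on $(\alpha,s)$ only through $\mathcal A$, as one sees directly from the explicit form of $A_\e(x)$. Consequently the trivial vector fields $\widetilde{w}_2=\partial_r-\e\,\partial_p$ and $\widetilde{w}_3=\partial_s-\e\,\partial_\alpha$ leave $A_\e$, $B_\e^{(t)}$, and therefore the generalised monodromy of $\nabla_\e$, literally unchanged. Since $\partial H/\partial p=p$ and $\partial H/\partial\alpha=q$ by \eqref{eq:PIIHam}, rewriting these fields in the coordinates $(t,H,\alpha,q,r,s)$ --- with $p$ the implicit function determined by \eqref{eq:PIIHam} --- turns $\widetilde{w}_2$ into $\partial_r-\e p\,\partial_H$ and $\widetilde{w}_3$ into $\partial_s-\e\,\partial_\alpha-\e q\,\partial_H$. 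Then $w_2$ of \eqref{eq:IsoFlow2} is simply $-\tfrac{1}{\e p}\,\widetilde{w}_2$, and $w_3$ is $-\tfrac{1}{\e}\,\widetilde{w}_3-q\,w_2$; both manifestly preserve the monodromy.

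It remains to produce $w_1$. As in \eqref{peg}, the $t$-flow coming from the compatibility condition is not canonical because of the $\widetilde{w}_2,\widetilde{w}_3$ ambiguity, so I would pin it down by adding the unique combination of $\widetilde{w}_2$ and $\widetilde{w}_3$ for which $H$, $\alpha$, and hence $s$, are all constant along the flow. Differentiating \eqref{eq:PIIHam} expresses $dH/dt$ in terms of $dq/dt$, $dp/dt$ and $d\alpha/dt$, which fixes the two coefficients explicitly. Rewriting the resulting vector field in $(t,H,\alpha,q,r,s)$ should then reproduce \eqref{eq:IsoFlow1}; note that it has no $\partial_H$, $\partial_\alpha$ or $\partial_s$ component, reflecting this normalisation.

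The step I expect to be the main obstacle is the first one: correctly dressing the Jimbo--Miwa $t$-flow with both the spectral parameter $\e$ and the two reference-connection shifts, and then pushing the zero-curvature computation through. The computation is elementary but error-prone, and it is complicated here --- relative to the Painlev\'e I and III$_3$ cases --- by the presence of the local monodromy parameter $\alpha$ and its shift $s$, which must be tracked consistently through every formula. A secondary point worth verifying is a dimension count: the three flows $w_1,w_2,w_3$ should span the full isomonodromic distribution, i.e.\ match the dimension of the space of $t$-independent generalised monodromy data (Stokes data together with the formal monodromy exponent) for the irregular singularity at $x=\infty$.
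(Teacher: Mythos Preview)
Your proposal is correct and follows essentially the same approach as the paper: identify the two trivial monodromy-preserving directions coming from the fact that $A_\e$ depends only on $p+\e r$ and $\alpha+\e(s+\half)$, rewrite them in the $(t,H,\alpha,q,r,s)$ coordinates to obtain $w_2,w_3$, and then fix the $t$-flow by normalising $dH/dt=d\alpha/dt=0$ to obtain $w_1$. The paper even records the explicit $B_\e(x)$ and the resulting equations \eqref{normal} exactly along the lines you describe, so the only work left in your sketch is the routine zero-curvature calculation you flagged.
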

\begin{proof}
    We first consider the connection as a function of the variables $(t,\alpha,q,p,r,s)$, with $H$ defined by \eqref{eq:PIIHam}. As before, we can describe isomonodromic deformations of $d-A_\e(x)\, dx$ with parameter $t$ as the compatibility conditions \eqref{consistency} of the linear systems \eqref{systems}. This forces
\begin{equation}B_\epsilon(x)=\left( \begin{array}{cc}
        0 & 0 \\
        -r & 0
    \end{array} \right)+
    \frac{1}{2\e}\left(
\begin{array}{cc}
 q+x & 1 \\
 -2 p+2 q^2 +t & -q-x \\
\end{array}
\right),
\end{equation}
and gives the conditions

\begin{equation}
\label{normal}
    \frac{\dd q}{\dd t}=\left(\frac{p}{\e}+r \right), \qquad
    \frac{\dd }{\dd t}(p+\e r)=\frac{2q^3+tq-\alpha}{\epsilon} -s, \qquad  \frac{\dd }{\dd t}(\alpha+\e s)=0.
\end{equation}

The isomonodromic flow is still undetermined, due to the additional  flows 
\begin{equation}
\label{extramore}
    \frac{\partial}{\partial p}-\frac{1}{\e }\frac{\partial}{\partial r}, \qquad \frac{\partial}{\partial \alpha}-\frac{1}{\e }\frac{\partial}{\partial s},
\end{equation}
reflecting the fact that $A_\epsilon(x)$ and $B_\epsilon(x)$ depend only on the combinations $p+\e r$ and $\alpha+\e s$. We now rewrite our flows in the co-ordinates $(t,H,\alpha,q,r,s)$ with $p$ defined by \eqref{eq:PIIHam}. Taking linear combinations of \eqref{extramore} gives  the flows
\begin{equation}
\label{extra}
    {w}_2=\frac{\partial}{\partial H}-\frac{1}{\e p}\frac{\partial}{\partial r}, \qquad w_3=\frac{\partial}{\partial\alpha}-\frac{1}{\e}\frac{\partial}{\partial s}+\frac{q}{ \e p} \frac{\partial}{\partial r}.
\end{equation}
We can add copies of  these flows  to normalise \eqref{normal} in such a  way that $\frac{\dd H}{\dd t}=\frac{\dd \alpha}{\dd t}=0$. This gives the system of equations

\begin{gather}
\label{eq:IsoPII}
  \frac{\dd q}{\dd t}=\left(r+\frac{p}{\e}\right), \qquad 
    \frac{\dd p}{\dd t}=\frac{q^2}{2p}+\left(\frac{1}{\e}+\frac{r}{p} \right)(2q^3+tq-\alpha) , \\
    \frac{ \dd r}{\dd t}=-\frac{1}{\e}\left[s+\frac{q^2}{2p}+\frac{r}{p}\left(2q^3+tq-\alpha \right) \right],
\end{gather}
generated by the flow $w_1$.
\end{proof}
\begin{remark}
    As in the example from the previous section, 
irrespectively of how we normalise the flow in $t$, the resulting nonlinear second order nonlinear ODE for $q$ is the appropriate Painlev\'e equation, in this case the $\e$-deformed Painlev\'e II equation with shifted monodromy parameter
\begin{equation}
    \e^2\frac{\dd^2 q}{\dd t^2}=2q^3+qt-(\alpha+\e s).
\end{equation}
This reduces to \eqref{p2} when $\epsilon=1$ and $s=-\half$.
\end{remark}

\subsection{Spectral curve}

By the gauge transformation \eqref{gauge} we can rewrite \eqref{family} as a scalar equation of the form \eqref{above2}. In this case $Q_0(x)$ is given by \eqref{q0}, and
\begin{equation}\label{eq:Q0Q1PII}
	Q_1(x)=-\frac{p}{x-q}
	+2pr-2s(x-q),
\qquad
	Q_2(x)=\frac{3}{4(x-q)^2}-\frac{r}{x-q}+r^2.
\end{equation}
 
Let us fix the parameters $(t,H,\alpha)$. The corresponding spectral curve $\Sigma$ is a smooth curve of genus 1 with a double cover $p\colon \Sigma\to \bP^1$ branched over 4 points $x_i\in \bC^*$. We denote by $\infty_{\pm}$ the two inverse images of the point $x=\infty$. The open subset $\Sigma^0\subset \Sigma$ is then the complement $\Sigma^0=\Sigma\setminus\{\infty_\pm\}$, and is the  affine quartic  curve
\begin{equation}\label{eq:SigmaPII}
	y^2=x^4+tx^2-2x\alpha+2H.
\end{equation}

Similar reasoning to the proof of Lemma \ref{nak} shows that the inclusion $i\colon \Sigma^0\hookrightarrow \Sigma$ induces a short exact sequence
\begin{equation}
    0\lra \bZ\lra H^1(\Sigma^0,\bZ)^-\stackrel{i_*}{\lra} H^1(\Sigma,\bZ)^-\lra 0,\end{equation}
   where  as before $H_1(\Sigma,\bZ)^-=H_1(\Sigma,\bZ)$. The kernel of $i_*$ is generated by a small anti-clockwise loop $\gamma_3$ around the point $\infty_+$. Note that this class  is anti-invariant because $\sigma^*(\gamma_3)$ is represented by   a small anti-clockwise loop around $\infty_-$ and thus is homologous to $-\gamma_3$.

We take generators $\gamma_1,\gamma_2,\gamma_3$ for $H_1(\Sigma^0,\bZ)^-$ such that $\gamma_1,\gamma_2$ project to  generators of $H_1(\Sigma,\bZ)\isom\bZ^{\oplus 2}$.  
With appropriate ordering of $\gamma_1,\gamma_2$ we then  have
 \begin{equation}
     \<\gamma_1,\gamma_2\>=1, \qquad \<\gamma_1,\gamma_3\>=0=\<\gamma_2,\gamma_3\>.
 \end{equation}

Let us introduce the meromorphic differentials
\begin{equation}\lambda = y \dd x, \qquad \theta=-\frac{Q_1(x) \,dx}{2\sqrt{Q_0(x)} },\end{equation}
and the associated periods
\begin{equation}\label{eq:PeriodII}z_i=\oint_{\gamma_i} \lambda, \qquad \theta_i=\oint_{\gamma_i} \theta.\end{equation}

The differential
 $\lambda$ has a pole of order 4 at the points $\infty_\pm$, and we can order them so that near $\infty_\pm$ we have
\begin{equation}\label{eq:lambdaPIIExp}
	\lambda=\pm\left[x^2+\frac{t}{2}-\frac{\alpha}{x}+\frac{1}{x^2}\left(H-\frac{t^2}{8} \right)+O(x^{-3}) \right]\dd x.
\end{equation}In particular $\lambda$ has residue $\alpha$ at $\infty_+$, so that 
\begin{equation}\label{z3}z_3=\oint_{\gamma_3} \lambda=2\pi i \alpha.\end{equation}

The differential $\theta$ has a simple pole at $\infty_\pm$ with residue $\mp s$, where the leading behaviour is
\begin{equation}
	\theta=\mp\left[-\frac{s}{x}+ \frac{pr+qs}{x^2}+\frac{st-p}{2x^3}+ O(x^{-4}) \right]\dd x.
\end{equation}
It also has   simple poles at the points $x=(q,\pm p)$ with residue $\pm \half$.

As before, defining  the periods $\theta_i$ requires choosing lifts of the cycles $\gamma_i\in H_1(\Sigma^0,\bZ)^-$ to elements $\gamma_i^*\in H_1(\Sigma^*,\bZ)^-$, where $\Sigma^*=\Sigma^0\setminus\{(q,\pm p)\}$. Different choices 
affect the values of $\theta_i$ by the addition of integer multiples of $2\pi i$. Note that lifting $\gamma_3\in H_1(\Sigma,\bZ)^-$  in the most obvious way does not define an anti-invariant cycle on $\Sigma^*$. Instead, we take $\gamma_3^*\in H_1(\Sigma^*,\bZ)^-$ to be the sum of small anti-clockwise loops around the points $\infty_+$ and $(q,p)$.

\subsection{Computation of periods}
Introduce the differentials
\begin{equation}
	\omega=\frac{\partial\lambda}{\partial H}=\frac{\dd x}{y}, \qquad \beta_t=\frac{\partial\lambda}{\partial t}=\frac{x^2\, \dd x}{2y}, \qquad \beta_\alpha=\frac{\partial\lambda}{\partial \alpha}=-\frac{x\, \dd x}{y} .
\end{equation}
Then 
$\omega$ is holomorphic on $\Sigma$, with leading order behaviour  at $\infty_{\pm}$ given by
\begin{equation}
\omega=\pm\left[\frac{1}{x^2}+O(x^{-3}) \right]\dd x.
\end{equation} The differential $\beta_t$ has poles of order 2 at $\infty_\pm$ with zero residue, 
whereas $\beta_\alpha$ has simple poles at $\infty_\pm$  with residue $\pm 1$. Differentiating \eqref{eq:lambdaPIIExp}, the expansions at $\infty_\pm$ are 
\begin{equation}
	\beta_t=\pm\left[\frac{1}{2}-\frac{t}{4x^2} +O(x^{-3}) \right]\dd x,\qquad \beta_\alpha=\pm\left[-\frac{1}{x}+O(x^{-3}) \right]\dd x.
\end{equation}
Since these differentials are regular on $\Sigma^0$ the corresponding periods $\beta_{t,i}$, $\omega_i$, $\beta_{\alpha,i}$ are  well-defined.

Define $\theta_t,\,\theta_H,\,\theta_\alpha$ through the relation
\begin{equation}
    \theta_1\,\frac{\partial}{\partial z_1}+\theta_2\,\frac{\partial}{\partial z_2}+\theta_3\,\frac{\partial}{\partial z_3}=\theta_t\,\frac{\partial}{\partial t}+\theta_H\,\frac{\partial}{\partial H}+\theta_\alpha\,\frac{\partial}{\partial \alpha}.
\end{equation}
Using the obvious relation
\begin{equation}
\label{againy}
dz_i=\frac{\partial z_i}{\partial t} dt + \frac{\partial z_i}{\partial H} dH +\frac{\partial z_i}{\partial \alpha} d\alpha=\beta_{t,i}\, dt+\omega_i\, dH+\beta_{\alpha,i} \, d\alpha ,
\end{equation}it follows that
\begin{equation}\label{hi}
	\theta_i=\theta_t\beta_{t,i}+\theta_H\omega_i+\theta_\alpha\beta_{\alpha,i}.
\end{equation}

\begin{lemma}
\label{74}
There are identities\begin{equation}\label{eq:thetat1}
	\theta_t=\frac{1}{2}\int_{(q,-p)}^{(q,p)}\omega-\frac{1}{2}\int_{\infty_-}^{\infty_+}\omega
    ,  \qquad  \theta_\alpha =\half -s,\end{equation}
    \begin{equation}\label{eq:thetah1}
	\theta_H=-\frac{1}{2}\int_{(q,-p)}^{(q,p)}\beta_t-(pr+qs)-\Res_{x=\infty_+}\left(\frac{dx}{x}\int^x\beta_t \right).
\end{equation}

\end{lemma}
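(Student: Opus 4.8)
The plan is to mimic the proof of Lemma \ref{thm:lemmathetaIII}: convert the period pairings into residue sums via the Riemann bilinear relations, and then invert the linear relation \eqref{hi}. The genuinely new feature compared with the Painlev\'e III$_3$ case is that $\theta$ now has poles at $\infty_\pm$ as well as at $(q,\pm p)$, so the points $\infty_\pm$ contribute to the residue sums, and one must handle the interaction between the simple pole of $\theta$ and the double pole of $\beta_t$ at these points.

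\medskip

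\emph{Step 1: the formula for $\theta_\alpha$.} This one I would read off directly. By the description preceding the lemma, the lift $\gamma_3^*\in H_1(\Sigma^*,\bZ)^-$ is the sum of small anti-clockwise loops around $\infty_+$ and around $(q,p)$, where $\theta$ has residues $-s$ and $\tfrac12$ respectively. Hence $\theta_3=\oint_{\gamma_3^*}\theta=2\pi i(\tfrac12-s)$. On the other hand $\omega$ is holomorphic and $\beta_t$ has vanishing residue at $\infty_+$, so $\omega_3=\beta_{t,3}=0$ by \eqref{againy}, while $\beta_{\alpha,3}=\partial z_3/\partial\alpha=2\pi i$ by \eqref{z3}. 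Substituting $i=3$ into \eqref{hi} gives $2\pi i(\tfrac12-s)=\theta_\alpha\cdot 2\pi i$, i.e. $\theta_\alpha=\tfrac12-s$.

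\medskip

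\emph{Step 2: the pairings.} For $\theta_t$ and $\theta_H$ I would take Wronskians of \eqref{hi} (for $i=1,2$) against $\omega$ and against $\beta_t$, using $\langle\gamma_1,\gamma_2\rangle=1$; this gives
\begin{equation*}
\langle\omega,\theta\rangle=\theta_t\,\langle\omega,\beta_t\rangle+\theta_\alpha\,\langle\omega,\beta_\alpha\rangle,\qquad
\langle\beta_t,\theta\rangle=-\theta_H\,\langle\omega,\beta_t\rangle+\theta_\alpha\,\langle\beta_t,\beta_\alpha\rangle,
\end{equation*}
so it remains to evaluate the five pairings on the right by the Riemann bilinear relations in the form \eqref{l1a}. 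In each pairing one of the two forms, $\omega$ or $\beta_t$, has no residues; I would integrate that one, choosing its primitive $\int^x$ to be odd under the covering involution $\sigma$. This is possible since $\omega$ and $\beta_t$ are $\sigma$-anti-invariant, and it forces the Laurent data of the primitive at $\infty_-$ to be the negative of that at $\infty_+$. Feeding in the expansions of $\lambda,\omega,\beta_t,\beta_\alpha,\theta$ at $\infty_\pm$ recorded above and the residues $\pm\tfrac12$ of $\theta$ at $(q,\pm p)$, one finds $\langle\omega,\beta_t\rangle=2\pi i$ (equivalently $\Omega_0=dt\wedge dH$), $\langle\omega,\beta_\alpha\rangle=2\pi i\int_{\infty_-}^{\infty_+}\omega$, $\langle\beta_t,\beta_\alpha\rangle=-4\pi i\,\Res_{x=\infty_+}\!\big(\tfrac{dx}{x}\int^x\beta_t\big)$, together with
\begin{equation*}
\langle\omega,\theta\rangle=2\pi i\Big(\tfrac12\!\int_{(q,-p)}^{(q,p)}\!\!\omega-s\!\int_{\infty_-}^{\infty_+}\!\!\omega\Big),\qquad
\langle\beta_t,\theta\rangle=2\pi i\Big(\tfrac12\!\int_{(q,-p)}^{(q,p)}\!\!\beta_t+(pr+qs)+2s\,\Res_{x=\infty_+}\!\big(\tfrac{dx}{x}\int^x\beta_t\big)\Big).
\end{equation*}
Substituting these together with $\theta_\alpha=\tfrac12-s$ into the two Wronskian identities and solving then yields exactly \eqref{eq:thetat1} and \eqref{eq:thetah1}; in particular the $s$-dependent contributions produced by the residues of $\theta$ at $\infty_\pm$ cancel against the terms $\theta_\alpha\langle\omega,\beta_\alpha\rangle$ and $\theta_\alpha\langle\beta_t,\beta_\alpha\rangle$.

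\medskip

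The hard part will be this last residue bookkeeping at $\infty_\pm$. Because $\beta_t$ has a \emph{double} pole there, $\int^x\beta_t$ is itself singular at $\infty_\pm$, so the $\infty_+$-contribution to $\langle\beta_t,\theta\rangle$ is a true residue $\Res_{\infty_+}(\theta\int^x\beta_t)$ rather than a value of a primitive, and one has to use the $\sigma$-odd normalisation of the primitive, together with careful tracking of the orientations of the anti-invariant cycles $\gamma_1,\gamma_2,\gamma_3$, for the $\infty_+$ and $\infty_-$ contributions to combine into the single term $\Res_{x=\infty_+}\!\big(\tfrac{dx}{x}\int^x\beta_t\big)$ that appears in \eqref{eq:thetah1}. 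Apart from this point the computation is a routine residue calculation completely parallel to Lemma \ref{thm:lemmathetaIII}.
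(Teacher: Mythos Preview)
Your proposal is correct and follows essentially the same approach as the paper: compute $\theta_\alpha$ directly from the lift $\gamma_3^*$ and equation \eqref{hi} with $i=3$, then use the Riemann bilinear relations to evaluate the pairings $\langle\omega,\beta_t\rangle$, $\langle\omega,\beta_\alpha\rangle$, $\langle\beta_t,\beta_\alpha\rangle$, $\langle\omega,\theta\rangle$, $\langle\beta_t,\theta\rangle$, and solve the resulting linear system for $\theta_t,\theta_H$. The only cosmetic difference is that the paper fixes the primitives by cutting $\Sigma^*$ along $\gamma_1^*,\gamma_2^*$ into a simply-connected fundamental domain rather than invoking a $\sigma$-odd normalisation, but the residue bookkeeping and the final cancellation of the $s$-terms are identical.
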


    \begin{proof}Recall that  we can lift the cycles $\gamma_i\in H_1(\Sigma^0,\bZ)^-$ to elements $\gamma_i^*\in H_1(\Sigma^*,\bZ)^-$. The periods $\theta_i$ are then well-defined. As above, we 
     take $\gamma_3^*$ to be  the sum of anti-clockwise cycles round the points $\infty_+$ and $(q,p)$.

    To apply the Riemann bilinear relations we first cut the surface $\Sigma^*$ along the  cycles  $\gamma_1^*,\gamma_2^*$  to obtain a simply-connected fundamental domain. We take all integrals along paths in this region. Using notation as in the proof of Lemma \ref{thm:lemmathetaIII}, we have
\begin{equation}
\label{hand}
	\langle\omega,\beta_t\rangle=2\pi i\sum_{x_i\in \{\infty_{\pm}\}}\Res_{x=x_i}\left(\beta_t\int^x\omega \right)=2\pi i,
\end{equation}
\begin{equation}
    \langle\omega,\,\beta_\alpha\rangle=2\pi i\sum_{x_i\in \{\infty_{\pm}\}}\Res_{x=x_i}\left(\beta_\alpha\int^x\omega \right)=2\pi i\int_{\infty_-}^{\infty_+}\omega,
\end{equation}
\begin{equation}
\label{problem?}
    \langle\beta_t,\,\beta_\alpha\rangle=2\pi i\sum_{x_i\in \{\infty_{\pm}\}}\Res_{x=x_i}\left(\beta_\alpha\int^x\beta_t \right)
    =-4\pi i\,\Res_{x=\infty_+}\left(\frac{dx}{x}\int^x\beta_t \right),
\end{equation}
\begin{equation}
	\langle\omega,\,\theta\rangle=2\pi i\sum_{x_i\in \{(q,\pm p),\infty_{\pm}\}}\Res_{x=x_i}\left(\theta\int^x\omega \right) =\pi i\int_{(q,-p)}^{(q,p)}\omega-s\<\omega,\beta_\alpha\>,
\end{equation}
\begin{equation}
	\langle\beta_t,\,\theta\rangle=2\pi i\sum_{x_i\in \{(q,\pm p),\infty_{\pm}\}}\Res_{x=x_i}\left(\theta\int^x\beta_t \right)=\pi i\int_{(q,-p)}^{(q,p)}\beta_t+2\pi i (pr+qs)-s\<\beta_t,\beta_\alpha\>.
\end{equation}

By our choice of the lift $\gamma_3^*$ we have
 $\theta_3=2\pi i (\half-s)$. Moreover $\omega_3=\beta_{t,3}=0$ and $\beta_{\alpha,3}=2\pi i $ so from \eqref{hi} we get $\theta_\alpha=\half-s$. 
Then
\begin{equation}
	\pi i\int_{(q,-p)}^{(q,p)}\omega-2\pi is\int_{\infty_-}^{\infty_+}\omega =\langle\omega,\,\theta\rangle=\theta_t\langle\omega,\,\beta_t\rangle+\theta_\alpha\langle\omega,\,\beta_\alpha\rangle=2\pi i\theta_t+2\pi i\theta_\alpha\int_{\infty_-}^{\infty_+}\omega,
\end{equation}
which gives  \eqref{eq:thetat1}. Equation \eqref{eq:thetah1} follows in the same way. 
 \end{proof}

\begin{remark}
Note that
\begin{equation}
    \label{thatone}
\theta_t=\frac{1}{2}\int_{(q,-p)}^{(q,p)}\omega-\frac{1}{2}\int_{\infty_-}^{\infty_+}\omega=\frac{1}{2} \int_{(q,-p)}^{\infty_-}\omega+ \frac{1}{2}\int_{\infty_+}^{(q,p)}\omega=\int_{\infty_+}^{(q,p)}\omega\end{equation}
using anti-invariance of $\omega$. 
Taking a base-point $(x_0,y_0)\in \Sigma$ we have

\begin{equation}
\begin{split}\Res_{x=\infty_{+}}\left(\frac{dx}{x}\int^{(x,y)}_{(x_0,y_0)}\beta_t \right) &= \Res_{x=\infty_{+}}\left(\frac{dx}{x}\int^{(x,y)}_{(x_0,y_0)}(\beta_t- \half dx) \right)+  \Res_{x=\infty_{+}}\left(\frac{dx}{x}\int_{x_0}^{x} \half dx \right)\\
&=-\int_{(x_0,y_0)}^{\infty_+} (\beta_t- \half dx) + \half x_0.\end{split}\end{equation}
Choosing $(x_0,0)\in \Sigma$ to be a branch-point we can then use anti-invariance of $\beta_t$ to write
\begin{equation}
 \label{thisone}
\begin{split}\theta_H&=-\int_{(x_0,0)}^{(q,p)} (\beta_t-\half dx)-\half(q-x_0) -(pr+qs)+\int_{(x_0,0)}^{\infty_+} (\beta_t-\half dx) -\half x_0 \\
   &=-\int_{\infty_+}^{(q,p)} (\beta_t-\half dx) -(pr+qs)-\frac{q}{2}.\end{split}\end{equation}
\end{remark}

\subsection{Isomonodromic flows in new coordinates}

 Let $\Xi:\,(t,H,\alpha, q,r,s)\rightarrow(t,H,\alpha,\theta_t,\theta_H,\theta_\alpha)$ be the change of coordinates given by the formulae of Lemma \ref{74}.

\begin{proposition}\label{thm:XiII}
    The push-forward of the isomonodromic flows along $\Xi$ can be written in the form
\begin{equation}
	\Xi_*(w_1)=\frac{\partial}{\partial t}+\frac{1}{\e}\frac{\partial}{\partial \theta_t}+\frac{\partial^2K}{\partial\theta_t\partial\theta_H}\frac{\partial}{\partial\theta_t}-\frac{\partial^2K}{\partial\theta_t^2}\frac{\partial}{\partial\theta_H},
\end{equation}
\begin{equation}
	\Xi_*(w_2)=\frac{\partial}{\partial H}+\frac{1}{\e}\frac{\partial}{\partial \theta_H}+\frac{\partial^2K}{\partial\theta_H^2}\frac{\partial}{\partial\theta_t}-\frac{\partial^2K}{\partial\theta_H\partial\theta_t}\frac{\partial}{\partial\theta_H},
\end{equation}
\begin{equation}\Xi_*(w_3)=\frac{\partial}{\partial \alpha}+\frac{1}{\e}\frac{\partial}{\partial \theta_\alpha}+\frac{\partial^2K}{\partial\theta_\alpha\partial \theta_H}\frac{\partial}{\partial\theta_t}-\frac{\partial^2K}{\partial\theta_\alpha\partial\theta_t}\frac{\partial}{\partial\theta_H}.\end{equation}
\end{proposition}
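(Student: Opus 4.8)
The plan is to mimic the proof of Proposition~\ref{propIII}. First I would compute the Jacobian of the change of coordinates $\Xi$. Differentiating the formulae of Lemma~\ref{74}, or more conveniently the closed forms \eqref{thatone} and \eqref{thisone}, with respect to $(t,H,\alpha,q,r,s)$---treating $p$ as a function of the remaining variables through \eqref{eq:PIIHam}---expresses $\Xi_*$ of each coordinate vector field in terms of $\partial/\partial t$, $\partial/\partial H$, $\partial/\partial\alpha$ and $\partial/\partial\theta_t$, $\partial/\partial\theta_H$, $\partial/\partial\theta_\alpha$. The coefficients involve $q,p,r,s$ together with a few period integrals along the anti-invariant path from $(q,-p)$ to $(q,p)$, the analogues of the quantities $\kappa_i$ appearing in \eqref{eq:dstodthetaIII}. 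Two structural simplifications make this manageable: since $\theta_\alpha=\half-s$ depends only on $s$, both $\Xi_*(\partial/\partial q)$ and $\Xi_*(\partial/\partial r)$ lie in the span of $\partial/\partial\theta_t$ and $\partial/\partial\theta_H$, while $\Xi_*(\partial/\partial s)=-\partial/\partial\theta_\alpha$ plus a combination of those two; and the resulting Jacobian of $(q,r,s)\mapsto(\theta_t,\theta_H,\theta_\alpha)$ is triangular with rational entries, so it can be inverted to give rational expressions for $\partial/\partial\theta_t$, $\partial/\partial\theta_H$, $\partial/\partial\theta_\alpha$ in terms of $\partial/\partial q$, $\partial/\partial r$, $\partial/\partial s$, exactly as in \eqref{two2}.

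Next I would substitute the isomonodromy vector fields \eqref{eq:IsoFlow1}--\eqref{eq:IsoFlow2} into these formulae. The terms of order $\e^{-1}$ must reorganise into precisely $\e^{-1}\partial/\partial\theta_t$ for $w_1$, $\e^{-1}\partial/\partial\theta_H$ for $w_2$ and $\e^{-1}\partial/\partial\theta_\alpha$ for $w_3$, with every other $\e^{-1}$ contribution cancelling; this is a direct check using the shifts $p\mapsto p+\e r$ and $\alpha\mapsto\alpha+\e(s+\half)$ defining the reference connection. For instance, in $\Xi_*(w_1)$ one uses $\partial\theta_t/\partial q=1/p$ together with the identity $\partial\theta_H/\partial q=-s-\tfrac{q^2}{2p}-\tfrac{r}{p}(2q^3+tq-\alpha)$, which follows by differentiating \eqref{thisone}. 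What remains is $\e$-independent, so each $\Xi_*(w_i)$ has the form $\partial/\partial(\cdot)+\e^{-1}\partial/\partial\theta_{(\cdot)}$ plus a nonlinear part lying in the span of $\partial/\partial\theta_t$ and $\partial/\partial\theta_H$ only.

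I would then write these nonlinear parts as $B\,\partial/\partial\theta_t-A\,\partial/\partial\theta_H$ for $\Xi_*(w_1)$, as $C\,\partial/\partial\theta_t-B\,\partial/\partial\theta_H$ for $\Xi_*(w_2)$, and as $E\,\partial/\partial\theta_t-D\,\partial/\partial\theta_H$ for $\Xi_*(w_3)$; checking that the coefficient of $\partial/\partial\theta_t$ in $\Xi_*(w_1)$ really equals minus the coefficient of $\partial/\partial\theta_H$ in $\Xi_*(w_2)$---so that both may legitimately be called $B$---is the first consistency requirement. Using the inverted operators from the first step to apply $\partial/\partial\theta_t$, $\partial/\partial\theta_H$ and $\partial/\partial\theta_\alpha$ to the resulting explicit expressions for $A,B,C,D,E$ (whose first $\theta$-derivatives are rational in $q,p,r,s,t,H,\alpha$, even though $A,B,C,D,E$ themselves contain period integrals), I would verify the integrability identities
\begin{gather*}
\frac{\partial A}{\partial\theta_H}=\frac{\partial B}{\partial\theta_t},\qquad
\frac{\partial B}{\partial\theta_H}=\frac{\partial C}{\partial\theta_t},\qquad
\frac{\partial A}{\partial\theta_\alpha}=\frac{\partial D}{\partial\theta_t},\\
\frac{\partial B}{\partial\theta_\alpha}=\frac{\partial D}{\partial\theta_H}=\frac{\partial E}{\partial\theta_t},\qquad
\frac{\partial C}{\partial\theta_\alpha}=\frac{\partial E}{\partial\theta_H}.
\end{gather*}
These are precisely the conditions under which the Poincar\'e lemma---applied first to produce $\partial K/\partial\theta_t$, $\partial K/\partial\theta_H$ and $\partial K/\partial\theta_\alpha$ (the last requiring one to first choose the otherwise-undetermined function $\partial^2 K/\partial\theta_\alpha^2$, say to be zero) and then to produce $K$ itself---yields a locally-defined function $K$ on $X^\hash$ with $\partial^2 K/\partial\theta_t^2=A$, $\partial^2 K/\partial\theta_t\partial\theta_H=B$, $\partial^2 K/\partial\theta_H^2=C$, $\partial^2 K/\partial\theta_t\partial\theta_\alpha=D$ and $\partial^2 K/\partial\theta_H\partial\theta_\alpha=E$. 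Substituting back gives the stated form of the flows; as in Proposition~\ref{propIII}, $K$ is determined only up to addition of functions affine in the $\theta$ variables, which does not affect the $\Xi_*(w_i)$.

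The step I expect to be the main obstacle is the verification of the integrability identities: this is where the real computation lies (the terse passage ``Using \eqref{two2} we find \eqref{sno}'' in the proof of Proposition~\ref{propIII} conceals the analogous labour there), and the presence of three flows and the extra coordinate $\theta_\alpha$---which encodes the local monodromy accompanying the Stokes data in this example---makes the bookkeeping heavier than in the Painlev\'e~III$_3$ case. One must be especially careful about which variables are held fixed when differentiating, since $p$ is defined implicitly through \eqref{eq:PIIHam} and $\theta_H$ mixes $q$, $r$ and $s$; a secondary subtlety is confirming the cancellation of the $\e^{-1}$ terms in $\Xi_*(w_3)$, which involves $\partial/\partial\alpha$, $\partial/\partial s$ and $\partial/\partial r$ simultaneously.
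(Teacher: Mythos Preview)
Your proposal is correct and follows essentially the same approach as the paper: compute the Jacobian of $\Xi$ by differentiating the formulae of Lemma~\ref{74}, invert the fibre block to obtain $\Xi_*^{-1}(\partial/\partial\theta_t)$, $\Xi_*^{-1}(\partial/\partial\theta_H)$, $\Xi_*^{-1}(\partial/\partial\theta_\alpha)$, push forward the flows $w_i$, observe the $\e^{-1}$ cancellations, and then argue as in Proposition~\ref{propIII} to produce $K$. The paper carries this out by introducing auxiliary period quantities $\kappa_i$ (which now include a residue contribution at $\infty_+$), writes down the pushed-forward flows explicitly, and then simply says ``arguing as in the proof of Proposition~\ref{propIII} gives the existence of the required function $K$''; you have spelled out the integrability identities that this last sentence hides, which is a reasonable expansion. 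One minor point: the paper does not set $\partial^2 K/\partial\theta_\alpha^2$ to zero but records a specific choice after the proof, though as you note this does not affect the statement of the proposition itself.
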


\begin{proof}
For $i=0,\ldots, 4$ we define
\begin{equation}
	\kappa_i:=\frac{1}{4}\int_{(q,-p)}^{(q,p)}\frac{x^i\dd x}{y^3}+\frac{1}{2}\Res_{\zeta=\infty_+}\left(\frac{d\zeta}{\zeta}\int^\zeta \frac{x^idx}{y^3} \right).
\end{equation} 
Differentiating the change of coordinates gives
\begin{equation}
	\Xi_*\left(\frac{\partial}{\partial t} \right)=\frac{\partial}{\partial t}-\kappa_2\frac{\partial}{\partial\theta_t}+\frac{1}{2}\left(\kappa_4-\frac{rq^2}{p} \right)\frac{\partial}{\partial \theta_H},
\end{equation}
\begin{equation}
	\Xi_*\left(\frac{\partial}{\partial H} \right)=\frac{\partial}{\partial H}-2\kappa_0\frac{\partial}{\partial\theta_t}+\left(\kappa_2-\frac{r}{p} \right)\frac{\partial}{\partial\theta_H},
\end{equation}
\begin{equation}
	\Xi_*\left(\frac{\partial}{\partial\alpha} \right)=\frac{\partial}{\partial\alpha}+2\kappa_1\frac{\partial}{\partial\theta_t}-\left(\kappa_3-\frac{rq}{p}\right)\frac{\partial}{\partial\theta_H},
\end{equation}
\begin{equation}
	\Xi_*\left(\frac{\partial}{\partial r} \right)=-p\frac{\partial}{\partial\theta_H},\qquad \Xi_*\left(\frac{\partial}{\partial s} \right)=-\frac{\partial}{\partial\theta_\alpha}-q\frac{\partial}{\partial\theta_H},
\end{equation}
\begin{equation}
	\Xi_*\left(\frac{\partial}{\partial q} \right)=\frac{1}{p}\frac{\partial}{\partial\theta_t}-\left(\frac{r}{p}(2q^3+tq-\alpha)+s+\frac{q^2}{2p}\right)\frac{\partial}{\partial\theta_H}.
\end{equation}
We can invert this to find
\begin{equation}
\label{pegg}
	\Xi_*^{-1}\left(\frac{\partial}{\partial\theta_t} \right)=p\frac{\partial}{\partial q}-\left(\frac{r}{p}(2q^3+tq-\alpha)+s+\frac{q^2}{2p}\right)\frac{\partial}{\partial r},
\end{equation}
\begin{equation}
\label{pegg2}
	\Xi_*^{-1}\left(\frac{\partial}{\partial\theta_H} \right)=-\frac{1}{p}\frac{\partial}{\partial r}, \qquad \Xi_*^{-1}\left(\frac{\partial}{\partial\theta_\alpha} \right)=-\frac{\partial}{\partial s}+\frac{q}{p}\frac{\partial}{\partial r}.
\end{equation}

The isomonodromy flows become
\begin{equation}
 \Xi_*(w_1)  = \frac{\partial}{\partial t}+\frac{1}{\e}\frac{\partial}{\partial\theta_t}-\left(\kappa_2-\frac{r}{p} \right)\frac{\partial}{\partial\theta_t}+\left(\frac{\kappa_4}{2}-\frac{q^2r}{p}-rs-\frac{r^2}{p}(2q^3+tq-\alpha)\right)\frac{\partial}{\partial\theta_H},
\end{equation}
\begin{equation}
    \Xi_*(w_2) = \frac{\partial}{\partial H}+\frac{1}{\e}\frac{\partial}{\partial\theta_H}-2\kappa_0\frac{\partial}{\partial\theta_t}+\left(\kappa_2-\frac{r}{p} \right)\frac{\partial}{\partial\theta_H},
\end{equation}
\begin{equation}
    \Xi_*(w_3)=\frac{\partial}{\partial\alpha} +\frac{1}{\e}\frac{\partial}{\partial\theta_\alpha} +2\kappa_1\frac{\partial}{\partial\theta_t}-\left(\kappa_3-\frac{rq}{p}\right)\frac{\partial}{\partial\theta_H},
\end{equation}
and arguing as in the proof of Proposition \ref{propIII}  gives the existence of the required function $K$.
\end{proof}

The function $K$ satisfies
\begin{equation}\label{herehere1}
    \frac{\partial^3K}{\partial\theta_H^3}=0, \qquad \frac{\partial^3K}{\partial\theta_H^2\partial\theta_t}=-\frac{1}{p^2},\qquad \frac{\partial^3K}{\partial\theta_H\partial\theta_t^2}=-\frac{2r \left(2 q^3+ tq -\alpha\right)}{p^2}-\frac{q^2}{p^2}-\frac{s}{p},
\end{equation}
\begin{multline}\label{herehere2}
    \frac{\partial^3K}{\partial\theta_t^3}=-\frac{3 q^4}{4 p^2}+r \left(2q-\frac{3 q^2 \left(2q^3+tq-\alpha \right)}{p^2}\right)\\+  r^2 \left(6 q^2+t-\frac{3 \left(2q^3+tq-\alpha \right)^2}{p^2}\right)
    -\frac{3sq^2}{2p}-\frac{3sr\left(2q^3+tq-\alpha\right)}{p}-s^2,
\end{multline}

\begin{equation}
    \frac{\partial^3K}{\partial\theta_\alpha \partial \theta_H^2}=0, \qquad \frac{\partial^3K}{\partial\theta_\alpha \partial \theta_H \partial \theta_t}=\frac{q}{p^2}, \qquad \frac{\partial^3K}{\partial\theta_\alpha \partial \theta_t^2}=\frac{q^3}{p^2}-r+\frac{sq}{p} +\frac{2rq}{p^2} (2q^3+tq-\alpha),
\end{equation}
\begin{equation}
   \frac{\partial^3K}{\partial\theta_\alpha^2 \partial \theta_H}=0, \qquad \frac{\partial^3K}{\partial\theta_\alpha^2 \partial \theta_t}=-\frac{q^2}{p^2}. 
\end{equation}

\subsection{Pleba{\'n}ski function}

We explained in Section \ref{sec:Quad} that to obtain a Joyce structure we must restrict to the symplectic leaf \eqref{m} obtained by setting the residue $\alpha=0$. Equivalently, by \eqref{z3}, this is  the locus $z_3=0$. Recall from \eqref{poisson} that the Poisson structure  is given by \begin{equation}\{z_1,z_2\}=2\pi i \, \<\gamma_1,\gamma_2\>=2\pi i.\end{equation} Inverting this gives the symplectic form  
\begin{equation}\label{handyII} \Omega_0=-\frac{1}{2\pi i} \cdot dz_1\wedge dz_2 = \frac{1}{2\pi i}  (\omega_1\beta_{t,2}-\omega_2\beta_{t,1})\cdot dt\wedge dH=dt\wedge dH,\end{equation}
on $M$, where we used  \eqref{againy} and \eqref{hand}.

We must also  restrict the value of $\theta_3=2\pi i \theta_\alpha$. As discussed in Remark \ref{earlyish} this is a slightly subtle point. To get a Joyce structure it is necessary to take $\theta_3\in \pi i \bZ$. In what follows we shall make the choice $\theta_3=\pi i$, which by Lemma \ref{74} corresponds to setting $s=0$.

\begin{theorem}
Take $\alpha=0=s$. Then in terms of the co-ordinates $(z_1,z_2,\theta_1,\theta_2)$	the isomonodromic flows  take the form
\begin{equation}
\label{eq:thmPlebII}
	\frac{\partial}{\partial z_i}+\frac{1}{\e}\cdot \frac{\partial}{\partial\theta_i}+2\pi i \cdot \frac{\partial^2 W}{\partial\theta_i\partial\theta_1}\cdot \frac{\partial}{\partial\theta_2}- 2\pi i \cdot \frac{\partial^2 W}{\partial\theta_i\partial\theta_2}\cdot \frac{\partial}{\partial\theta_1},
\end{equation}	
where the Pleba{\'n}ski function $W$ is given by
\begin{equation}
\label{plebII}
    W=\frac{p}{48H(t^2-8H)}\left(-tq - 2r(2t^2+3q^2t-12H) +12r^2q(-t^2-q^2t+4H)-8r^3p^2t\right).
\end{equation}
\end{theorem}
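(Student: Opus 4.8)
The plan is to follow the same strategy as in the proof of Theorem~\ref{thm:Pleba{'n}skiIII}. We restrict everything to the symplectic leaf $M=\{\alpha=0\}$ and fix the remaining fibre coordinate by $\theta_3=\pi i$, i.e.\ $s=0$, exactly as in the hypotheses of the statement; by Proposition~\ref{thm:XiII} the two flows $\Xi_*(w_1)$ and $\Xi_*(w_2)$ are then governed, in the coordinates $(t,H,\theta_t,\theta_H)$ on the four-dimensional space $X^\hash=T_M$, by a single locally-defined function $K$, whose third $\theta$-derivatives are recorded in \eqref{herehere1}--\eqref{herehere2}.

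First I would pass to rescaled coordinates making the period map symplectic: set $u_1=H\sqrt{2\pi i}$, $u_2=t\sqrt{2\pi i}$, $\phi_1=\theta_H\sqrt{2\pi i}$ and $\phi_2=\theta_t\sqrt{2\pi i}$. By \eqref{handyII} the change of base coordinates $(z_1,z_2)\mapsto(u_1,u_2)$ is symplectic, and dividing the flows of Proposition~\ref{thm:XiII} by $\sqrt{2\pi i}$ and rewriting them in $(u_i,\phi_j)$ puts them in the shape $\frac{\partial}{\partial u_i}+\frac1\e\frac{\partial}{\partial\phi_i}+2\pi i\,\frac{\partial^2K}{\partial\phi_i\partial\phi_1}\frac{\partial}{\partial\phi_2}-2\pi i\,\frac{\partial^2K}{\partial\phi_i\partial\phi_2}\frac{\partial}{\partial\phi_1}$. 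Applying \cite[Prop.~4.2]{A2} to the symplectic transformation $(z_1,z_2)\mapsto(u_1,u_2)$ then produces the flows in the canonical form \eqref{eq:thmPlebII}, together with a Plebański function $W$ that is pinned down only up to the addition of a function affine-linear in the $\theta_j$.

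It then remains to identify this $W$ with the explicit rational expression $\widetilde W$ of \eqref{plebII}. By \cite[Prop.~4.2]{A2} the functions $W$ and $K$ differ by a cubic polynomial in the $\theta$ variables, so they have the same fourth $\theta$-derivatives; hence it suffices to check that the five independent fourth $\theta_t,\theta_H$-derivatives of $K$ agree with those of $\widetilde W$. The former are obtained by applying the inverse vector fields $\Xi_*^{-1}(\partial/\partial\theta_t)$ and $\Xi_*^{-1}(\partial/\partial\theta_H)$ of \eqref{pegg}--\eqref{pegg2} to \eqref{herehere1}--\eqref{herehere2}, then setting $\alpha=s=0$ and reducing with $p^2=q^4+tq^2+2H$; the latter are a direct rational computation from \eqref{plebII}. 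This gives that $W-\widetilde W$ is a polynomial of degree at most three in $\theta$. Now $W$ was, by the discussion preceding \eqref{per}, chosen to satisfy the full periodicity \eqref{per}, and $\widetilde W$ is likewise $2\pi i$-periodic in $\theta$, being built from $(q,p,r)$, which, as in Lemma~\ref{abhol} (after possibly passing to a finite cover), are local coordinates on $X^\hash$ and hence local functions of $(z_i,e^{\theta_j})$. A $2\pi i$-periodic polynomial in $\theta$ is constant, so $W-\widetilde W$ depends only on $(t,H)$, and $\widetilde W$ is therefore a valid representative of the Plebański function, which is the assertion.

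I expect the only genuinely laborious step to be the fourth-derivative verification: one has to apply $\Xi_*^{-1}(\partial/\partial\theta_t)=p\,\partial/\partial q-\big(\tfrac{r}{p}(2q^3+tq-\alpha)+s+\tfrac{q^2}{2p}\big)\partial/\partial r$ and $\Xi_*^{-1}(\partial/\partial\theta_H)=-p^{-1}\partial/\partial r$ repeatedly to the rather long expressions \eqref{herehere1}--\eqref{herehere2}, simplifying with the defining relation for $p$ at each stage, and then match against the fourth derivatives of \eqref{plebII}. A minor point to handle with care is the order of operations around the restriction to the leaf $\alpha=0$, $s=0$: because $\theta_\alpha$ is transverse to the $\theta_t,\theta_H$ directions and the operators \eqref{pegg}--\eqref{pegg2} for $\partial/\partial\theta_t$ and $\partial/\partial\theta_H$ do not involve $\partial/\partial s$, one may compute the fourth $\theta_t,\theta_H$-derivatives of $K$ freely and only impose $\alpha=s=0$ at the end.
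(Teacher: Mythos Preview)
Your proposal is correct and follows essentially the same route as the paper. The paper's own proof is extremely terse --- it adapts Lemma~\ref{abhol} to show the analogue of the map \eqref{map} is an open embedding (so that $(q,p,r)$ are local functions of $(z_i,e^{\theta_j})$), checks that the fourth $\theta$-derivatives of $K$ and of the explicit expression \eqref{plebII} agree via the operators \eqref{pegg}--\eqref{pegg2}, and then says ``argue as in the proof of Theorem~\ref{thm:Pleba{'n}skiIII}''; you have simply written out what that last sentence means, including the symplectic rescaling by $\sqrt{2\pi i}$ and the appeal to \cite[Prop.~4.2]{A2}.

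One small point worth tightening: when you invoke full periodicity \eqref{per} of $W$ to conclude that the cubic-in-$\theta$ difference $W-\widetilde W$ is constant, you are using the convention announced in Section~\ref{early} before it has been established in this example. The paper's version of the argument (in the Painlev{\'e}~III$_3$ proof) avoids this by using only the weaker periodicity \eqref{go2} of the second derivatives of $W$, which is automatic from the flows descending to $X^\hash$; combined with the periodicity of $\widetilde W$ this already forces $W-\widetilde W$ to be at most linear in $\theta$, which is exactly the residual indeterminacy in $W$. Your conclusion is the same, but phrasing it this way removes the slight circularity. Also note that in this case the abelian holonomy map is an open embedding rather than a double cover, so your parenthetical about passing to a finite cover is unnecessary here.
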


\begin{proof}
For a fixed point $(t,H)\in M$ we can   adapt the argument of Lemma \ref{abhol} to show that the  map
    \begin{equation}\label{map2} \big\{(q,p)\in (\bC^*)^2: p^2=Q_0(q)\big\}\times \big\{r\in \bC\big\}\to \big\{(e^{\theta_1},e^{\theta_2})\big\}\in (\bC^*)^2\end{equation}
    defined by the equations \eqref{hi}, \eqref{eq:thetat1} and \eqref{eq:thetah1} is  an open embedding. We can therefore view $K$ as a local function of $(z_1,z_2,\theta_1,\theta_2)$. Applying the operators \eqref{pegg} - \eqref{pegg2} to \eqref{herehere1} - \eqref{herehere2} and to \eqref{plebII} shows that the  fourth derivatives of $K$ and $W$ with respect to the theta variables coincide. We can then argue as in the proof of Theorem \ref{thm:Pleba{'n}skiIII}.
\end{proof}

\begin{remark}
 We can also integrate the fourth derivatives of $K$ for arbitrary values of the variables $\alpha,s$. It takes the form
\begin{equation}
    W= \frac{1}{\Delta}\cdot \sum_{k=0}^3\sum_{m=0}^{3-k}W_{k,m}(q)r^k s^m.
\end{equation}
The functions $W_{k,m}(q)$ are as follows:
\begin{equation}
    W_{0,0}=
    p \left[\frac{2}{3}c_1 q +2 c_2\right],\quad W_{0,1}=
q \left[4c_1q^2 +4 c_2  q +\frac{4}{3} \left(96 H^2+4 H t^2-2 t^4-27 \alpha ^2 t\right)\right],
\end{equation}
\begin{equation}
    W_{0,2}=
    8p \left[c_1q -c_2\right],\quad     W_{0,3}=
    16 q \left[\frac{c_1}{3}q^2 -c_2 q+4 H t^2-3 \alpha ^2 t -32 H^2\right],
\end{equation}
\begin{equation}
    W_{1,0}=
    4p\left[c_1q^2 +2c_2  q +\frac{1}{3} \left(-96 H^2+28 H t^2-2 t^4-45 \alpha ^2 t\right) \right],\quad W_{1,1}=
16c_1p^2, 
\end{equation}
\begin{equation}
    W_{1,2}=
    16p\left[c_1q^2-2c_2  q +4 H t^2-3 \alpha ^2 t-32 H^2 \right],
\end{equation}
\begin{equation}
    W_{2,0}=
    8p\left(c_1 q^3+c_2 q^2+q \left(-32 H^2+12 H t^2-21t \alpha ^2-t^4\right)+\alpha  \left(27 \alpha ^2-24 H t+t^3\right) \right),
\end{equation}
\begin{equation}
    W_{2,1}=
    16p^2\left(c_1 q-c_2 \right),\qquad     W_{3,0}=
    \frac{16p^3}{3}c_1,
\end{equation}
where 
\begin{equation}
    c_1:=-t^3+8Ht-18\alpha^2,\quad c_2:=-\alpha(t^2+24H),
\end{equation}
\begin{equation}
    \Delta:=16 \left(-27 \alpha ^4-\alpha ^2 t \left(t^2-72 H\right)+2 H \left(t^2-8 H\right)^2\right).
\end{equation}

\end{remark}

\subsection{Behaviour on the zero section via uniformisation}

Recall that we have set $\alpha=0=s$. Introducing the variables 
\begin{equation}
    \label{haha}X=\frac{t}{12}+\frac{1}{2}(y+x^2), \qquad Y=\frac{tx}{2}+xy+x^3,\end{equation}
the spectral curve \eqref{eq:SigmaPII}  takes the standard Weierstrass form
\begin{equation}Y^2=4X^3-g_2 X -g_3, \qquad g_2=\frac{1}{12}(24H+t^{2}), \qquad g_3=\frac{ t}{216}(72H -t^{2}).
\end{equation}
The inverse transformation to \eqref{haha} is
\begin{equation}\label{trees}x = \frac{3Y}{6X+t}, \qquad y =  \frac{144X^2+48Xt+72H-5t^2}{24(6X+t)}.\end{equation}

We can uniformise $\Sigma$ by writing $X=\wp(u)$ and $Y=\wp'(u)$, with $\wp(u)$ the Weierstrass $\wp$-function for the lattice spanned by the periods $\omega_i$. Beware that the involution $u\mapsto -u$ does not correspond to $\sigma\colon (x,y)\mapsto (x,-y)$ but rather to $\sigma'\colon (x,y)\mapsto (-x,y)$. A calculation shows that
\begin{equation}\omega=\frac{dx}{y}=\frac{dX}{Y}=du, \qquad \beta_t=\frac{x^2\omega}{2} = \frac{X\, dX}{Y}-\frac{t\, dX}{12Y} -\frac{dx}{2}.\end{equation}

Recall from Section \ref{summy} the definition of the function $S=S(t,H)$ and the linear Joyce connection.


\begin{theorem}\label{thm:Pleba{'n}skiOperII}
\begin{itemize}
    \item[(i)]The Pleba{\'n}ski function $W$ is regular on the locus $\theta_t=\theta_H=0$.
    
    \item[(ii)] There is an identity
     \begin{equation}S= \log \left(H^2(8H-t^2)\right)^{-\frac{1}{48}}.\end{equation}
 \item[(iii)] The co-ordinates $(t,H-\tfrac{1}{8}
 t^2)$ on $M$ are flat for the linear Joyce connection.
 \end{itemize}
\end{theorem}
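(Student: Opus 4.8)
The plan is to follow closely the proof of Theorem~\ref{thm:Pleba{'n}skiOperIII}, using the Weierstrass uniformisation $X=\wp(u)$, $Y=\wp'(u)$ of the spectral curve~\eqref{eq:SigmaPII} set up just above. The first step is to locate the relevant points of $\Sigma$ in the uniformising coordinate: from~\eqref{haha} and the expansion~\eqref{eq:lambdaPIIExp} one reads off that $\infty_+$ is the pole of $\wp$, so $u(\infty_+)=0$ modulo the period lattice, while the point $v\in\bC$ corresponding to $(q,p)$ satisfies $\wp(v)=\tfrac{t}{12}+\tfrac12(p+q^2)$ and $\wp'(v)=q\,\bigl(2\wp(v)+\tfrac{t}{3}\bigr)$; inverting gives $q=\wp'(v)/\bigl(2\wp(v)+\tfrac{t}{3}\bigr)$ and $p=2\wp(v)-\tfrac{t}{6}-q^2$. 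Feeding this into Lemma~\ref{74}, together with~\eqref{thatone} and~\eqref{thisone} and the specialisation $\alpha=s=0$, should give $\theta_t=v$ and an expression for $\theta_H$ in terms of $v$, $r$ and the Weierstrass $\zeta$-function. The step I expect to be the main obstacle is precisely this one: as noted in the text the involution $u\mapsto-u$ realises $\sigma'\colon(x,y)\mapsto(-x,y)$ rather than $\sigma$, so one must take care which combinations of $\wp,\wp',\zeta$ are $\sigma$-anti-invariant when the Abelian integrals of Lemma~\ref{74} are rewritten in terms of $v$.

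The second step is to introduce good coordinates $(v,w)$ in place of $(q,r)$, the analogue of the substitution $w=(1+2r)/v$ of the Painlev\'e~III$_3$ case. The leading behaviour near the zero section is $q\sim -v^{-1}$, $p\sim v^{-2}$, $r\sim\tfrac12 v$, so the correct choice has the form $w=(v-2r)/(c\,v^2)$ for a suitable constant $c$, arranged so that $v=w=0$ is exactly the locus $\theta_t=\theta_H=0$ and $\theta_t=v$, $\theta_H=w+O((v,w)^3)$. The Laurent expansions of the Weierstrass functions then yield expansions of $q,p,r$ in $(v,w)$, the analogue of~\eqref{pesto}. Substituting these into the explicit formula~\eqref{plebII} for $W$ produces a leading behaviour of the shape $W=\tfrac{1}{c(t,H)}\bigl(a(t,H)\,v+b(t,H)\,w\bigr)+O((v,w)^3)$; this shows $W$ is regular along $\theta=0$, which is part~(i), and via~\eqref{s} lets one read off $\partial S/\partial t$ and $\partial S/\partial H$ from $a$ and $b$. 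Integrating recovers $S=\log\bigl(H^2(8H-t^2)\bigr)^{-1/48}$, which is part~(ii).

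For part~(iii) the plan is to substitute the same $(v,w)$-expansions of $q,p,r$ into the formulae~\eqref{herehere1}--\eqref{herehere2} for the third $\theta$-derivatives of $K$ (with $\alpha=s=0$) and to evaluate along $v=w=0$. In contrast with the Painlev\'e~III$_3$ case not all of these vanish: one finds that $\partial^3K/\partial\theta_H^3$, $\partial^3K/\partial\theta_H^2\partial\theta_t$ and $\partial^3K/\partial\theta_H\partial\theta_t^2$ all vanish on the zero section, but $\partial^3K/\partial\theta_t^3\big|_{\theta=0}$ is a nonzero constant, equal to $-\tfrac14$. Arguing as in the proof of~\cite[Theorem 4.5]{A2}, this says that in the coordinates $(t,H)$ the only nonvanishing Christoffel symbol of the linear Joyce connection is $\Gamma^H_{tt}$, and it is constant; the standard inhomogeneous transformation law for Christoffel symbols then shows that changing the vertical coordinate by $H\mapsto H-\tfrac18 t^2$ kills it, so $(t,H-\tfrac18 t^2)$ are flat. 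The shift $\tfrac18$ is exactly $-\tfrac12\,\partial^3K/\partial\theta_t^3\big|_{\theta=0}$. As in the Painlev\'e~III$_3$ case one could alternatively reach the same conclusion by approaching the zero section along the isomonodromic directions, i.e.\ through poles of the Painlev\'e~II equation.
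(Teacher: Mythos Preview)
Your proposal is correct and follows essentially the same route as the paper. The paper makes the identical choices: it shows $\theta_t=v$ and, using the involution $\sigma'$ exactly as you flag, obtains $\theta_H=\zeta(v)+\tfrac{tv}{12}+\tfrac{q}{2}-pr$; it then introduces $w$ via $r=\tfrac{v}{2}-wv^2+\tfrac{tv^3}{12}$ (so your ansatz $w=(v-2r)/(2v^2)$ is right to leading order, with a small cubic correction), expands $W$ to first order in $(v,w)$ for parts~(i)--(ii), and for part~(iii) finds exactly the values you predict for the third $\theta$-derivatives of $K$ on the zero section, yielding the single Christoffel symbol $\Gamma^H_{tt}=-\tfrac14$ and hence the flat coordinates $(t,H-\tfrac18 t^2)$.
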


\begin{proof}
Recall that we have set $\alpha=0=s$. We denote by $v\in \bC$ a point corresponding to the point $(q,p)\in \Sigma$.  Using the involution $\sigma'$ the formula \eqref{thisone} becomes 
       \begin{equation}\label{shoulder}\theta_H=-\frac{1}{2}\int_{(-q,p)}^{(q,p)}(\beta_t-\half dx) -pr-\frac{q}{2}=\zeta(v) +\frac{tv}{12}+\frac{q}{2}-pr.\end{equation}
       Using the expansion
\begin{equation}\zeta(v)=\frac{1}{v}-\left(\frac{H}{30}+\frac{t^2}{720}\right)v^3 +O(v^5),\end{equation}
and introducing a new parameter $w$ by writing \begin{equation}\label{rr}r=\frac{v}{2} - wv^2+\frac{tv^3}{12}\end{equation} we then have
\begin{equation}\label{care}\theta_t=v, \qquad \theta_H=w+\frac{twv^2}{6} +O((v,w)^3).\end{equation}

The formulae \eqref{trees} and the expansion of $\wp(v)=-\zeta'(v)$  gives
\begin{equation}
    q= -\frac{1}{v}+\frac{tv}{6}+\left(\frac{H}{5}-\frac{7t^2}{360}\right) v^3+O(v^5), \qquad p=\frac{1}{v^2}+\frac{t}{6}+\left(\frac{3H}{5}-\frac{7t^2}{120}\right) v^2 + O(v^4).
\end{equation}
Substituting these expressions and \eqref{rr} into \eqref{plebII} we find that $W$ is regular  along the locus $\theta_t=\theta_H=0$ with leading order behaviour 
\begin{equation}W=\frac{tv}{24(8H-t^2)}\ - \frac{(12H-t^2)w}{24H(8H-t^2)} + O((v,w)^3).\end{equation}
This gives (i) and (ii). For (iii) we substitute \eqref{care} into \eqref{herehere1} - \eqref{herehere2} to get
\begin{equation}\frac{\partial^3 K}{\partial \theta_t^3}\bigg\vert_{\theta_t=\theta_H=0}=-\frac{1}{4}, \qquad \frac{\partial^3 K}{\partial \theta_t^2\theta_H}\bigg\vert_{\theta_t=\theta_H=0}=\frac{\partial^3 K}{\partial \theta_H^3}\bigg\vert_{\theta_t=\theta_H=0}=\frac{\partial^3 K}{\partial \theta_t\partial \theta_H^2}\bigg\vert_{\theta_t=\theta_H=0}=0.\end{equation}
  It follows that the linear Joyce connection $\nabla^J$  satisfies
  \begin{equation}
      \nabla^J_{\frac{\partial}{\partial t}}\left(\frac{\partial}{\partial t}\right)=-\frac{1}{4}\frac{\partial}{\partial H}, \qquad \nabla^J_{\frac{\partial}{\partial H}}\left(\frac{\partial}{\partial t}\right)=0, \qquad \nabla^J_{\frac{\partial}{\partial t}}\left(\frac{\partial}{\partial H}\right)=\nabla^J_{\frac{\partial}{\partial H}}\left(\frac{\partial}{\partial H}\right)=0,
  \end{equation}
  and a basis of flat vector fields is therefore given by $(\partial_t+\tfrac{1}{4} t\partial_H,\partial_H)$. The corresponding flat co-ordinates are $(t,H- \tfrac{1}{8}t^2)$ as claimed.
\end{proof}

\subsection{Alternative approach: poles of Painlev\'e II}
In the same way as for the case of Painlev\'e III, we can alternatively study the limit $\theta_t=\theta_H=0$ by studying the behaviour of the Pleba\'nski function at a pole $t=t_0$ of the Painlev\'e II equation
\begin{equation}
	\e^2\frac{d^2q}{dt^2}=2q^3+qt-\alpha.
\end{equation}
We set $s=0$ but keep $\alpha$ as a parameter for now. Choosing one of the two branches at infinity we find the following behaviour 
\begin{equation}
	q=-\frac{\e }{t-t_0}+\frac{t_0 (t-t_0)}{6 \e }+\alpha\frac{(t-t_0)^2 }{4 \e ^2}+
    q_0 (t-t_0)^3
\end{equation}
\begin{equation}
p=\frac{\e ^2}{(t-t_0)^2}+\frac{t_0}{6}+O(t-t_0), \qquad 
	r=c (t-t_0)^2+\frac{ (t-t_0)}{2 \hbar },
\end{equation}
with the parameters $c,q_0$ left undetermined by the equations. 
We have
\begin{equation}
	\theta_t=\frac{t-t_0}{\e }+\frac{(t-t_0)^4}{12 \e ^3}+O(t-t_0)^5,\qquad \theta_H=-\e^2 c+\frac{(t-t_0)^2}{8 \e }+O(t-t_0)^3,
\end{equation}
so that $t-t_0$ and $c$ parametrise, at leading order, the deviations from the zero section, which is parametrised by $t_0$ and $H_0:=H\left(q(t_0),p(t_0),t_0 \right)$. For $\alpha\ne0$, the function $W$ does not vanish for $\theta=0$. It is given by
\begin{equation}
    W_0(H_0,t_0,\alpha)=\frac{\alpha\,  t_0}{24H_0(8H_0-t_0^2)}
\end{equation}
In particular it is regular along $\theta=0$. When $\alpha=0$, $W$ is associated to a Joyce structure, and
\begin{equation}
	 W\big|_{\alpha=s=0}\simeq  \frac{t_0^2-12 H_0}{24 H_0 \left(8 H_0-t_0^2\right)}\theta_H+\frac{t_0}{24 H_0 \left(8 H_0-t_0^2\right)}\theta _t,
\end{equation}
so that, analogously to what happened in the Painlev\'e III and I cases, the function vanishes linearly. The generating function is $S=\log\left(H^2(8H-t^2) \right)^{-\frac{1}{48}}$, matching with the result from uniformization.

We could do the same compuation by expanding instead arouns $s=1/2$, i.e. $\theta_\alpha=0$. The resulting expression would have been
\begin{equation}
	 W\big|_{\alpha=0,\,\theta_\alpha=0}\simeq  \frac{t_0^2-12 H_0}{24 H_0 \left(8 H_0-t_0^2\right)}\theta_H+\frac{t_0}{24H_0(8 H_0- t_0^2)}\theta _t,
\end{equation}
which has generating function $S=-\frac{1}{24}\log\left(H(8H-t^2)^2 \right)=\log\Delta^{-\frac{1}{24}} $.


\section{Tau functions}
\label{taufun}

In this section we recall from \cite{JT} the definition of the tau function associated to a Joyce structure. This definition has a rather experimental flavour and depends on the choice of certain extra data. In our two examples there are natural choices for this extra data, and we show that the resulting Joyce structure tau function produces a particular normalisation of the corresponding Painlev{\'e} tau function. In the final part we discuss relations with  topological string partition functions. 

\subsection{Joyce structure tau function}
\label{tauf}

We begin with a brief summary of the Joyce structure tau function, referring the reader to \cite{JT} for further details.
Given a Joyce structure on a complex manifold $M$, there is a complex \hk structure on $X=T_M$, and associated closed holomorphic 2-forms 
\begin{equation}
\label{simple}\Omega_0=\Omega_J+i\Omega_K=\half \sum_{p,q} \omega_{pq} \cdot dz_p \wedge dz_q, \qquad 2i \Omega_I=-\sum_{p,q} \omega_{pq} \cdot d \theta_p  \wedge d z_q,\end{equation}
\begin{equation}
\label{w}\Omega_\infty=\Omega_J-i\Omega_K=\half\sum_{p,q}\omega_{pq} \cdot d\theta_p\wedge d\theta_q+\sum_{p,q} \frac{\partial^2 W}{\partial \theta_p \partial\theta_q} \cdot d \theta_p \wedge d z_q +\sum_{p,q} \frac{\partial^2 W}{\partial z_p \partial \theta_q} \cdot dz_p\wedge dz_q,\end{equation}
where $(\omega_{pq})_{p,q=1}^n$ is the inverse matrix to $(\eta_{pq})_{p,q=1}^n$. For each $\epsilon\in \bC^*$ the combination
\begin{equation}
    \Omega_\epsilon=\epsilon^{-2}\Omega_0+2i\epsilon^{-1} \Omega_I+\Omega_\infty
\end{equation}
descends to the twistor fibre $Z_\epsilon$, which is the space of leaves of the flows \eqref{above}. 
If we define the Euler vector field on $X^\hash$ by
\begin{equation}
    \label{euler1}
  E=\sum_i z_i \cdot \frac{\partial}{\partial z_i}  
\end{equation}
then the homogeneity property \eqref{re2} gives rise to relations
\begin{equation}
\label{po}
    d i_E(\Omega_0)=2\Omega_0,\qquad d i_E(\Omega_I)=\Omega_I, \qquad di_E(\Omega_\infty)=0.
\end{equation}

Let us  choose symplectic potentials
\begin{equation}\label{pot}d\Theta_0=\Omega_0, \qquad d\Theta_{1}=\Omega_1,\qquad d\Theta_\infty=\Omega_\infty,\end{equation}
and define $\Theta_I=i_E(\Omega_I)$, so that by \eqref{po} we also have $d\Theta_I=\Omega_I$. 
Then locally on $X$ we can define a function $\tau$, unique up to multiplication by a constant scalar, by  writing
\begin{equation}
\label{tau}d\log(\tau)=\Theta_0+2i\Theta_I +\Theta_\infty-\Theta_1,\end{equation}

This definition is of course vacuous without some procedure for defining the symplectic potentials \eqref{pot} on the various twistor fibres. Although a complete solution to this problem is not yet known,  a  rough recipe  is discussed in \cite{JT}, and it is then interesting to see how this works out in the two examples of Joyce structures constructed in this paper.

\subsection{Computation of $\tau$ in the examples}
\label{sec:deftau}
In the examples of Joyce structures considered in this paper there are some  natural choices for the symplectic potentials \eqref{pot} above. Once these are made we can make the defining relation \eqref{tau}  explicit, and compare the resulting Joyce structure tau function with the corresponding Painlev{\'e} tau function. The relevant choices are as follows:

\begin{itemize}
    
\item[(i)]
 The base $M$ can be identified with an open subset of the cotangent bundle of the one-dimensional manifold  of isomonodromic times. There is then a canonical Liouville form $\lambda$ on $M$ satisfying $d\lambda=-\Omega_0$. Using \eqref{po} we can  then take
\begin{equation}\Theta_0=i_E(\Omega_0)+\lambda.\end{equation}

\item[(ii)]
 The twistor fibre $Z_\epsilon$ with $\epsilon\in \bC^*$   admits canonical systems of logarithmic Fock-Goncharov co-ordinates $(x_1,x_2)$ satisfying
\begin{equation}\Omega_\epsilon=\omega_{12} \cdot dx_1\wedge dx_2.\end{equation}
Having chosen one of these co-ordinate systems we can then set
\begin{equation}\Theta_\epsilon=\omega_{12} \cdot x_1 dx_2.\end{equation}

\item[(iii)]
When $\epsilon=\infty$ the function $r\colon X^\hash\to \bC$ is constant along the flows \eqref{above} and hence  descends to a function $r\colon Z_\infty\to \bC$. The locus  $\{r=0\}\subset Z_\infty$ is then Lagrangian for the symplectic form $\Omega_\infty$. It follows that if we restrict  $\tau$ to the subset $\{r=0\}\subset X^\hash$ then we can simply take $\Theta_\infty=0$.  Note that setting $r=0$ in \eqref{con2} trivialises the $\epsilon$-independent part of the connection $\nabla$ and thus reduces the isomonodromy flows to those normally considered in the Painlev{\'e} literature.
\end{itemize}

 \subsection{Painlev{\'e} III$_3$ case}
In this case we have  $\eta_{12}=4\pi i$ and $\omega_{12}=-1/4\pi i$. From  \eqref{handy} we have
\begin{equation}\Omega_0=-\frac{1}{4\pi i}\,  dz_1\wedge dz_2 =ds\wedge dH,\end{equation}
Similarly, using \eqref{eq:thetatHdef} we find\begin{equation}2i\Omega_I=\frac{1}{4\pi i} (dz_1\wedge d\theta_2 - dz_2\wedge d\theta_1)=  dH\wedge d\theta_s-ds\wedge d\theta_H,\end{equation}
where we used the relations
\begin{equation}\beta_1\frac{\partial \beta_2}{\partial H}-\beta_2\frac{\partial \beta_1}{\partial H} = \omega_1\frac{\partial \beta_2}{\partial s} - \omega_2\frac{\partial \beta_1}{\partial s}, \qquad \beta_1\frac{\partial \omega_2}{\partial H} - \beta_2\frac{\partial \omega_1}{\partial H} = \omega_1\frac{\partial \omega_2}{\partial s} - \omega_2\frac{\partial \omega_1}{\partial s}\end{equation}
to cancel the additional terms. These relations in turn follow by differentiating \eqref{l1} and noting that the definition of the forms $\omega$ and $\beta$ implies
\begin{equation}
\frac{\partial \omega_i}{\partial s}=\frac{\partial \beta_i}{\partial H}.
\end{equation}

Using the formulae \eqref{ths} we now find

\begin{equation}\begin{split}2i\Omega_I&=-\frac{r(tq^2-1)}{q^3p}\,  dH\wedge ds +2qp\, ds\wedge dr+\frac{1}{2q^2p}\, dH\wedge dq+\frac{2r(tq^2-1)+tq^2}{2pq^3}\,  ds\wedge dq \\
&=-dq\wedge dp +2qp\, ds\wedge dr +\varpi,\end{split}
\end{equation}
where the 2-form $\varpi$ vanishes when $r=0$.  

There is a natural cotangent bundle structure on $M$ for which $\rho\colon M\to B$ is the projection to the Painlev{\'e} time $t$. The associated Liouville form is $\lambda =  H ds$. Following the recipe from Section \ref{sec:deftau} 
we get
\begin{equation}
\begin{split}
    \Theta_0 & =i_E(\Omega_0)+\lambda=4 dH-H ds, \\
    2i\Theta_I & = i_E(2i\Omega_I)=2q \, dp +3p\, dq+8qp\, dr+i_E(\varpi),
\end{split}
\end{equation}
where we used the formula \eqref{euler} for the Euler vector field.

The Lagrangian in the twistor fibre $Z_\infty$ is obtained by setting $r=0$. Denoting this locus by $Y^\hash\subset X^\hash$ we get
\begin{equation}
\begin{split}
d\log(\tau|_{Y^\hash})&=\left(4 \, dH-H \, ds\right) +\left(2q \, dp +3p\,dq\right) +\frac{1}{4\pi i}\, x_1 dx_2 \\
&=- H \,d s+ p\, d q  + d\left(4H+2qp\right)+\frac{1}{4\pi i}\, x_1 dx_2.\end{split}\end{equation}
It follows that the restriction of the Joyce structure tau function to the locus $r=0$ can  be identified with the Painlev\'e III$_3$ tau function, since its logarithmic derivative, up to an exact form and a monodromy-dependent normalization, is the classical action differential \cite{Its2018}.  The Joyce structure tau function can be then thought as an extension of the Pailev\'e tau function to an $\e$-deformed isomonodromic problem with a reference connection. As can be expected, when we restrict to the isomonodromic flows, so that everything depends on $t$ only, we have
\begin{equation}
    \partial_t\log(\tau|_{Y^\hash})=H.
\end{equation}

\subsection{Painlev{\'e} II case}
Recall that to define the Joyce structure we have set $\alpha=0=s$. 
In this case we have  $\eta_{12}=2\pi i$ and $\omega_{12}=-1/2\pi i$. From \eqref{handyII} 
\begin{equation}\Omega_0=-\frac{1}{2\pi i}\, dz_1\wedge dz_2 = dt\wedge dH,\end{equation}
Similarly,  we find\begin{equation}2i\Omega_I=\frac{1}{2\pi i} (dz_1\wedge d\theta_2 - dz_2\wedge d\theta_1)=  dH\wedge d\theta_t-dt\wedge d\theta_H,\end{equation}
where we  cancelled the additional terms in the same way as in the previous section. 

Using Lemma \ref{74} we now find
\begin{equation}\begin{split}2i\Omega_I&=\frac{r}{p}\, dt\wedge dH +p\, dt\wedge dr+\frac{1}{p}\, dH\wedge dq+\left(\frac{r}{p}(2q^3 + tq)+\frac{q^2}{2p}\right) dt\wedge dq \\
    &=r\, dt\wedge dp -dq\wedge dp+p \, dt\wedge dr.\end{split}
\end{equation}
The Euler vector field is
\begin{equation}E=\frac{4H}{3}\frac{\partial}{\partial H} +\frac{2t}{3}\frac{\partial}{\partial t} +\frac{q}{3}\frac{\partial}{\partial q}-\frac{r}{3}\frac{\partial}{\partial r}\end{equation}
and $p$ scales with weight $\tfrac{2}{3}$.

There is a natural cotangent bundle structure on $M$ for which $\rho\colon M\to B$ is the projection to the Painlev{\'e} time $t$. The associated Liouville form is $\lambda =  H dt$. Then

\begin{gather}\Theta_0=i_E(\Omega_0)+\lambda=\frac{1}{3}\left(2t\,  dH- H dt\right), \\
 2i\Theta_I= i_E(2i\Omega_I)=\frac{1}{3}\left(-pr\, dt +2p \, dq+(2rt-q)dp+2pt\,dr\right).\end{gather}

The Lagrangian in the twistor fibre $Z_\infty$ is obtained by setting $r=0$. Denoting this locus by $Y^\hash\subset X^\hash$ we get 
\begin{equation}\label{eq:PIItau}\begin{split}
d\log(\tau|_{Y^\hash})&=\frac{1}{3}\left(2t\, dH- H dt \right)+\frac{1}{3}\left(2p\, dq - q\, dp\right) +\frac{1}{2\pi i}\, x_1 dx_2 \\
&=- H d t+ p d q  + \frac{1}{3}\, d\left(2tH-qp\right)+\frac{1}{2\pi i}\, x_1 dx_2.\end{split}\end{equation}

Again, for $r=0$ this is the classical action differential up to an exact form and a monodromy-dependent normalization \cite{Its2016,Its2018}. In particular, we obtain the same expression as \cite{Its2016} up to a factor of 2 due to a rescaling of the symplectic form between our conventions and theirs. When we restrict to the isomonodromic flows we have 
\begin{equation}
    \partial_t\log(\tau|_{Y^\hash})=H.
\end{equation}

\subsection{Topological string partition functions}

In this section we briefly discuss how our work relates to the literature on isomonodromic deformations, class $S$ theories, and topological string theory. Readers interested only in the mathematical aspects of Joyce structures and Painlev\'e equations may safely skip it.

Following \cite{Gamayun2012}, it was soon realised \cite{Bonelli2017} that expansions of Painlev\'e tau functions around critical points reproduce Nekrasov partition functions of corresponding four-dimensional class $S$ theories on the self-dual Omega-background \cite{Nekrasov2003}. This correspondence has been rigorously proven for Fuchsian systems on the Riemann sphere and on the torus \cite{Gavrylenko2016b,DMDG2020}, as well as for many non-Fuchsian cases, including our Painlev\'e III$_3$ example, via Fredholm determinant representations \cite{Gavrylenko2017,Cafasso2017,Gavrylenko2018b}. In the case of Painlev\'e I and II,  the class $S$ partition function can only be computed recursively, but its asymptotic expansion agrees with that of the Painlev\'e tau function to all known orders \cite{Bonelli2017}. For the homogeoneous Painlev\'e II equation  relevant to the  Joyce structure of Section \ref{sec:two} a Fredholm determinant expression is available  \cite{Desiraju2021},  but no explicit combinatorial expansion is known.

It was shown by explicit computation in several examples \cite{Eguchi2003,Taki2007}  that four-dimensional Nekrasov partition functions are degenerations of partial resummations of topological string partition functions on toric Calabi-Yau threefolds. This naturally leads to the identification of isomonodromic tau functions with  nonperturbative completions of topological string partition functions. This view is supported by \cite{Bonell2017b}, which argues that a nonperturbative topological string partition function for local $\mathbb{P}^1\times\mathbb{P}^1$, defined through the TS/ST correspondence \cite{Grassi2014}, coincides with the Painlev\'e III$_3$ tau function in an appropriate scaling limit.

This correspondence was related to moduli spaces of quadratic differentials in \cite{Coman2022,Coman2020}, where isomonodromic tau functions were proposed to be identified with B-model partition functions of non-compact Calabi-Yau threefolds given locally by equations of the form 
\begin{equation}\label{eq:ClassSigma}
    uv+y^2=Q_0(x)\subset \mathbb{C}^4.
\end{equation}

Although previous computations were based either on direct matching with Nekrasov partition functions, topological vertex, or conjectural string dualities, a direct relation between the B-model topological string partition functions on \eqref{eq:ClassSigma} and Painlev\'e tau functions was recently established in \cite{Bonelli2024}. There it was shown that suitably normalised Painlev\'e tau functions are holomorphic and modular solutions to the holomorphic anomaly equations \cite{BCOV} characterising B-model topological string free energies.

In our examples, we verify that an appropriate specialisation of the tau functions associated with Joyce structures (as introduced in \cite{JT}) coincides with the Painlev\'e II and III$_3$ tau functions. This supports the view that the tau function of a Joyce structure extends the conventional isomonodromic tau functions to cases with a nontrivial reference connection, and is naturally related to a nonperturbative topological string partition function. This adds evidence to the two known cases up to now, which are the Joyce structures associated to Painlev\'e I and to the derived category of coherent sheaves on the resolved conifold.

\begin{remark}\label{rmk:SF}
The expression \eqref{hi2} also  appears in the context of topological string theory on non-compact Calabi-Yau threefolds given locally by an equation of the form $y^2+uv=Q_0(x)$. More precisely one has $S\propto F_1$, where $F_1$ is the first correction to the refined B-model topological string free energy in the so-called Nekrasov-Shatashvili (NS) limit (see \cite[equation (3.23)]{Huang2011}).
The appearance of the NS free energy in the locus $\theta=0$ is particularly suggestive: as we show in the text, this locus corresponds to the poles of the Painlev\'e equation, or equivalently to the zeros of the tau function. Approaching such points corresponds precisely to the NS limit in the string theory description \cite{Bershtein2021}. It would be interesting to see if retaining higherorder terms in the $\epsilon$-expansion of $W$ along the isomonodromic flows near $\theta=0$ reconstructs further contributions to the NS free energy.
\end{remark}

\bibliographystyle{alph}
\bibliography{JoycePainleve_arxiv1.bib}

\providecommand{\href}[2]{#2}\begingroup\raggedright\begin{thebibliography}{10}

\bibitem{AP}
S.~Alexandrov and B.~Pioline, \emph{Heavenly metrics, {BPS} indices and
  twistors}, \href{https://doi.org/10.1007/s11005-021-01455-5}{\emph{Lett.
  Math. Phys.} {\bfseries 111} (2021) Paper No. 116, 41}.

\bibitem{AP2}
S.~Alexandrov and B.~Pioline, \emph{Conformal {TBA} for resolved conifolds},
  \href{https://doi.org/10.1007/s00023-021-01129-x}{\emph{Ann. Henri
  Poincar\'e} {\bfseries 23} (2022) 1909}.

\bibitem{AB}
D.~G.~L. Allegretti and T.~Bridgeland, \emph{The monodromy of meromorphic
  projective structures}, \href{https://doi.org/10.1090/tran/8093}{\emph{Trans.
  Amer. Math. Soc.} {\bfseries 373} (2020) 6321}.

\bibitem{BCOV}
M.~Bershadsky, S.~Cecotti, H.~Ooguri and C.~Vafa, \emph{{Kodaira-Spencer theory
  of gravity and exact results for quantum string amplitudes}},
  \href{https://doi.org/10.1007/BF02099774}{\emph{Commun. Math. Phys.}
  {\bfseries 165} (1994) 311}.

\bibitem{Bershtein2021}
M.~Bershtein, P.~Gavrylenko and A.~Grassi, \emph{{Quantum Spectral Problems and
  Isomonodromic Deformations}},
  \href{https://doi.org/10.1007/s00220-022-04369-y}{\emph{Commun. Math. Phys.}
  {\bfseries 393} (2022) 347}.

\bibitem{Bertola2022b}
M.~Bertola, J.~Harnad and J.~Hurtubise, \emph{{Hamiltonian structure of
  rational isomonodromic deformation systems}},
  \href{https://doi.org/10.1063/5.0142532}{\emph{J. Math. Phys.} {\bfseries 64}
  (2023) 083502}.

\bibitem{BertolaLectures}
M.~Bertola, \emph{Riemann surfaces and theta functions (lecture notes,
  {Concordia University})},  2006.

\bibitem{Bonelli2024}
G.~Bonelli, P.~Gavrylenko, I.~Majtara and A.~Tanzini, \emph{{Surface
  observables in gauge theories, modular Painlev\'e tau functions and
  non-perturbative topological strings}},
  \href{https://arxiv.org/abs/2410.17868}{{\ttfamily 2410.17868}}.

\bibitem{Bonell2017b}
G.~Bonelli, A.~Grassi and A.~Tanzini, \emph{{Seiberg\textendash{}Witten theory
  as a Fermi gas}},
  \href{https://doi.org/10.1007/s11005-016-0893-z}{\emph{Lett. Math. Phys.}
  {\bfseries 107} (2017) 1}.

\bibitem{Bonelli2017}
G.~Bonelli, O.~Lisovyy, K.~Maruyoshi, A.~Sciarappa and A.~Tanzini, \emph{On
  {P}ainlev\'e/gauge theory correspondence},
  \href{https://doi.org/10.1007/s11005-017-0983-6}{\emph{Lett. Math. Phys.}
  {\bfseries 107} (2017) 2359}.

\bibitem{Stab1}
T.~Bridgeland, \emph{Stability conditions on triangulated categories},
  \href{https://doi.org/10.4007/annals.2007.166.317}{\emph{Ann. of Math. (2)}
  {\bfseries 166} (2007) 317}.

\bibitem{Stab2}
T.~Bridgeland, \emph{Spaces of stability conditions},  in \emph{Algebraic
  geometry---{S}eattle 2005. {P}art 1}, vol.~80, Part 1 of \emph{Proc. Sympos.
  Pure Math.}, pp.~1--21.
\newblock Amer. Math. Soc., Providence, RI, 2009.
\newblock \href{https://doi.org/10.1090/pspum/080.1/2483930}{DOI}.

\bibitem{RHDT2}
T.~Bridgeland, \emph{Geometry from {D}onaldson-{T}homas invariants},  in
  \emph{Integrability, quantization, and geometry {II}. {Q}uantum theories and
  algebraic geometry}, vol.~103.2 of \emph{Proc. Sympos. Pure Math.},
  pp.~1--66.
\newblock Amer. Math. Soc., Providence, RI, 2021.
\newblock \href{https://doi.org/10.1090/pspum/103.2/01851}{DOI}.

\bibitem{CH}
T.~Bridgeland, \emph{{Joyce structures on spaces of quadratic differentials}},
  \href{https://arxiv.org/abs/2203.17148}{{\ttfamily 2203.17148}}.

\bibitem{JT}
T.~Bridgeland, \emph{{Tau Functions from Joyce Structures}},
  \href{https://doi.org/10.3842/SIGMA.2024.112}{\emph{SIGMA} {\bfseries 20}
  (2024) 112}.

\bibitem{JTw}
T.~Bridgeland, \emph{{Joyce structures and their twistor spaces}},
  \href{https://doi.org/10.1016/j.aim.2024.110089}{\emph{Adv. Math.} {\bfseries
  462} (2025) 110089}.

\bibitem{A2}
T.~Bridgeland and D.~Masoero, \emph{On the monodromy of the deformed cubic
  oscillator}, \href{https://doi.org/10.1007/s00208-021-02337-w}{\emph{Math.
  Ann.} {\bfseries 385} (2023) 193}.

\bibitem{BS}
T.~Bridgeland and I.~Smith, \emph{Quadratic differentials as stability
  conditions}, \href{https://doi.org/10.1007/s10240-014-0066-5}{\emph{Publ.
  Math. Inst. Hautes \'Etudes Sci.} {\bfseries 121} (2015) 155}.

\bibitem{Strachan}
T.~Bridgeland and I.~A.~B. Strachan, \emph{Complex hyperk\"ahler structures
  defined by {D}onaldson-{T}homas invariants},
  \href{https://doi.org/10.1007/s11005-021-01388-z}{\emph{Lett. Math. Phys.}
  {\bfseries 111} (2021) Paper No. 54, 24}.

\bibitem{Cafasso2017}
M.~Cafasso, P.~Gavrylenko and O.~Lisovyy, \emph{{Tau functions as {Widom}
  constants}}, \href{https://doi.org/10.1007/s00220-018-3230-9}{\emph{Commun.
  Math. Phys.} {\bfseries 365} (2019) 741}.

\bibitem{Chekhov2017}
L.~O. Chekhov, M.~Mazzocco and V.~N. Rubtsov, \emph{Painlev\'e{} monodromy
  manifolds, decorated character varieties, and cluster algebras},
  \href{https://doi.org/10.1093/imrn/rnw219}{\emph{Int. Math. Res. Not. IMRN}
  (2017) 7639}.

\bibitem{Coman2020}
I.~Coman, P.~Longhi and J.~Teschner, \emph{{From quantum curves to topological
  string partition functions II}},
  \href{https://arxiv.org/abs/2004.04585}{{\ttfamily 2004.04585}}.

\bibitem{Coman2022}
I.~Coman, E.~Pomoni and J.~Teschner, \emph{{From Quantum Curves to Topological
  String Partition Functions}},
  \href{https://doi.org/10.1007/s00220-022-04579-4}{\emph{Commun. Math. Phys.}
  {\bfseries 399} (2023) 1501}.

\bibitem{DMDG2020}
F.~Del~Monte, H.~Desiraju and P.~Gavrylenko, \emph{{Isomonodromic tau functions
  on a torus as Fredholm determinants, and charged partitions}},
  \href{https://doi.org/10.1007/s00220-022-04458-y}{\emph{Commun. Math. Phys.}
  {\bfseries 398} (2023) 1029}.

\bibitem{DMDG2022}
F.~Del~Monte, H.~Desiraju and P.~Gavrylenko, \emph{{Monodromy dependence and
  symplectic geometry of isomonodromic tau functions on the torus}},
  \href{https://doi.org/10.1088/1751-8121/acdc6c}{\emph{J. Phys. A} {\bfseries
  56} (2023) 294002}.

\bibitem{Desiraju2021}
H.~Desiraju, \emph{Fredholm determinant representation of the homogeneous
  {P}ainlev\'e{} {II} {$\tau$}-function},
  \href{https://doi.org/10.1088/1361-6544/abf84a}{\emph{Nonlinearity}
  {\bfseries 34} (2021) 6507}.

\bibitem{Dub1}
B.~Dubrovin, \emph{Geometry of {$2$}d topological field theories},  in
  \emph{Integrable systems and quantum groups ({M}ontecatini {T}erme, 1993)},
  vol.~1620 of \emph{Lecture Notes in Math.}, pp.~120--348.
\newblock Springer, Berlin, 1996.
\newblock \href{https://doi.org/10.1007/BFb0094793}{DOI}.

\bibitem{Dub2}
B.~Dubrovin, \emph{Painlev\'e{} transcendents in two-dimensional topological
  field theory},  in \emph{The {P}ainlev\'e{} property}, CRM Ser. Math. Phys.,
  pp.~287--412.
\newblock Springer, New York, 1999.

\bibitem{D}
M.~Dunajski, \emph{Null {K}\"ahler geometry and isomonodromic deformations},
  \href{https://doi.org/10.1007/s00220-021-04270-0}{\emph{Comm. Math. Phys.}
  {\bfseries 391} (2022) 77}.

\bibitem{DM}
M.~Dunajski and L.~J. Mason, \emph{Hyper-{K}\"ahler hierarchies and their
  twistor theory}, \href{https://doi.org/10.1007/PL00005532}{\emph{Comm. Math.
  Phys.} {\bfseries 213} (2000) 641}.

\bibitem{DM2}
M.~Dunajski and T.~Moy, \emph{Heavenly metrics, hyper-{L}agrangians and {J}oyce
  structures}, \href{https://doi.org/10.1112/jlms.13009}{\emph{J. Lond. Math.
  Soc. (2)} {\bfseries 110} (2024) Paper No. e13009}.

\bibitem{Eguchi2003}
T.~Eguchi and H.~Kanno, \emph{{Topological strings and Nekrasov's formulas}},
  \href{https://doi.org/10.1088/1126-6708/2003/12/006}{\emph{JHEP} {\bfseries
  12} (2003) 006}.

\bibitem{FarkasKra}
H.~M. Farkas, I.~Kra, H.~M. Farkas and I.~Kra, \emph{Riemann surfaces}.
  Springer, 1992.

\bibitem{fokas2006painleve}
A.~S. Fokas, A.~R. Its, A.~A. Kapaev and V.~Y. Novokshenov, \emph{Painlev{\'e}
  transcendents: the Riemann-Hilbert approach}, vol.~128. American Mathematical
  Society, 2006.

\bibitem{Gamayun2012}
O.~Gamayun, N.~Iorgov and O.~Lisovyy, \emph{{Conformal field theory of
  {P}ainlev\'e VI}}, \href{https://doi.org/10.1007/JHEP10(2012)038}{\emph{JHEP}
  {\bfseries 10} (2012) 038}.

\bibitem{Gavrylenko2018b}
P.~Gavrylenko, N.~Iorgov and O.~Lisovyy, \emph{{On solutions of the
  {Fuji-Suzuki-Tsuda} system}},
  \href{https://doi.org/10.3842/SIGMA.2018.123}{\emph{SIGMA} {\bfseries 14}
  (2018) 123}.

\bibitem{Gavrylenko2016b}
P.~Gavrylenko and O.~Lisovyy, \emph{{Fredholm Determinant and Nekrasov Sum
  Representations of Isomonodromic Tau Functions}},
  \href{https://doi.org/10.1007/s00220-018-3224-7}{\emph{Commun. Math. Phys.}
  {\bfseries 363} (2018) 1}.

\bibitem{Gavrylenko2017}
P.~Gavrylenko and O.~Lisovyy, \emph{{Pure SU(2) gauge theory partition function
  and generalized Bessel kernel.}},
  \href{https://doi.org/10.1090/pspum/098/01727}{\emph{Proc. Symp. Pure Math.}
  {\bfseries 18} (2018) 181}.

\bibitem{Grassi2014}
A.~Grassi, Y.~Hatsuda and M.~Marino, \emph{Topological strings from quantum
  mechanics}, \href{https://doi.org/10.1007/s00023-016-0479-4}{\emph{Ann. Henri
  Poincar{\'e}} {\bfseries 17} (2016) 3177}.

\bibitem{H}
F.~Haiden, \emph{3-{D} {C}alabi-{Y}au categories for {T}eichm\"uller theory},
  \href{https://doi.org/10.1215/00127094-2023-0016}{\emph{Duke Math. J.}
  {\bfseries 173} (2024) 277}.

\bibitem{Huang2011}
M.-X. Huang, A.-K. Kashani-Poor and A.~Klemm, \emph{{The $\Omega$ deformed
  B-model for rigid $\mathcal{N}=2$ theories}},
  \href{https://doi.org/10.1007/s00023-012-0192-x}{\emph{Annales Henri
  Poincare} {\bfseries 14} (2013) 425}.

\bibitem{Its2016}
A.~Its, O.~Lisovyy and A.~Prokhorov, \emph{{Monodromy dependence and connection
  formulae for isomonodromic tau functions}},
  \href{https://doi.org/10.1215/00127094-2017-0055}{\emph{Duke Math. J.}
  {\bfseries 167} (2018) 1347}.

\bibitem{Its2018}
A.~R. Its and A.~Prokhorov, \emph{On some {H}amiltonian properties of the
  isomonodromic tau functions},
  \href{https://doi.org/10.1142/S0129055X18400081}{\emph{Rev. Math. Phys.}
  {\bfseries 30} (2018) 1840008}.

\bibitem{JMU1981II}
M.~Jimbo and T.~Miwa, \emph{Monodromy perserving deformation of linear ordinary
  differential equations with rational coefficients. ii},
  \href{https://doi.org/https://doi.org/10.1016/0167-2789(81)90021-X}{\emph{Physica
  D: Nonlinear Phenomena} {\bfseries 2} (1981) 407}.

\bibitem{holgen}
D.~Joyce, \emph{Holomorphic generating functions for invariants counting
  coherent sheaves on {C}alabi-{Y}au 3-folds},
  \href{https://doi.org/10.2140/gt.2007.11.667}{\emph{Geom. Topol.} {\bfseries
  11} (2007) 667}.

\bibitem{JS}
D.~Joyce and Y.~Song, \emph{A theory of generalized {D}onaldson-{T}homas
  invariants},
  \href{https://doi.org/10.1090/S0065-9266-2011-00630-1}{\emph{Mem. Amer. Math.
  Soc.} {\bfseries 217} (2012) iv+199}.

\bibitem{KS1}
M.~Kontsevich and Y.~Soibelman, \emph{{Stability structures, motivic
  Donaldson-Thomas invariants and cluster transformations}},
  \href{https://arxiv.org/abs/0811.2435}{{\ttfamily 0811.2435}}.

\bibitem{Korotkin2020}
D.~Korotkin, \emph{Bergman tau-function: from Einstein equations and
  Dubrovin-Frobenius manifolds to geometry of moduli spaces}, p.~215–287.
\newblock London Mathematical Society Lecture Note Series.
\newblock Cambridge University Press, 2020.

\bibitem{Krichever2001}
I.~Krichever, \emph{Isomonodromy equations on algebraic curves, canonical
  transformations and {W}hitham equations},
  \href{https://doi.org/10.17323/1609-4514-2002-2-4-717-752}{\emph{Mosc. Math.
  J.} {\bfseries 2} (2002) 717}.

\bibitem{Levin1999}
A.~M. Levin and M.~A. Olshanetsky, \emph{Hierarchies of isomonodromic
  deformations and {H}itchin systems},  in \emph{Moscow {S}eminar in
  {M}athematical {P}hysics}, vol.~191 of \emph{Amer. Math. Soc. Transl. Ser.
  2}, pp.~223--262.
\newblock Amer. Math. Soc., Providence, RI, 1999.
\newblock \href{https://doi.org/10.1090/trans2/191/08}{DOI}.

\bibitem{Nekrasov2003}
N.~Nekrasov and A.~Okounkov, \emph{{Seiberg-Witten theory and random
  partitions}}, \href{https://doi.org/10.1007/0-8176-4467-9_15}{\emph{Prog.
  Math.} {\bfseries 244} (2006) 525}.

\bibitem{P}
J.~F. Pleba\'nski, \emph{Some solutions of complex {E}instein equations},
  \href{https://doi.org/10.1063/1.522505}{\emph{J. Mathematical Phys.}
  {\bfseries 16} (1975) 2395}.

\bibitem{Reyman1979}
A.~Reyman and M.~Semenov-Tian-Shansky, \emph{{Reduction of Hamiltonian systems,
  affine Lie algebras and Lax equations}}, {\emph{Inventiones mathematicae}
  {\bfseries 54} (1979) 81}.

\bibitem{Reyman1988}
A.~Reyman and M.~Semenov-Tian-Shansky, \emph{Compatible poisson structures for
  lax equations: an r-matrix approach}, {\emph{Physics Letters A} {\bfseries
  130} (1988) 456}.

\bibitem{Seiberg1994}
N.~Seiberg and E.~Witten, \emph{Electric-magnetic duality, monopole
  condensation, and confinement in {$N=2$} supersymmetric {Y}ang-{M}ills
  theory}, \href{https://doi.org/10.1016/0550-3213(94)90124-4}{\emph{Nuclear
  Phys. B} {\bfseries 426} (1994) 19}.

\bibitem{Taki2007}
M.~Taki, \emph{{Refined Topological Vertex and Instanton Counting}},
  \href{https://doi.org/10.1088/1126-6708/2008/03/048}{\emph{JHEP} {\bfseries
  03} (2008) 048}.

\bibitem{Saito2009}
M.~van~der Put and M.-H. Saito, \emph{Moduli spaces for linear differential
  equations and the {P}ainlev\'e{} equations},
  \href{https://doi.org/10.5802/aif.2502}{\emph{Ann. Inst. Fourier (Grenoble)}
  {\bfseries 59} (2009) 2611}.

\bibitem{Z}
M.~Zikidis, \emph{Joyce structures from meromorphic quadratic differentials},
  {\emph{To appear} (2025) }.

\end{thebibliography}\endgroup

\end{document}